\definecolor{darkblue}{rgb}{0,0,.4}
\theoremstyle{plain}
\newtheorem{theorem}{Theorem}
\newtheorem{corollary}{Corollary}
\newtheorem{lemma}{Lemma}
\newtheorem{prop}{Proposition}
\theoremstyle{remark}
\newtheorem{remark}{Remark}
\newtheorem{definition}{Definition}
\newtheorem{example}{Example}
\newcommand{\RN}[1]{
\textup{\uppercase\expandafter{\romannumeral#1}}
}
\DeclareMathOperator*{\argmin}{arg\,min}
\DeclareMathOperator*{\argmax}{arg\,max}
\newcommand{\Var}{\mathrm{Var}}
\begin{document}

%\title{Policy Learning under Endogeneity Using Instrumental Variables\thanks{I am indebted to Hiroaki Kaido for his guidance and encouragement throughout this project. I am grateful to Iván Fernández-Val and Jean-Jacques Forneron for their suggestions and support. For helpful comments and discussions, I thank Shuowen Chen, Bryan Graham, Eric Hardy, Sukjin Han, Toru Kitagawa, Ming Li, Ismael Mourifi\'e, Xiaoxia Shi, and participants at the BU Econometric Seminar, NASMES 2022, IAAE 2022 Annual Conference, ESAM2022, SETA2022, YES 2022, and MEG 2022. The financial support of the IAAE Student Travel Grant is gratefully acknowledged. All errors are my own.}}
\title{Policy Learning under Endogeneity\\Using Instrumental Variables\thanks{I am indebted to Hiroaki Kaido for his guidance and encouragement throughout this project. I am grateful to Iván Fernández-Val and Jean-Jacques Forneron for their suggestions and support. For helpful comments and discussions, I thank Shuowen Chen, Bryan Graham, Sukjin Han, Eric Hardy, Toru Kitagawa, Ming Li, Ismael Mourifi\'e, Yuya Sasaki, Xiaoxia Shi, Kohei Yata, 
and participants at the BU Econometrics Seminar, NASMES 2022, IAAE 2022 Annual Conference, ESAM2022, SETA2022, YES 2022, and MEG 2022. The financial support of the IAAE Student Travel Grant is gratefully acknowledged. All errors are my own.}}
\author{Yan Liu\footnote{Institute of Economic Research, Kyoto University. Email: liuyan@kier.kyoto-u.ac.jp.}}
%\date{March 1, 2024}
\date{January 14, 2026}
%\date{\today}
\maketitle

\begin{abstract}
%This paper studies the identification and estimation of individualized intervention policies in observational data settings characterized by endogenous treatment selection and the availability of an instrumental variable. I introduce encouragement rules that manipulate the instrument. By incorporating the marginal treatment effect (MTE) as a policy invariant parameter, I establish the identification of the social welfare criterion for the optimal encouragement rule. Focusing on binary encouragement rules, I propose to estimate the optimal encouragement rule via the Empirical Welfare Maximization (EWM) method and derive the welfare loss convergence rate. I consider extensions to accommodate multiple instruments and budget constraints. Using data from the Indonesian Family Life Survey, I apply the EWM encouragement rule to advise on the optimal tuition subsidy assignment. My framework offers interpretability regarding why a certain subpopulation is targeted.
I propose a framework for learning individualized policy rules in observational data settings characterized by endogenous treatment selection and the availability of an instrumental variable. I introduce encouragement rules that manipulate the instrument. By incorporating the marginal treatment effect (MTE) as a policy invariant parameter, I establish the identification of the social welfare criterion for the optimal encouragement rule. Focusing on binary encouragement rules, I propose to estimate the optimal encouragement rule via the Empirical Welfare Maximization (EWM) method and derive the welfare loss convergence rate. I apply my method to advise on the optimal tuition subsidy assignment in Indonesia.
\end{abstract}
%{\em JEL Classification: }\\
{\em Keywords:} %Endogeneity, instrumental variables, 
Encouragement rules, selection, marginal treatment effects, empirical welfare maximization, statistical decision rules
%\newpage

\section{Introduction}
\label{intro}

Policy effects can be heterogeneous, so an important goal for policymakers is to individualize policy interventions to improve social welfare, that is, to assign individuals to different policies based on their observable characteristics. In many cases, policy decisions need to be informed by observational studies or randomized experiments with imperfect compliance, where people endogenously select into treatment. This paper proposes a framework to learn individualized policy interventions in such settings when an instrumental variable (IV) for the treatment is available. Instead of infeasible mandatory treatment rules, I consider a different class of policies that manipulate the instrument, which I refer to as \emph{encouragement rules}. For example, it is highly costly or impossible to force people to (or not to) attend school. A more realistic scenario is to provide a scholarship or a tuition subsidy, whereas the tuition fee is commonly used as an instrument for school attendance. This paper then studies the identification and estimation of optimal encouragement rules.

To identify optimal encouragement rules, I incorporate the \emph{marginal treatment effect} (MTE) framework of \citet{heckman2005structural} to explicitly model the selection into treatment. In this framework, the MTE plays the role of a policy invariant parameter that aids in producing policy counterfactuals. The first main result of this paper is to establish the identification of the social welfare criterion of encouragement rules, which is defined as the average counterfactual outcome, via the identification of the MTE. In this sense, I bridge the literatures on the MTE and on statistical decision rules of policy interventions. More specifically, the social welfare criterion can be represented as a function of the MTE and the propensity score. This social welfare representation has two benefits. First, it helps the policymaker understand how the optimal encouragement rule is driven by heterogeneity in treatment take-up and treatment effects. Second, it suggests a natural way of performing extrapolation. An encouragement rule can induce variation in the propensity score beyond the observed support. This variation represents individuals whose treatment choice is affected not by the observed instrument but by the encouragement rule. Learning about their average outcome requires extrapolation, and point identification can be restored by assuming semiparametric or parametric models for the MTE and the propensity score, depending on the extent of extrapolation required. %More importantly, when evaluating encouragement rules that require extrapolation away from the instrument variation observed in the data, this representation makes transparent the trade-off between the strength of assumptions and the extent of extrapolation. 

To estimate optimal encouragement rules, I apply the social welfare criterion of encouragement rules, identified via the MTE function, to one popular class of statistical decision rules: \emph{Empirical Welfare Maximization} (EWM) rules \citep[see][Section 2.3]{hirano2020asymptotic}. The EWM approach directly chooses an optimal policy from a constrained class of feasible policies based on sample data. Constraints naturally arise in realistic settings when the policymaker wants to avoid complicated rules or satisfy legal, ethical, or political considerations. The EWM approach has proven to be practically implementable.\footnote{Implementable algorithms include mixed integer linear programming \citep{kitagawa2018should} and policy trees \citep{athey2021policy}. EWM rules with mandatory treatment assignment have been implemented in empirical applications using data from the National Job Training Partnership Act (JTPA) Study \citep{kitagawa2018should,kitagawa2021equality,mbakop2021model,sasaki2020welfare}, the Oregon Health Insurance Experiment (OHIE) \citep{sun2021empirical}, and the California Greater Avenues for Independence (GAIN) Program \citep{athey2021policy}, among others.} %I analyze the statistical properties of the EWM encouragement rule in terms of the convergence rate of its welfare loss (regret) relative to an oracle optimal rule. 
The second main result of this paper is to establish the convergence rate of the average welfare loss (regret) of the EWM encouragement rule relative to an oracle optimal rule, while allowing for a wide class of nonparametric and parametric estimators for the MTE and the propensity score. To keep the analysis tractable, I focus on settings in which the policymaker allocates individuals to two \emph{a priori} chosen manipulations of the instrument and constrains the class of feasible allocations.\footnote{In principle, it is possible to generalize the regret analysis to multi-action settings by incorporating different complexity measures of the policy class, such as the entropy integral used in \citet{zhou2023offline}, but a full analysis is beyond the scope of this paper.} %Such settings are also of practical interest when the policymaker wants to avoid complicated rules or satisfy legal, ethical, or political considerations.

I further consider two practically relevant extensions. First, when there are multiple instruments, I propose to work with a treatment selection model that allows for unobserved heterogeneity in the marginal rate of substitution across instruments. Second, when there is a budget constraint, I introduce the budget-constrained EWM encouragement rule and analyze its properties in terms of asymptotic optimality and asymptotic feasibility.

I apply the EWM encouragement rule to an empirical dataset from the third wave of the Indonesian Family Life Survey (IFLS). The goal is to provide advice on how upper secondary schooling can be encouraged to maximize average adult wages by manipulating the tuition fee. I find that the optimal policy without budget constraints provides tuition subsidy eligibility to individuals who face relatively high tuition fees and live relatively close to the nearest secondary school. I also provide a partial explanation of why this subpopulation is targeted.

\textbf{Related Literature}: This paper is related to four strands of literature, of which I provide a non-exhaustive overview below.

First, the research question is closely related to the literature on statistical treatment rules in econometrics following the seminal work of \citet{manski2004statistical}. %including \citet{manski2004statistical}, \citet{hirano2009asymptotics}, \citet{bhattacharya2012inferring}, \citet{kitagawa2018should}, \citet{athey2021policy}, \citet{mbakop2021model}, \citet{sun2021empirical}. 
See \citet{hirano2020asymptotic} for a recent review. Despite the breadth of the literature, only a few works look into observational data settings when the unconfoundedness assumption does not hold. \citet{kasy2016partial} and \citet{byambadalai2022} focus on cases of partial identification and welfare ranking of policies rather than optimal policy choices. \citet{athey2021policy} assume homogeneous treatment effects so that the \emph{conditional average treatment effect} (CATE) on compliers can be extrapolated to those on the entire population. \citet{sasaki2020welfare} identify the social welfare criterion via the MTE and demonstrate an application to the EWM framework. Nonetheless, these works implicitly assume complete enforcement of treatment rules, whereas I consider more realistic policy tools. In particular, the social welfare representation for treatment rules in \citet{sasaki2020welfare} can be viewed as a special case of that for binary encouragement rules in this paper when manipulations of the instrument are extremely strong. This echoes the discussion in their Section 3 of the presumption of full compliance under the treatment assignment, which they recognize will be rationalized in extreme circumstances.

The only exception I am aware of is \citet{chen2022personalized}, who use the MTE framework to study the personalized subsidy rule. Their work is complementary to mine in emphasizing different aspects of policy learning. They focus on the oracle optimal policy without restricting the policy class, whereas I analyze the estimated optimal policy within a restricted policy class. Notably, their closed-form characterization of optimal subsidy rules requires monotonicity of the MTE function with respect to the selection unobservable, which can be restrictive in practice. For example, in the context of the effects of family size on child outcomes, the quantity-quality model of fertility by \citet{becker1973interaction} is consistent with both positive and negative effects of family size depending on the level of complementarity in parental preferences between quantity and quality of children. Consequently, a monotone MTE function can mask important heterogeneity. Indeed, the MTE estimates in \citet{brinch2017beyond} show a U shape. In contrast, my framework allows for a flexible form of the MTE function. %Instead, I restrict the complexity of the policy class and analyze the convergence rate of the regret bound of the estimated policy.

Second, in epidemiology and biostatistics, there has been increasing interest in \emph{individualized treatment rules}. \citet{cui2020semiparametric} and \citet{qiu2020optimal} allow for treatment endogeneity. They achieve point identification by leveraging the ``no common effect modifier'' assumption outlined in \citet{wang2018bounded}, which largely restricts the heterogeneity of compliance behavior. \citet{pu2021estimating} introduce the notion of ``IV-optimality'' to estimate the optimal treatment regime based on partial identification of the CATE. Unlike my framework, these works do not account for imperfect enforcement as a consequence of treatment endogeneity. One exception is \citet{qiu2020optimal}, who consider \emph{individualized encouragement rules} that manipulate a binary instrument. My framework nests theirs by allowing the instrument to have richer support.

Third, if one entirely discards information about the treatment and focuses on the relationship between the instrument and the outcome, as in the \emph{intention-to-treat} analysis, then the optimal manipulation of the instrument can be studied within a policy learning framework with general action spaces. Existing works have covered settings with multivalued actions \citep{zhou2023offline,fang2025model} and continuous actions \citep{kallus2018policy,ai2024data}. These works assume unconfoundedness and strong overlap, so that identification of the average counterfactual outcome is not an issue. In contrast, my approach explicitly incorporates information about the treatment and imposes additional structure, namely that the instrument affects the outcome only through the treatment. Under this structure, the average counterfactual outcome depends on the MTE and the propensity score in an interpretable way, which also enables rigorous extrapolation away from the instrument variation observed in the data.

Lastly, the representation of the social welfare criterion in this paper can be viewed as a variation of \emph{policy relevant treatment effects} (PRTE), adding to the class of policy parameters that can be written as weighted averages of the MTE. Hence, this paper complements the literature on PRTE, including \citet{carneiro2010evaluating,carneiro2011estimating}, \citet{mogstad2018using}, and \citet{sasaki2021estimation}, among many others.

\textbf{Organization}: The rest of the paper is organized as follows. Section \ref{encouragementcontinuous} sets up the model, introduces the encouragement rule, and derives a representation of the social welfare criterion via the MTE. It also discusses the identification of the social welfare criterion based on this representation and elaborates on binary encouragement rules.
Section \ref{applytoEWM} applies the social welfare representation to the EWM method and analyzes the regret properties. Section \ref{extensions} discusses extensions incorporating multiple instruments and budget constraints. Section \ref{empirical} presents an empirical application. Section \ref{conclusion} concludes. Proofs and additional results are collected in the appendix.%\ref{proofs} contains proofs of the main results in the paper. \ref{verify} verifies the assumption of point identification of the social welfare criterion under various specifications. Additional discussions and results are collected in the remaining sections of the Online Appendix.

\section{Encouragement Rules with An Instrument}
\label{encouragementcontinuous}
\subsection{Setup}
\label{setup}

I consider the canonical program evaluation problem with a binary treatment $D\in\{0,1\}$ and a scalar, real-valued outcome $Y\in\mathcal{Y}\subset\mathbb{R}$. Outcome production is modeled through the potential outcomes framework \citep{rubin1974estimating}:
\begin{equation*}
    Y=Y(1)D+Y(0)(1-D),
\end{equation*}
where $(Y(0),Y(1))$ are the potential outcomes under no treatment and under treatment.\footnote{By adopting the potential outcomes model, I implicitly follow the conventional practice of imposing the \emph{Stable Unit Treatment Value Assumption} (SUTVA), namely that there are no spillover or general equilibrium effects.} Let $X\in\mathcal{X}\subset \mathbb{R}^{d_x}$ denote a vector of pretreatment covariates. For instance, in the analysis of returns to schooling, $D$ is an indicator for school enrollment, $Y$ is the log wage, and $X$ includes observable characteristics that affect wages (e.g., parental education, rural/urban residence). 

A (non-randomized) \textit{treatment rule} is defined as a mapping $\pi:\mathcal{X}\to\{0,1\}$. The policymaker’s objective function is the utilitarian (additive) welfare criterion defined by the average counterfactual outcome:
\begin{equation*}
    W(\pi)=E[Y(\pi(X))]=E[Y(1)\cdot\pi(X)+Y(0)\cdot(1-\pi(X))].
\end{equation*}
Define the \textit{conditional average treatment response} functions as $\mu_d(x)=E[Y(d)|X=x],d=0,1$. By the law of iterated expectations,
\begin{equation*}
    W(\pi)=E[\mu_1(X)\cdot\pi(X)+\mu_0(X)\cdot(1-\pi(X))].
\end{equation*}
Under the unconfoundedness assumption that $D$ is independent of $(Y(0),Y(1))$ conditional on $X$, $\mu_d(x)$ is identified by $E[Y|D=d,X=x]$ for $d=0,1$. However, the unconfoundedness assumption is violated if, for example, people self-select into schooling based on unmeasured benefits and costs driven by ability and motivation, both of which also affect wages. As a result, $\mu_d(x)\neq E[Y|D=d,X=x]$ for $d=0,1$ in general, and thus the social welfare criterion is not identified by the moments of observables. In this case, it is helpful to assume that there exists an instrument (i.e., an excluded variable) $Z\in\mathcal{Z}\subset \mathbb{R}$ that affects the treatment but not the outcome, e.g., the tuition fee. For each $z\in\mathcal{Z}$, denote the potential treatment status if the instrument were set to $z$ by $D(z)$. The observed treatment is given by $D=D(Z)$. I explicitly model the selection into treatment via an additively separable latent index model:
\begin{equation}
   D(z)=1\{\tilde{\nu}(X,z)-\tilde{U}\geq0\},\label{treatment}
\end{equation}
where $\tilde{\nu}$ is an unknown function, and $\tilde{U}$ represents unobservable factors that affect treatment choice. Let ``$\perp$'' denote (conditional) statistical independence. I adopt the following assumptions from the MTE literature \citep{heckman2005structural,mogstad2018using}:

\begin{enumerate}[label=\textbf{Assumption \arabic*},ref=\arabic*,itemindent=5\parindent,leftmargin=0pt]
\item \label{MTErestrictions}
(IV Restrictions and Continuous Distribution) 
\begin{enumerate}[label=(\roman*)]
    \item \label{MTEexogeneity} 
    $\tilde{U}\perp Z|X$.
    \item \label{MTEexclusion} 
    $E[Y(d)|X,Z,\tilde{U}]=E[Y(d)|X,\tilde{U}]$ and $E[|Y(d)|]<\infty$ for $d\in\{0,1\}$.
    \item \label{continuousU} 
    $\tilde{U}$ is continuously distributed conditional on $X$.
\end{enumerate}
\end{enumerate}
Assumptions \ref{MTErestrictions}(i) and (ii) impose exogeneity and an exclusion restriction on $Z$ but allow for arbitrary dependence between $(Y(0),Y(1))$ and $\tilde{U}$, even conditional on $X$. \citet{vytlacil2002independence} shows that, under Assumption \ref{MTErestrictions}(i), the existence of an additively separable selection equation as in (\ref{treatment}) is equivalent to the monotonicity assumption used for the \emph{local average treatment effects} (LATE) model of \citet{imbens1994identification}. The LATE monotonicity assumption restricts choice behavior in the sense that, conditional on $X$, an exogenous shift in $Z$ either weakly encourages or discourages every individual to choose $D=1$. Nonetheless, I maintain the selection equation (\ref{treatment}) because it allows me to express the average counterfactual outcome as a function of identifiable and interpretable objects. Under Assumptions \ref{MTErestrictions}(i) and (iii), one can reparameterize the model as
\begin{equation}
    D(z)=1\{\nu(X,z)-U\geq0\}\quad\text{with}\quad U|X,Z\sim \operatorname{Unif}[0,1],
    \label{selection}
\end{equation}
where $U\equiv F_{\tilde{U}|X}(\tilde{U}|X)$ and $\nu(x,z)\equiv F_{\tilde{U}|X}(\tilde{\nu}(x,z)|x)$. As a consequence,
\begin{equation*}
    p(x,z)\equiv\Pr(D=1|X=x,Z=z)=F_{U|X,Z}(\nu(x,z))=\nu(x,z),
\end{equation*}
where $p(x,z)$ is the propensity score.

Endogenous treatment selection also challenges the plausibility of fully mandating treatment assignment. I instead consider a different class of policies that manipulate the instrument, which I refer to as \emph{encouragement rules}. Formally, an encouragement rule is a mapping $\boldsymbol{\alpha}:\mathcal{X}\times\mathcal{Z}\to\mathbb{R}$ that manipulates the instrument for an individual with $(X,Z)=(x,z)$ from the initial value $z$ to a new level $\boldsymbol{\alpha}(x,z)$. For example, when $Z$ is the tuition fee, $\boldsymbol{\alpha}(x,z)=(z-\alpha(x))\cdot1\{z\geq\alpha(x)\}$ with $\alpha:\mathcal{X}\to\mathbb{R}_+$ describes a tuition subsidy rule that subsidizes an individual with $X=x$ up to $\alpha(x)$. The representation of the social welfare criterion in Section \ref{secrepresentation} covers the most general setting without restrictions on $\boldsymbol{\alpha}$. When I study the regret bounds for statistical decision rules in Section \ref{applytoEWM}, I take a stand on the complexity of feasible encouragement rules. In particular, I focus on a case in which the policymaker allocates individuals to two \emph{a priori} chosen manipulations of the instrument. The binary formulation of encouragement rules nests treatment rules \citep{kitagawa2018should,sasaki2020welfare} as a special case.

\subsection{Representation of the Social Welfare Criterion via the MTE}
\label{secrepresentation}

The outcome that would be observed under encouragement rule $\boldsymbol{\alpha}$ is
\begin{equation*}
    Y(D(\boldsymbol{\alpha}(X,Z)))=Y(1)\cdot D(\boldsymbol{\alpha}(X,Z))+Y(0)\cdot (1-D(\boldsymbol{\alpha}(X,Z))).
\end{equation*}
I define the social welfare criterion as the average counterfactual outcome: 
\begin{equation*}
    W(\boldsymbol{\alpha})\equiv E[Y(D(\boldsymbol{\alpha}(X,Z)))].
\end{equation*}
Theorem \ref{representation} shows that $W(\boldsymbol{\alpha})$ can be expressed as a function of the average treatment effect conditional on observable characteristics $X$ and the selection unobservable $U$, which is the definition of the MTE:
\begin{equation*}
    \operatorname{MTE}(x,u)\equiv E[Y(1)-Y(0)|X=x,U=u].
\end{equation*}
A proof is provided in \ref{proofs}. The concept of MTE was introduced by \citet{bjorklund1987estimation} and extended by \citet{heckman2005structural,heckman2007econometric}. In contrast to the intention-to-treat approach, the representation in Theorem \ref{representation} has a straightforward interpretation: among the individuals with $(X,Z)=(x,z)$, those for whom the value $u$ of $U$ is between $p(x,\boldsymbol{\alpha}(x,z))$ and $p(x,z)$ get either encouraged or discouraged to take up the treatment, and their contribution to the welfare contrast $W(\boldsymbol{\alpha})-E[Y]$ is $\operatorname{MTE}(x,u)$ if encouraged and $-\operatorname{MTE}(x,u)$ if discouraged.
    
\begin{theorem}\label{representation}
Under Assumption \ref{MTErestrictions}, the social welfare criterion for a given encouragement rule $\boldsymbol{\alpha}$ is given by
\begin{equation*}
    W(\boldsymbol{\alpha})=E[Y]+E\Big[\int_0^1 \operatorname{MTE}(X,u)\cdot(1\{p(X,\boldsymbol{\alpha}(X,Z))\geq u\}-1\{p(X,Z)\geq u\})\,du\Big].
\end{equation*}
\end{theorem}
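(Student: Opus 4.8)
The plan is to reduce the welfare contrast $W(\boldsymbol{\alpha}) - E[Y]$ to a computation conditional on $(X,Z)=(x,z)$ and then integrate out. First I would linearize the counterfactual outcome: writing $Y(d) = Y(0) + (Y(1) - Y(0))\cdot d$ for $d\in\{0,1\}$ and applying this to both $W(\boldsymbol{\alpha}) = E[Y(D(\boldsymbol{\alpha}(X,Z)))]$ and $E[Y] = E[Y(D(Z))]$, the common term $E[Y(0)]$ cancels, giving
\[
W(\boldsymbol{\alpha}) - E[Y] = E\big[(Y(1) - Y(0))\cdot\big(D(\boldsymbol{\alpha}(X,Z)) - D(Z)\big)\big].
\]
The finiteness condition $E[|Y(d)|]<\infty$ in Assumption \ref{MTErestrictions}(ii) ensures this expectation is well defined and legitimizes the manipulations that follow.

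Next I would invoke the selection equation. By the reparameterization in \eqref{selection}, for any value $z'$ of the instrument $D(z') = 1\{\nu(X,z')\geq U\} = 1\{p(X,z')\geq U\}$ (using $p=\nu$); evaluating the random map $z'\mapsto D(z')$ at the arguments $z' = \boldsymbol{\alpha}(X,Z)$ and $z' = Z$, both of which are $\sigma(X,Z)$-measurable, yields $D(\boldsymbol{\alpha}(X,Z)) - D(Z) = 1\{p(X,\boldsymbol{\alpha}(X,Z))\geq U\} - 1\{p(X,Z)\geq U\}$. Conditioning on $(X,Z)=(x,z)$ and then, by the law of iterated expectations, on $U=u$ as well, the (now deterministic) indicator difference factors out of the inner expectation, leaving the integral over the conditional law of $U$ of $E[Y(1)-Y(0)\mid X=x,Z=z,U=u]$ against $1\{p(x,\boldsymbol{\alpha}(x,z))\geq u\} - 1\{p(x,z)\geq u\}$. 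Assumption \ref{MTErestrictions}(i) makes $U\mid X,Z$ uniform on $[0,1]$, so its conditional density is identically one and this becomes literally $\int_0^1(\cdots)\,du$.

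It then remains to identify $E[Y(1)-Y(0)\mid X=x,Z=z,U=u]$ with $\operatorname{MTE}(u,x)$. Here I would note that, conditional on $X$, $U = F_{\tilde{U}\mid X}(\tilde{U}\mid X)$ is, by the continuity in Assumption \ref{MTErestrictions}(iii), a strictly increasing transformation of $\tilde{U}$ on its conditional support, so that conditioning on $(X,U)$ is equivalent to conditioning on $(X,\tilde{U})$ up to null sets; the exclusion restriction in Assumption \ref{MTErestrictions}(ii) then gives $E[Y(d)\mid X,Z,U] = E[Y(d)\mid X,Z,\tilde{U}] = E[Y(d)\mid X,\tilde{U}] = E[Y(d)\mid X,U]$ for $d=0,1$, whence $E[Y(1)-Y(0)\mid X=x,Z=z,U=u] = \operatorname{MTE}(u,x)$. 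Substituting and taking the outer expectation over $(X,Z)$ delivers the stated formula. I expect the only delicate point to be the measure-theoretic bookkeeping — making precise that $\sigma(X,U)$ and $\sigma(X,\tilde{U})$ agree up to null sets, and that the Fubini-type interchange of $E[\cdot\mid X,Z]$ with $\int_0^1 du$ is valid — rather than anything substantive; the integrability hypothesis in Assumption \ref{MTErestrictions}(ii) is exactly what powers that interchange.
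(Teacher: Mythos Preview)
Your proposal is correct and follows essentially the same route as the paper: express the welfare contrast as $E[(Y(1)-Y(0))\cdot(1\{p(X,\boldsymbol{\alpha}(X,Z))\geq U\}-1\{p(X,Z)\geq U\})]$, condition on $(X,Z,U)$ to pull out the MTE via the exclusion restriction, and then use $U\mid X,Z\sim\operatorname{Unif}[0,1]$ to obtain the $\int_0^1\,du$. The paper reaches the starting expression by directly rewriting $Y(D(\boldsymbol{\alpha}(X,Z)))=Y+(Y(1)-Y(0))\cdot(\text{indicator difference})$ rather than subtracting $E[Y(0)]$ from both sides, and it invokes Assumption~\ref{MTErestrictions}(ii) for the MTE identification without spelling out the $\sigma(X,\tilde{U})=\sigma(X,U)$ argument you give; otherwise the proofs coincide.
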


A similar representation result appears in \citet[Proposition 1]{chen2022personalized}, where they exclude $Z$ from the targeting variables. This exclusion narrows their focus to policies that induce a degenerate distribution of $Z$ conditional on $X=x$. In contrast, my framework also accommodates policies that shift the conditional distribution of $Z$ through deterministic transformations, e.g., $\boldsymbol{\alpha}(x,z)=z+\alpha(x)$ with $\alpha:\mathcal{X}\to\mathbb{R}$.

It is worth noting that $W(\boldsymbol{\alpha})$ has a natural connection to the concept of policy relevant treatment effects (PRTE) \citep{heckman2005structural}. For a general class of policies that affect the propensity score, the PRTE is defined as the mean effect of going from a baseline policy to an alternative policy per net person shifted (assuming that $E[D|\text{alternative policy}]-E[D|\text{baseline policy}]\neq0$):
\begin{equation*}
    \frac{E[Y|\text{alternative policy}]-E[Y|\text{baseline policy}]}{E[D|\text{alternative policy}]-E[D|\text{baseline policy}]}.
\end{equation*}
Corollary \ref{PRTE} gives an alternative representation of $W(\boldsymbol{\alpha})$ in terms of a suitably defined PRTE.

\begin{corollary}\label{PRTE}
Under Assumption \ref{MTErestrictions}, when $E[p(X,\boldsymbol{\alpha}(X,Z))]-E[p(X,Z)]\neq0$, the social welfare criterion for a given encouragement rule $\boldsymbol{\alpha}$ is given by
\begin{equation*}
W(\boldsymbol{\alpha})=E[Y]+(E[p(X,\boldsymbol{\alpha}(X,Z))]-E[p(X,Z)])\cdot\operatorname{PRTE}(\boldsymbol{\alpha}),
\end{equation*}
where
\begin{equation*}
    \operatorname{PRTE}(\boldsymbol{\alpha})=E\Big[\int_0^1 \operatorname{MTE}(X,u)\cdot \omega(X,Z,u;\boldsymbol{\alpha})\,du\Big]
\end{equation*}
with the weight defined as
\begin{equation*}
    \omega(x,z,u;\boldsymbol{\alpha})\equiv\frac{1\{p(x,\boldsymbol{\alpha}(x,z))\geq u\}-1\{p(x,z)\geq u\}}{E[p(X,\boldsymbol{\alpha}(X,Z))]-E[p(X,Z)]}.
\end{equation*}
\end{corollary}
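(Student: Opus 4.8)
The plan is to obtain the claim directly from Theorem~\ref{representation} by factoring out the scalar $c\equiv E[p(X,\boldsymbol{\alpha}(X,Z))]-E[p(X,Z)]$, which is nonzero by hypothesis. Writing $\Delta(u,X,Z)\equiv 1\{p(X,\boldsymbol{\alpha}(X,Z))\geq u\}-1\{p(X,Z)\geq u\}$, Theorem~\ref{representation} states that $W(\boldsymbol{\alpha})-E[Y]=E\big[\int_0^1\operatorname{MTE}(u,X)\,\Delta(u,X,Z)\,du\big]$. Since $c$ is a deterministic constant, multiplying and dividing the integrand by $c$ and using linearity of $E[\cdot]$ and of $\int_0^1\cdot\,du$ yields $W(\boldsymbol{\alpha})-E[Y]=c\cdot E\big[\int_0^1\operatorname{MTE}(u,X)\,\omega(u,X,Z;\boldsymbol{\alpha})\,du\big]=c\cdot\operatorname{PRTE}(\boldsymbol{\alpha})$, which is the asserted identity. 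The integrability needed to manipulate these expressions (and to move the constant across the integral and the expectation) is supplied by Assumption~\ref{MTErestrictions}(ii), exactly as in the proof of Theorem~\ref{representation}.

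To confirm that $\operatorname{PRTE}(\boldsymbol{\alpha})$ deserves its name, I would additionally note two points. First, the weights average to one: since $\int_0^1 1\{q\geq u\}\,du=q$ for $q\in[0,1]$, we have $E\big[\int_0^1\Delta(u,X,Z)\,du\big]=E[p(X,\boldsymbol{\alpha}(X,Z))-p(X,Z)]=c$, so $E\big[\int_0^1\omega(u,X,Z;\boldsymbol{\alpha})\,du\big]=1$ and $\operatorname{PRTE}(\boldsymbol{\alpha})$ is a genuine weighted average of the MTE. Second, the denominator $c$ is the net fraction of individuals shifted into treatment: from the reparameterized selection equation~(\ref{selection}), $\Pr(D(z')=1\mid X=x,Z=z)=\Pr(U\leq\nu(x,z')\mid X=x,Z=z)=p(x,z')$, so iterated expectations give $E[D(\boldsymbol{\alpha}(X,Z))]=E[p(X,\boldsymbol{\alpha}(X,Z))]$ and $E[D]=E[D(Z)]=E[p(X,Z)]$; hence $c=E[D\mid\text{policy }\boldsymbol{\alpha}]-E[D\mid\text{status quo}]$ and $\operatorname{PRTE}(\boldsymbol{\alpha})=(W(\boldsymbol{\alpha})-E[Y])/c$ matches the displayed PRTE definition with the status quo as the baseline.

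There is no substantive obstacle here — the corollary is essentially a rearrangement of Theorem~\ref{representation} — so the only thing to be careful about is bookkeeping: the hypothesis $c\neq0$ is invoked precisely so that division is legitimate, and one must use that $c$ is nonrandom in order to pass it freely through $E[\cdot]$ and $\int_0^1\cdot\,du$. I would present the argument in two or three lines after recalling Theorem~\ref{representation}, then append the two observations above to justify both the weighted-average (PRTE) interpretation and the ``per net person shifted'' language.
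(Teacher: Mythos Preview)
Your proposal is correct and matches the paper's approach: Corollary~\ref{PRTE} is stated without a separate proof in the paper precisely because it is an immediate rearrangement of Theorem~\ref{representation}, obtained by multiplying and dividing by the nonzero scalar $c=E[p(X,\boldsymbol{\alpha}(X,Z))]-E[p(X,Z)]$. Your additional observations verifying that the weights integrate to one and that $c$ equals the net change in treatment take-up are helpful for interpretation but go beyond what the paper itself records.
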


Corollary \ref{PRTE} unfolds the two forces driving the optimal policy based on $W(\boldsymbol{\alpha})$: the average change in treatment take-up, $E[p(X,\boldsymbol{\alpha}(X,Z))]-E[p(X,Z)]$, and the average treatment effect among those induced to switch treatment status, $\operatorname{PRTE}(\boldsymbol{\alpha})$, when going from the status quo to encouragement rule  $\boldsymbol{\alpha}$. Moreover, $\operatorname{PRTE}(\boldsymbol{\alpha})$ can be expressed as a weighted average of the MTE with weights determined by both observed and unobserved heterogeneity in treatment take-up.

\subsection{Identification of the Social Welfare Criterion}
\label{secidentification}

Theorem \ref{representation} implies that the point-identification of $W(\boldsymbol{\alpha})$ is guaranteed by the point-identification of the propensity score and the MTE over necessary domains. I formalize this insight in the following assumption.

\begin{enumerate}[label=\textbf{Assumption \arabic*},ref=\arabic*,itemindent=5\parindent,leftmargin=0pt]
\setcounter{enumi}{1}
\item \label{identifyW} (Point-Identification of $W(\boldsymbol{\alpha})$) 
\begin{enumerate}[label=(\roman*)]
    \item \label{identifyp} 
    $p(x,z)$ is point-identified over $\operatorname{Supp}(X,\boldsymbol{\alpha}(X,Z))$.
    \item \label{identifyMTE} 
    For every $x\in\mathcal{X}$, $\operatorname{MTE}(x,\cdot)$ is point-identified over $[\min\mathcal{P}_{\boldsymbol{\alpha}}(x),\max\mathcal{P}_{\boldsymbol{\alpha}}(x)]$, where $\mathcal{P}_{\boldsymbol{\alpha}}(x)$ denotes the support of $p(X,\boldsymbol{\alpha}(X,Z))$ conditional on $X=x$.
\end{enumerate}
\end{enumerate}

The method of local instrumental variables (LIV) \citep{heckman1999local,heckman2001local} gives
\begin{equation}
    \operatorname{MTE}(x,u)=\frac{\partial}{\partial u}E[Y|X=x,p(X,Z)=u],\label{LIV}
\end{equation}
provided that $u\mapsto E[Y|X=x,p(X,Z)=u]$ is continuously differentiable for almost every $x$. Therefore, Assumption \ref{identifyW}(ii) is satisfied via the point-identification of the derivative of $E[Y|X=x,p(X,Z)=u]$ with respect to $u$. As a result, $W(\boldsymbol{\alpha})$ is point-identified as
\begin{equation*}
    W(\boldsymbol{\alpha})=E[Y]+E[\mu_Y(X,p(X,\boldsymbol{\alpha}(X,Z)))-\mu_Y(X,p(X,Z))],
\end{equation*}
where $\mu_Y(x,u)\equiv E[Y|X=x,p(X,Z)=u]$.

I give three examples of sufficient conditions for Assumption \ref{identifyW}. Verification of Assumption \ref{identifyW} in these examples is relegated to \ref{verify}. Typically, additional structural restrictions are required to compensate for the relaxation of the support condition.

\begin{example}[Nonparametric Identification]\label{examplenonpara}
Assume that (E1-1) $\operatorname{Supp}(X,\boldsymbol{\alpha}(X,Z))\subset\operatorname{Supp}(X,\allowbreak Z)$; (E1-2) the conditional distribution of $p(X,Z)$ given $X$ is absolutely continuous with respect to the Lebesgue measure. Then, Assumption \ref{identifyW} is satisfied.
\end{example}

\begin{example}[Semiparametric Identification]\label{examplesemipara}
Assume that (E2-1) $\operatorname{Supp}(\boldsymbol{\alpha}(X,Z))\subset\mathcal{Z}$; (E2-2) the distribution of $p(X,Z)$ is absolutely continuous with respect to the Lebesgue measure; (E2-3) the propensity score is modeled as $p(x,z)=x^\top\gamma+\theta(z)$, where $\gamma$ is an unknown parameter and $\theta$ is an unknown function; (E2-4) the potential outcomes are modeled as $Y(0)=X^\top\beta_0+V_0$ and $Y(1)=X^\top\beta_1+V_1$, where $\beta_1$ and $\beta_0$ are unknown parameters, and $E[V_d|X,Z,U]=E[V_d|U]$ for $d=0,1$;\footnote{The partially linear form for potential outcomes in (E2-4) are commonly assumed in applied work estimating the MTE; see, e.g., \citet{carneiro2009estimating}, \citet{carneiro2010evaluating,carneiro2011estimating}. These works also invoke full independence $(U,V_0,V_1)\perp(Z,X)$, which is stronger than the conditional mean independence of $(V_0,V_1)$ from $(Z,X)$ in (E2-4).} (E2-5) $E[(X-E[X|Z])(X-E[X|Z])^\top]$ and $E[(X-E[X|p(X,Z)])(X-E[X|p(X,Z)])^\top]$ are positive definite. Then, Assumption \ref{identifyW} is satisfied. When (E2-1) is violated, Assumption \ref{identifyW} can still be satisfied by imposing a parametric model for the propensity score and a semiparametric partially linear model for the MTE, which is what I implement in the empirical application.
\end{example}

\begin{example}[Parametric Identification]\label{examplepara} 
Assume that (E3-1) the propensity score is modeled as $p(x,z)=b(x,z)^\top\gamma$, where $b$ is a known vector function and $\gamma$ is an unknown parameter;\footnote{Alternatively, one may adopt a logit or probit model to respect the $[0,1]$ boundary.} (E3-2) the conditional mean of $Y$ given $(X,p(X,Z))$ is modeled as $E[Y|X=x,p(X,Z)=u]=ux^\top\beta_1+(1-u)x^\top\beta_0+\sum_{j=2}^J \eta_ju^j$, where $\beta_1$, $\beta_0$, and $\eta_2,\dots,\eta_J$ are unknown parameters; (E3-3) there is no multicollinearity in $b(X,Z)$ elements nor in $(p(X,Z)X^\top,\allowbreak(1-p(X,Z))X^\top,p(X,Z)^2,\allowbreak\dots,p(X,Z)^J)$.\footnote{Polynomial MTE models are often used in empirical studies; see, e.g., \citet{brinch2017beyond}, \citet{cornelissen2018benefits}.} Then, Assumption \ref{identifyW} is satisfied.
\end{example}

\subsection{Binary Encouragement Rules}
\label{binaryencouragement}

When studying the regret bounds for statistical decision rules in Section \ref{applytoEWM}, I focus on settings in which the policymaker allocates individuals to two \emph{a priori} chosen manipulations of the instrument. Formally, given two functions $\alpha_0,\alpha_1:\mathcal{X}\times\mathcal{Z}\to\mathbb{R}$, a \textit{binary encouragement rule}, indexed by a mapping $\pi:\mathcal{X}\times\mathcal{Z}\to\{0,1\}$, manipulates the instrument for an individual with $(X,Z)=(x,z)$ to 
\begin{equation*}
\boldsymbol{\alpha}^\pi(x,z)=\pi(x,z)\cdot\alpha_1(x,z)+(1-\pi(x,z))\cdot\alpha_0(x,z).
\end{equation*}
Corollary \ref{representationbinary} specializes Theorem \ref{representation} to the binary setting.

\begin{corollary}\label{representationbinary}
Under Assumption \ref{MTErestrictions}, the social welfare criterion for a given binary encouragement rule $\pi$ is given by
\begin{eqnarray*}
    W(\boldsymbol{\alpha}^\pi)&=&E[Y(D(\alpha_0(X,Z)))]+E\Big[\pi(X,Z)\\
    &&\cdot\int_0^1 \operatorname{MTE}(X,u)\cdot(1\{p(X,\alpha_1(X,Z))\geq u\}-1\{p(X,\alpha_0(X,Z))\geq u\})\,du\Big].
\end{eqnarray*}
\end{corollary}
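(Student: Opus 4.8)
The plan is to apply Theorem \ref{representation} twice---once to the binary encouragement rule $\boldsymbol{\alpha}^\pi$ itself and once to the ``constant-in-$X$'' encouragement rule $(x,z)\mapsto\alpha_0(z)$---and then glue the two expressions together using the fact that $\pi$ is $\{0,1\}$-valued.

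First I would invoke Theorem \ref{representation} with $\boldsymbol{\alpha}^\pi$, which is a legitimate mapping $\mathcal{X}\times\mathcal{Z}\to\mathbb{R}$, to write
\begin{equation*}
W(\boldsymbol{\alpha}^\pi)=E[Y]+E\Big[\int_0^1\operatorname{MTE}(u,X)\big(1\{p(X,\boldsymbol{\alpha}^\pi(X,Z))\geq u\}-1\{p(X,Z)\geq u\}\big)\,du\Big].
\end{equation*}
The key step is the pointwise identity that since $\pi(X,Z)\in\{0,1\}$, on $\{\pi(X,Z)=1\}$ we have $\boldsymbol{\alpha}^\pi(X,Z)=\alpha_1(Z)$ and on $\{\pi(X,Z)=0\}$ we have $\boldsymbol{\alpha}^\pi(X,Z)=\alpha_0(Z)$, so that
\begin{equation*}
1\{p(X,\boldsymbol{\alpha}^\pi(X,Z))\geq u\}=\pi(X,Z)\,1\{p(X,\alpha_1(Z))\geq u\}+(1-\pi(X,Z))\,1\{p(X,\alpha_0(Z))\geq u\}.
\end{equation*}
Substituting this into the integrand and adding and subtracting $1\{p(X,\alpha_0(Z))\geq u\}$ inside the parentheses splits the integrand into a term proportional to $\pi(X,Z)\cdot\big(1\{p(X,\alpha_1(Z))\geq u\}-1\{p(X,\alpha_0(Z))\geq u\}\big)$ plus a baseline term $1\{p(X,\alpha_0(Z))\geq u\}-1\{p(X,Z)\geq u\}$ not involving $\pi$. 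Since $\operatorname{MTE}(\cdot,X)$ is integrable on $[0,1]$ by $E[|Y(d)|]<\infty$ in Assumption \ref{MTErestrictions} and the indicator differences are bounded, linearity of the integral and expectation lets me write $W(\boldsymbol{\alpha}^\pi)$ as the sum of the two resulting expectations.

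Finally I would recognize the baseline piece, $E[Y]+E\big[\int_0^1\operatorname{MTE}(u,X)(1\{p(X,\alpha_0(Z))\geq u\}-1\{p(X,Z)\geq u\})\,du\big]$, as precisely the right-hand side of Theorem \ref{representation} applied to the encouragement rule $(x,z)\mapsto\alpha_0(z)$; hence it equals $E[Y(D(\alpha_0(Z)))]$ by the definition $W(\boldsymbol{\alpha})\equiv E[Y(D(\boldsymbol{\alpha}(X,Z)))]$. Combining this with the remaining $\pi$-dependent expectation gives the claimed formula. I do not expect a real obstacle; the only points needing a line of care are verifying the linear-in-$\pi$ decomposition of the indicator (immediate from $\pi\in\{0,1\}$) and noting that $(x,z)\mapsto\alpha_0(z)$ is within the scope of Theorem \ref{representation}, so the second application is valid.
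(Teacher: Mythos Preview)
Your proposal is correct and is precisely the natural derivation the paper relies on: the paper does not give a separate proof of this corollary, treating it as an immediate consequence of Theorem~\ref{representation}, and your two applications of that theorem together with the $\{0,1\}$-decomposition of the indicator are exactly the right way to fill in the details.
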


Finally, I demonstrate that the binary formulation of encouragement rules nests as a special case treatment rules that directly assign individuals to a certain treatment status. Suppose that $\alpha_0$ and $\alpha_1$ satisfy
\begin{equation*}
    p(X,\alpha_1(X,Z))=1 \text{ and }  p(X,\alpha_0(X,Z))=0 \text{ almost surely }
\end{equation*}
so that $\alpha_1$ (resp. $\alpha_0$) creates perfectly strong incentives (resp. disincentives) to be in the treated state ($D=1$) across heterogeneous covariate values. In this case, encouragement rules are effectively treatment rules: $D(\boldsymbol{\alpha}^\pi(X,Z))=\pi(X,Z)$.\footnote{In this case, $Z$ is redundant as a targeting variable.} Therefore, Corollary \ref{representationbinary} provides a representation of the social welfare criterion for treatment rules via the MTE function:
\begin{equation*}
    W(\boldsymbol{\alpha}^\pi)=E[Y(0)]+E\Big[\pi(X,Z)\int_0^1 \operatorname{MTE}(X,u)\,du\Big].
\end{equation*}
This representation coincides with Theorem 1 of \citet{sasaki2020welfare}. However, such powerful manipulations are hard to justify in practice. For example, consider a selection of the form $D=1\{Z\geq \tilde{U}\}$ with $\tilde{U}$ having full support on $\mathbb{R}$ and a manipulation of the form $\alpha_d(x,z)=z+a_d$ for $d=0,1$. Then, one needs to set $a_1=\infty$ and $a_0=-\infty$ to induce full compliance. Indeed, in Section 3 of \citet{sasaki2020welfare}, they recognize that the presumption of full compliance under the treatment assignment will be rationalized in extreme circumstances such as strong legal power or a large amount of resources held by the policymaker.

\section{Applications to EWM and Regret Properties}
\label{applytoEWM}

In this section, I restrict attention to binary encouragement rules described in Section \ref{binaryencouragement}. I apply the social welfare criterion, identified via the MTE function, to the EWM framework and investigate the theoretical properties of the resulting statistical decision rules.

Some extra notations are needed to facilitate the discussion. Suppose the policymaker observes a random sample $A_i=(Y_i,D_i,X_i,Z_i)$ of size $n$. Let $E_n$ denote the sample average operator, i.e., $E_n f=\frac{1}{n}\sum_{i=1}^n f(A_i)$ for any measurable function $f$. Let $a\vee b=\max\{a,b\}$. 

For notational simplicity, denote an encouragement rule and its social welfare by $\pi$ and $W(\pi)$ in place of $\boldsymbol{\alpha}^\pi$ and $W(\boldsymbol{\alpha}^\pi)$, respectively. Let $\Pi$ denote the class of encouragement rules the policymaker can choose from. In view of Corollary \ref{representationbinary} and (\ref{LIV}), $W(\pi)$ is point-identified under Assumption \ref{identifyW} as
\begin{equation*}
    W(\pi)=E[Y(D(\alpha_0(X,Z)))]+E[\pi(X,Z)\cdot\{\mu_Y(X,p(X,\alpha_1(X,Z)))-\mu_Y(X,p(X,\alpha_0(X,Z)))\}].
\end{equation*}
Define the welfare contrast relative to the baseline policy that allocates everyone to $\alpha_0$ as $\bar{W}(\pi)=W(\pi)-E[Y(D(\alpha_0(X,Z)))]$. The optimal encouragement rule is given by $\pi^*\in\argmax_{\pi\in\Pi}\bar{W}(\pi)$ if the distribution of $(X,Z)$ and the mappings $(x,z)\mapsto p(x,z)$ and $(x,u)\mapsto \mu_Y(x,u)$ are known. However, these quantities are unknown in practice. Given an estimator $\hat{p}(x,z)$ for $p(x,z)$ and an estimator $\hat{\mu}_Y(x,u)$ for $\mu_Y(x,u)$, I construct the empirical welfare criterion $\hat{W}_n(\pi)$ by plugging in these estimators:
\begin{equation*}
    \hat{W}_n(\pi)=
   E_n[\pi(X,Z)\cdot\{\hat{\mu}_Y(X,\hat{p}(X,\alpha_1(X,Z)))-\hat{\mu}_Y(X,\hat{p}(X,\alpha_0(X,Z)))\}].
\end{equation*}
Then, I define the \textit{feasible EWM encouragement rule} as 
\begin{equation}
\hat{\pi}_{\mathrm{FEWM}}\in\argmax_{\pi\in\Pi}\hat{W}_n(\pi).
\label{feasiblerule}
\end{equation}

In line with the literature on statistical treatment rules \citep{manski2004statistical,kitagawa2018should}, I evaluate the performance of an encouragement rule $\pi$ by its \textit{regret} defined as the welfare loss relative to the highest attainable welfare within class $\Pi$:
\begin{equation*}
R(\pi)=\max_{\pi'\in\Pi}W(\pi')-W(\pi).
\end{equation*}
To analyze the regret of $\hat{\pi}_{\mathrm{FEWM}}$, I impose the following assumptions.

\begin{enumerate}[label=\textbf{Assumption \arabic*},ref=\arabic*,itemindent=5\parindent,leftmargin=0pt]
\setcounter{enumi}{2}
\item \label{boundedvc} 
(Boundedness and Vapnik-Chervonenkis (VC)-Class) 
\begin{enumerate}[label=(\roman*)]
    \item \label{bounded} 
    There exists $\bar{M}<\infty$ such that $\sup_{(u,x)\in[0,1]\times\mathcal{X}}|\operatorname{MTE}(x,u)|\leq\bar{M}$.
    \item \label{vc} 
    $\Pi$ has a finite VC-dimension.
\end{enumerate}
\end{enumerate}
Assumption \ref{boundedvc}(i) requires the MTE to be uniformly bounded in $u$ and $x$. Assumption \ref{boundedvc}(ii) controls the complexity of the class $\Pi$ of candidate encouragement rules in terms of VC-dimension. Interested readers can refer to \citet{van1996weak} for the definition and textbook treatment of VC-dimension. I now give two examples of $\Pi$ that satisfy Assumption \ref{boundedvc}(ii).

\begin{example}[Linear Eligibility Score (LES)]\label{exampleles} 
Let $v\in\mathbb{R}^{d_v}$ be a subvector of $(x,z)$. Consider the class of binary decision rules based on linear eligibility scores:
\begin{equation*}
    \Pi_{\mathrm{LES}}=\{\pi:\pi(x,z)=1\{\lambda_0+\lambda^\top v\geq0\},(\lambda_0,\lambda^\top)\in\mathbb{R}^{d_v+1}\}.
\end{equation*}
For example, individuals are assigned scholarships if a linear function of their tuition fee and distance to school exceeds some threshold. The EWM method searches over all possible linear coefficients. The VC-dimension of $\Pi_{\mathrm{LES}}$ is $d_v+1$.
\end{example}

\begin{example}[Threshold Allocations (TA)]\label{exampleta} 
Consider the class of binary decision rules based on threshold allocations:
\begin{equation*}
    \Pi_{\mathrm{TA}}=\{\pi:\pi(x,z)=1\{\sigma_k v_k\leq\bar{v}_k \text{ for } k\in\{1,\dots,d_v\}\},\bar{v}\in\mathbb{R}^{d_v},\sigma\in\{-1,1\}^{d_v}\}.
\end{equation*}
For example, individuals are assigned scholarships if their tuition fee and distance to school are above or below some thresholds. The EWM method searches over all possible thresholds and directions. The VC-dimension of $\Pi_{\mathrm{TA}}$ is $d_v$.
\end{example}

I also propose the following assumption about the unknown components that show up in the social welfare criterion: the propensity score $p(x,z)$ and the observed conditional average outcome $\mu_Y(x,u)$.

\begin{enumerate}[label=\textbf{Assumption \arabic*},ref=\arabic*,itemindent=5\parindent,leftmargin=0pt]
\setcounter{enumi}{3}
\item \label{estimatorpMTE} 
(Estimation of the Propensity Score and the Observed Conditional Average Outcome) 
\begin{enumerate}[label=(\roman*)]
    \item \label{supnormp}
    There exists a sequence $\psi_n\to\infty$ such that for each $d\in\{0,1\}$,
    \begin{equation*}
       E[|\hat{p}(X,\alpha_d(X,Z))-p(X,\alpha_d(X,Z))|]=O(\psi_n^{-1}).
    \end{equation*}
    \item \label{supnormMTE}
    There exists a sequence $\phi_n\to\infty$ such that
    \begin{equation*}
       E\Big[\sup_{u\in[0,1]}|\hat{\mu}_Y(X,u)-\mu_Y(X,u)|\Big]=O(\phi_n^{-1}).
    \end{equation*}
\end{enumerate}
\end{enumerate}
Assumption \ref{estimatorpMTE}(i) concerns the convergence rate in expectation of the estimation error for $p(x,z)$. When $p(x,z)$ is estimated nonparametrically as in Example \ref{examplenonpara}, a sufficient condition for Assumption \ref{estimatorpMTE}(i) is
$E[\sup_{(x,z)\in\operatorname{Supp}(X,Z)}|\hat{p}(x,z)-p(x,z)|]=O(\psi_n^{-1})$. In \ref{supnorm}, I derive the sup-norm convergence rate in expectation for local polynomial estimators and series estimators built on exponential tail bounds. The rate can be faster if a semiparametric or parametric estimator is used under additional assumptions as in Example \ref{examplesemipara} or \ref{examplepara}. 
Assumption \ref{estimatorpMTE}(ii) concerns the convergence rate in expectation of the estimation error for $\mu_Y(x,u)$. Since $U$ is not observed, I take the supremum over the unit interval. Usually, $\mu_Y(x,u)$ is estimated using the estimated propensity score as a generated regressor. When the regression model is parametric, I provide sufficient conditions for Assumption \ref{estimatorpMTE}(ii) to hold with $\phi_n=\psi_n$ in \ref{parametricMTE}. When the regression model is nonparametric, the sup-norm convergence rate in probability is established in \citet[Corollary 1]{mammen2012nonparametric}. However, the sup-norm convergence rate in expectation remains unknown. I leave it for future work.

\begin{theorem}\label{upperFEWM}
Suppose that Assumptions \ref{MTErestrictions}-\ref{estimatorpMTE} hold. Then,
\begin{equation*}
    E[R(\hat{\pi}_{\mathrm{FEWM}})]=O(\psi_n^{-1}\vee\phi_n^{-1}\vee n^{-1/2}).
\end{equation*}
\end{theorem}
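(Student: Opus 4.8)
The plan is to run the standard empirical-welfare-maximization argument with an extra layer that absorbs the first-stage estimation of $p$ and $\operatorname{MTE}$. Since $W(\pi)=\bar{W}(\pi)+E[Y(D(\alpha_0(Z)))]$ and the second term is constant in $\pi$, regret equals $\max_{\pi'\in\Pi}\bar{W}(\pi')-\bar{W}(\hat{\pi}_{\mathrm{FEWM}})$. Pick $\pi^*\in\argmax_{\pi\in\Pi}\bar{W}(\pi)$ (if the supremum is not attained, work with an $\varepsilon$-maximizer and let $\varepsilon\to0$), add and subtract $\hat{W}_n(\pi^*)$ and $\hat{W}_n(\hat{\pi}_{\mathrm{FEWM}})$, and note that $\hat{W}_n(\pi^*)-\hat{W}_n(\hat{\pi}_{\mathrm{FEWM}})\le0$ by the definition of $\hat{\pi}_{\mathrm{FEWM}}$; this gives
\[
R(\hat{\pi}_{\mathrm{FEWM}})\le 2\sup_{\pi\in\Pi}\big|\hat{W}_n(\pi)-\bar{W}(\pi)\big|.
\]
It then suffices to bound the expectation of the right-hand side. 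I would split it through the infeasible criterion $\tilde{W}_n(\pi)=E_n\big[\pi(X,Z)\int_{p(X,\alpha_0(Z))}^{p(X,\alpha_1(Z))}\operatorname{MTE}(u,X)\,du\big]$, which uses the true $p$ and $\operatorname{MTE}$ but the empirical measure, into a sampling-error term $\sup_\pi|\tilde{W}_n(\pi)-\bar{W}(\pi)|$ and a first-stage estimation-error term $\sup_\pi|\hat{W}_n(\pi)-\tilde{W}_n(\pi)|$.

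For the sampling error the summands are $\pi(X,Z)g(X,Z)$ with $g(X,Z)=\int_{p(X,\alpha_0(Z))}^{p(X,\alpha_1(Z))}\operatorname{MTE}(u,X)\,du$, which is bounded by $\bar{M}$ because the propensity scores lie in $[0,1]$ and $|\operatorname{MTE}|\le\bar{M}$ by Assumption \ref{boundedvc}(i). Since $\Pi$ has finite VC dimension by Assumption \ref{boundedvc}(ii), $\{(x,z)\mapsto\pi(x,z)g(x,z):\pi\in\Pi\}$ is a uniformly bounded VC-type class, so a symmetrization / Rademacher-complexity argument as in \citet{kitagawa2018should} yields $E[\sup_\pi|\tilde{W}_n(\pi)-\bar{W}(\pi)|]=O(\sqrt{\mathrm{VC}(\Pi)/n})=O(n^{-1/2})$.

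The estimation-error term is where the work lies. Using $|\pi|\le1$ and Fubini, $E[\sup_\pi|\hat{W}_n(\pi)-\tilde{W}_n(\pi)|]\le E\big[E_n[|\Delta(X,Z)|]\big]$ with $\Delta(X,Z)=\int_{\hat{p}_0}^{\hat{p}_1}\widehat{\operatorname{MTE}}(u,X)\,du-\int_{p_0}^{p_1}\operatorname{MTE}(u,X)\,du$, writing $p_d=p(X,\alpha_d(Z))$ and $\hat{p}_d=\hat{p}(X,\alpha_d(Z))$. Split $\Delta=\Delta_1+\Delta_2$ with $\Delta_1=\int_{\hat{p}_0}^{\hat{p}_1}(\widehat{\operatorname{MTE}}-\operatorname{MTE})$ and $\Delta_2=\int_{\hat{p}_0}^{\hat{p}_1}\operatorname{MTE}-\int_{p_0}^{p_1}\operatorname{MTE}$. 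Additivity of the signed integral gives $\Delta_2=\int_{p_1}^{\hat{p}_1}\operatorname{MTE}-\int_{p_0}^{\hat{p}_0}\operatorname{MTE}$, so $|\Delta_2|\le\bar{M}(|\hat{p}_0-p_0|+|\hat{p}_1-p_1|)$, and Cauchy--Schwarz with Assumption \ref{estimatorpMTE}(ii) gives $E[E_n[|\Delta_2|]]=O(\psi_n^{-1})$. For $\Delta_1$, the integrand is supported on the interval between $\hat{p}_0$ and $\hat{p}_1$; split that interval into its intersection with $\bar{\mathcal{U}}(X)$, on which $|\Delta_1|$ is at most $\sup_{u\in\bar{\mathcal{U}}(X)}|\widehat{\operatorname{MTE}}(u,X)-\operatorname{MTE}(u,X)|$ (the length being at most $1$), contributing $O(\phi_n^{-1})$ by Assumption \ref{estimatorpMTE}(iii); and a residual piece whose Lebesgue measure is of order $|\hat{p}_0-p_0|+|\hat{p}_1-p_1|$ and on which $|\widehat{\operatorname{MTE}}-\operatorname{MTE}|\le|\widehat{\operatorname{MTE}}|+\bar{M}$, handled by Cauchy--Schwarz across $E_n$ and $E$ using Assumption \ref{estimatorpMTE}(i) (which both bounds the relevant second moment of $\widehat{\operatorname{MTE}}$ and guarantees the expectations are finite) together with Assumption \ref{estimatorpMTE}(ii), giving $O(1)\cdot O(\psi_n^{-1})=O(\psi_n^{-1})$. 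Collecting terms, $E[\sup_\pi|\hat{W}_n(\pi)-\tilde{W}_n(\pi)|]=O(\psi_n^{-1}\vee\phi_n^{-1})$, and combining with the sampling-error bound gives $E[R(\hat{\pi}_{\mathrm{FEWM}})]=O(\psi_n^{-1}\vee\phi_n^{-1}\vee n^{-1/2})$.

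The main obstacle is the $\Delta_1$ bookkeeping: the integration domain is random and determined by the first-stage estimate $\hat{p}$, so it may reach slightly beyond the region $\bar{\mathcal{U}}(X)$ on which $\widehat{\operatorname{MTE}}$ is known to be sup-norm accurate, and one must trade the accurate part against a boundary sliver of vanishing measure controlled only in $L^2$, while keeping all expectations---which involve $\hat{p}$ and $\widehat{\operatorname{MTE}}$ fit on the same sample---coupled correctly through Fubini and Cauchy--Schwarz. The sampling-error step, by contrast, is routine given Assumption \ref{boundedvc}.
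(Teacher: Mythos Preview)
Your proposal is correct and follows essentially the same architecture as the paper: reduce regret to $2\sup_\pi|\hat W_n(\pi)-\bar W(\pi)|$, insert the infeasible criterion $\tilde W_n$ (the paper calls it $\bar W_n$), control the sampling-error piece via the VC bound of \citet{kitagawa2018should}, and control the estimation-error piece via Cauchy--Schwarz against Assumptions~\ref{estimatorpMTE}(i)--(iii).

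The one noteworthy difference is in how you split the estimation error. You write
\[
\int_{\hat p_0}^{\hat p_1}\widehat{\operatorname{MTE}}-\int_{p_0}^{p_1}\operatorname{MTE}
=\underbrace{\int_{\hat p_0}^{\hat p_1}(\widehat{\operatorname{MTE}}-\operatorname{MTE})}_{\Delta_1}
+\underbrace{\Big(\int_{p_1}^{\hat p_1}-\int_{p_0}^{\hat p_0}\Big)\operatorname{MTE}}_{\Delta_2},
\]
whereas the paper swaps the roles of $p$ and $\hat p$:
\[
\int_{\hat p_0}^{\hat p_1}\widehat{\operatorname{MTE}}-\int_{p_0}^{p_1}\operatorname{MTE}
=\int_{p_0}^{p_1}(\widehat{\operatorname{MTE}}-\operatorname{MTE})
+\Big(\int_{p_1}^{\hat p_1}-\int_{p_0}^{\hat p_0}\Big)\widehat{\operatorname{MTE}}.
\]
This buys the paper exactly the simplification you flag as your ``main obstacle'': with true limits $[p_0,p_1]$, the MTE-difference is integrated over a domain lying in $\bar{\mathcal U}(X)$, so Assumption~\ref{estimatorpMTE}(iii) applies directly without any boundary-sliver bookkeeping; the random excursion $[p_d,\hat p_d]$ is pushed entirely into the second term, where $|\widehat{\operatorname{MTE}}|$ (not the difference) is controlled via Assumption~\ref{estimatorpMTE}(i) and Cauchy--Schwarz against Assumption~\ref{estimatorpMTE}(ii). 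Your decomposition still works, but the paper's choice removes the need to carve $[\hat p_0,\hat p_1]$ into the accurate region and a residual.
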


Theorem \ref{upperFEWM} derives a convergence rate upper bound for the average regret of $\hat{\pi}_{\mathrm{FEWM}}$. A proof is provided in \ref{proofs}. In general, the convergence rate upper bound is determined by $\psi_n^{-1}\vee\phi_n^{-1}$. 

\begin{remark}
There are two special cases where the $n^{-1/2}$ rate can be achieved.\footnote{While the convergence rate lower bound in the current context is not known, it is natural to conjecture that it is $O(n^{-1/2})$. I leave the formal analysis for future work.} One is to assume parametric forms for $p(x,z)$ and $\mu_Y(x,u)$ so that $\phi_n=\psi_n=n^{1/2}$. The other is to pursue a doubly robust approach in the spirit of \citet{athey2021policy}. The idea is to use an alternative social welfare criterion based on a doubly robust score, which is Neyman-orthogonal with respect to $p(x,z)$ and $\mu_Y(x,u)$. The details are given in \ref{doublyrobust}. An extra cost to pay for implementing the doubly robust score is the estimation of the joint density of $(X,Z)$, which can be challenging if the dimension of $X$ is large.
\end{remark}

\section{Extensions}
\label{extensions}

I consider two empirically relevant extensions to the baseline setup in Section \ref{setup}. In Section \ref{multiplecontinuousiv}, I allow for the presence of other instruments in addition to the one that can be manipulated. In Section \ref{budgetcontinuous}, I incorporate budget constraints. As a further extension, I consider encouragement rules with a binary instrument in \ref{encouragementbinary}.

\subsection{Multiple Instruments}
\label{multiplecontinuousiv}

In practice, the policymaker can observe multiple instruments, but only one of them can be used as the tool for policy intervention. For example, tuition subsidies and proximity to upper secondary schools are two instruments for enrollment in upper secondary school, but only the former can serve as an encouragement. More generally, I allow $Z$ to be $L$-dimensional. Let $Z_1\in\mathcal{Z}_1\subset\mathbb{R}$ be the instrument that can be intervened upon, and let $Z_{-1}$ collect all other $(L-1)$ components. An encouragement rule is a mapping $\boldsymbol{\alpha}:\mathcal{X}\times\mathcal{Z}\to\mathbb{R}$ that determines the manipulated level of $Z_1$ while leaving $Z_{-1}$ unchanged.

For $z_1\in\mathcal{Z}_1$, I construct a selection equation for the potential treatment status if $Z_1$ were set to $z_1$ while $Z_{-1}$ remained at its observed realization as
\begin{equation}
    D(z_1,Z_{-1})=1\{p(X,z_1,Z_{-1})\geq U_1\}\quad\text{with}\quad U_1|X,Z\sim \operatorname{Unif}[0,1],
    \label{marginalselection}
\end{equation}
where $U_1$ can be interpreted as a latent proneness to take the treatment, which is measured against the incentive (or disincentive) created by the manipulated instrument.\footnote{Since only one instrument is manipulated, I focus on the ``marginal'' selection behavior induced by this instrument conditional on the other instruments. If one is interested in policies that simultaneously manipulate multiple instruments, then a treatment selection model with multidimensional unobserved heterogeneity may be needed; see, e.g., \citet{ura2024policy}.} By (\ref{marginalselection}), I only impose restrictions along one margin of selection and thus are agnostic about unobserved heterogeneity in the marginal rate of substitution across instruments.\footnote{\citet{mogstad2021causal} use a random utility model to demonstrate that in the presence of multiple instruments, (\ref{selection}) implies homogeneity in the marginal rate of substitution. In contrast, (\ref{marginalselection}) does not impose such implicit homogeneity.} \citet{chen2022personalized} adhere to (\ref{selection}) when they deal with multiple instruments, thereby presenting the same social welfare representation as in the single-instrument case (i.e., Theorem \ref{representation}).

The outcome that would be observed under encouragement rule $\boldsymbol{\alpha}$ is
\begin{equation*}
    Y(D(\boldsymbol{\alpha}(X,Z),Z_{-1}))=Y(1)\cdot D(\boldsymbol{\alpha}(X,Z),Z_{-1})+Y(0)\cdot(1-D(\boldsymbol{\alpha}(X,Z),Z_{-1})).
\end{equation*}
Define the social welfare criterion as $W(\boldsymbol{\alpha})=E[Y(D(\boldsymbol{\alpha}(X,Z),Z_{-1}))]$. Since (\ref{marginalselection}) only imposes that $U_1$ is independent of $Z_1$ given $(X,Z_{-1})$, I accordingly replace Assumption \ref{MTErestrictions}(ii) with an exclusion restriction that only requires potential outcomes to be mean independent of $Z_1$ given $(X,Z_{-1})$.

\begin{enumerate}[label=\textbf{Assumption \arabic*},ref=\arabic*,itemindent=5\parindent,leftmargin=0pt]
\setcounter{enumi}{4}
\item \label{multipleMTErestrictions} 
(Instrument-Specific Exclusion Restriction) 
$E[Y(d)|X,Z,U_1]\allowbreak=E[Y(d)|X,\allowbreak Z_{-1},U_1]$ and $E[|Y(d)|]<\infty$ for $d\in\{0,1\}$.
\end{enumerate}
It turns out that $W(\boldsymbol{\alpha})$ can be expressed as a function of the instrument-specific MTE defined as
\begin{equation*}
    \operatorname{MTE}_1(x,z_{-1},u_1)\equiv E[Y(1)-Y(0)|X=x,Z_{-1}=z_{-1},U_1=u_1].
\end{equation*}
The expression is given in Corollary \ref{Wmultiple}. The analysis in Section \ref{applytoEWM} then applies. Heuristically, $\operatorname{MTE}_1$ is equivalent to the MTE function using the manipulated instrument and conditioning on the other instruments as covariates.

\begin{corollary}\label{Wmultiple}
    Under (\ref{marginalselection}) and Assumption \ref{multipleMTErestrictions}, the social welfare criterion for a given encouragement rule $\boldsymbol{\alpha}$ is given by
    \begin{eqnarray*}
    W(\boldsymbol{\alpha})&=&E[Y]+E\Big[\int_0^1\operatorname{MTE}_1(X,Z_{-1},u_1)\\
    &&\cdot(1\{p(X,\boldsymbol{\alpha}(X,Z),Z_{-1})\geq u_1\}-1\{p(X,Z)\geq u_1\})\,du_1\Big].
\end{eqnarray*}
\end{corollary}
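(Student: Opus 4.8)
The plan is to follow the proof of Theorem~\ref{representation}, treating the non-manipulated coordinates $Z_{-1}$ as additional conditioning covariates and invoking the instrument-specific exclusion restriction (Assumption~\ref{multipleMTErestrictions}) in place of Assumption~\ref{MTErestrictions}(ii). First I would rewrite the welfare contrast. Since $D=D(Z)=D(Z_1,Z_{-1})=D_1(Z_1)$ by the definition of the marginal potential treatment, we have $Y=Y(D_1(Z_1))$ and hence $E[Y]=E[Y(D_1(Z_1))]$. Using $Y(D_1(a))=Y(0)+(Y(1)-Y(0))\cdot D_1(a)$ for any argument $a$ (valid since $D_1(a)\in\{0,1\}$), it follows that
\[ W(\boldsymbol{\alpha})-E[Y]=E\big[(Y(1)-Y(0))\cdot(D_1(\boldsymbol{\alpha}(X,Z))-D_1(Z_1))\big]. \]
Evaluating the selection equation~(\ref{marginalselection}) at the observed value $z_1=Z_1$ and at the manipulated value $z_1=\boldsymbol{\alpha}(X,Z)$ (which is a deterministic function of $(X,Z)$), with the same draw of $U_1$, gives $D_1(\boldsymbol{\alpha}(X,Z))-D_1(Z_1)=\Delta_{\boldsymbol{\alpha}}(U_1)$, where $\Delta_{\boldsymbol{\alpha}}(u_1)\equiv 1\{p(X,\boldsymbol{\alpha}(X,Z),Z_{-1})\geq u_1\}-1\{p(X,Z)\geq u_1\}$.

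Next I would apply iterated expectations, conditioning first on $(X,Z,U_1)$. Because $\Delta_{\boldsymbol{\alpha}}(U_1)$ is measurable with respect to $(X,Z,U_1)$, it can be pulled out of the inner conditional expectation, leaving $E[Y(1)-Y(0)\mid X,Z,U_1]$. Assumption~\ref{multipleMTErestrictions} gives $E[Y(d)\mid X,Z,U_1]=E[Y(d)\mid X,Z_{-1},U_1]$ for $d\in\{0,1\}$, so $E[Y(1)-Y(0)\mid X,Z,U_1]=\operatorname{MTE}_1(U_1,X,Z_{-1})$. Conditioning further on $(X,Z)$ and using $U_1\mid X,Z\sim\operatorname{Unif}[0,1]$ from~(\ref{marginalselection}), I would integrate out $U_1$:
\[ E\big[\operatorname{MTE}_1(U_1,X,Z_{-1})\,\Delta_{\boldsymbol{\alpha}}(U_1)\mid X,Z\big]=\int_0^1 \operatorname{MTE}_1(u_1,X,Z_{-1})\,\big(1\{p(X,\boldsymbol{\alpha}(X,Z),Z_{-1})\geq u_1\}-1\{p(X,Z)\geq u_1\}\big)\,du_1. \]
Taking the outer expectation over $(X,Z)$ delivers the claimed representation; the interchange of the $u_1$-integral and the expectation is justified by Fubini's theorem together with $E[|Y(d)|]<\infty$ from Assumption~\ref{multipleMTErestrictions}.

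This argument is essentially a relabeling of the single-instrument proof, so the only genuinely new point is verifying that the instrument-specific exclusion restriction does the required work: once we condition on $(X,Z,U_1)$ --- which \emph{includes} the manipulated coordinate $Z_1$ --- the treatment-effect heterogeneity collapses to $\operatorname{MTE}_1(U_1,X,Z_{-1})$, an object that no longer involves $Z_1$, which is precisely what permits separating it from the propensity-score indicators (which do depend on $Z_1$ through $\boldsymbol{\alpha}$ and through $p(X,Z)$). I expect the main obstacle to be purely organizational --- keeping track of the nested conditioning sets and the measurability of $\boldsymbol{\alpha}(X,Z)$ --- exactly as in the proof of Theorem~\ref{representation}; no new analytic difficulty arises because the uniform-marginal normalization is already built into~(\ref{marginalselection}).
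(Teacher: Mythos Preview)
Your proposal is correct and follows exactly the approach the paper intends: the paper does not give a separate proof of this corollary, implicitly treating it as a direct adaptation of the proof of Theorem~\ref{representation} with $(X,Z_{-1})$ playing the role of the covariates and Assumption~\ref{multipleMTErestrictions} replacing Assumption~\ref{MTErestrictions}(ii). Your argument tracks that proof step by step, and the key observation you highlight---that conditioning on $(X,Z,U_1)$ and invoking the instrument-specific exclusion restriction reduces the inner expectation to $\operatorname{MTE}_1(U_1,X,Z_{-1})$---is precisely the point.
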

  
\subsection{Budget Constraints}
\label{budgetcontinuous}

Manipulating the instrument can be costly, especially when the instrument is a monetary variable such as price. In practice, the policymaker often faces budget constraints and wants to prioritize encouragement for the individuals who will benefit the most. Incorporating budget constraints is of particular interest when the treatment effect is intrinsically positive. For example, \citet{dupas2014short} documents an experiment in Kenya that randomly assigned subsidized prices for a new health product. The treatment and outcome were indicators for the product's purchase and usage, respectively. The product was not available outside the experiment, so the potential outcome if not treated is identically equal to zero. Hence, the first-best decision rule was to assign the treatment, or an encouragement that induced one-way flows into treatment, to everyone. However, to preserve financial resources, in this scenario, the policymaker may wish to exclude individuals who are not likely to increase product usage, for example, because of low disease risks in their neighborhood.

Let $C:\mathcal{X}\times\mathcal{Z}\to\mathbb{R}_+$ be a user-chosen cost function that potentially depends on $\boldsymbol{\alpha}$. For example, $C(x,z)=|\boldsymbol{\alpha}(x,z)-z|$ is a direct measure of manipulation costs.\footnote{Depending on the context, additional costs can be embedded in the experimental design. For example, in the experiment documented in \citet{thornton2008demand}, besides the monetary incentives for learning HIV results (the instrument), there were considerably high costs for testing, counseling/giving results, and selling condoms (see Table 12).} For encouragement rule $\boldsymbol{\alpha}$, I define its budget by aggregating the costs for individuals who actually take up the treatment: $B(\boldsymbol{\alpha})=E[C(X,Z)\cdot D(\boldsymbol{\alpha}(X,Z))]$. I consider settings in which the policymaker faces a harsh budget constraint such that the cost of implementing any encouragement rule cannot exceed $\kappa$.

\begin{remark}
    The policymaker may only want to account for cost without imposing a fixed budget, which is the thought experiment considered in \citet{kitagawa2018should} and \citet{chen2022personalized}. In this case, one can redefine the social welfare criterion as $W(\boldsymbol{\alpha})-B(\boldsymbol{\alpha})$ to apply the analysis in Section \ref{applytoEWM}.
\end{remark}

As in Section \ref{applytoEWM}, I specialize to binary encouragement rules when discussing the performance of statistical decision rules and denote the budget by $B(\pi)$ in place of $B(\boldsymbol{\alpha}^\pi)$. Given a class $\Pi$ of feasible encouragement rules,\footnote{I implicitly assume that there exists $\pi\in\Pi$ such that $B(\pi)\leq \kappa$.} the policymaker now solves a constrained optimization problem:
\begin{equation*}
    \max_{\pi\in\Pi} W(\pi) \text{ s.t. } B(\pi)\leq\kappa.
\end{equation*}
Let $\pi^*_{\mathrm{B}}$ denote the oracle solution. I follow \citet{sun2021empirical} to introduce two desirable properties for statistical decision rules in the current setting: \textit{asymptotic optimality} and \textit{asymptotic feasibility}. Intuitively, with a large enough sample size, asymptotic optimality imposes that a statistical decision rule $\hat{\pi}$ is unlikely to achieve strictly lower welfare than $\pi^*_\mathrm{B}$, and asymptotic feasibility imposes that $\hat{\pi}$ is unlikely to strictly violate the budget constraint.

\begin{definition}\label{optfsb}
A statistical decision rule $\hat{\pi}$ is \textit{asymptotically optimal} if, for any $\epsilon>0$,
\begin{equation*}
    \limsup_{n\to\infty}\Pr(W(\hat{\pi})-W(\pi^*_{\mathrm{B}})<-\epsilon)=0.
\end{equation*}
A statistical decision rule $\hat{\pi}$ is \textit{asymptotically feasible} if, for any $\epsilon>0$,
\begin{equation*}
    \limsup_{n\to\infty}\Pr(B(\hat{\pi})-\kappa>\epsilon)=0.
\end{equation*}
\end{definition}

\begin{remark}
    Asymptotic optimality and asymptotic feasibility are defined asymmetrically in \citet{sun2021empirical}. On one hand, asymptotic optimality only requires the population welfare of a statistical decision rule to concentrate around the optimal value from below. On the other hand, asymptotic feasibility requires the statistical decision rule to satisfy the population budget constraint without any slackness and thus is extremely sensitive to sampling uncertainty. In consequence, \citet{sun2021empirical} proves the negative result that no statistical decision rule can uniformly satisfy both properties over a sufficiently rich class of data generating processes. In contrast, after revising the definition of asymptotic feasibility to be symmetric with that of asymptotic optimality, I show that it is possible to construct a statistical decision rule that simultaneously achieves both properties. 
\end{remark}

Note that by (\ref{selection}),
\begin{equation*}
    B(\pi)=E[C(X,Z)\cdot\{\pi(X,Z)\cdot p(X,\alpha_1(X,Z))+(1-\pi(X,Z))\cdot p(X,\alpha_0(X,Z))\}].
\end{equation*}
Define the \textit{budget-constrained EWM encouragement rule} defined as a solution to the sample version of the population constrained optimization problem:
\begin{equation}
    \hat{\pi}_{\mathrm{BEWM}}\in\argmax_{\pi\in\Pi}\hat{W}_n(\pi) \text{ s.t. } \hat{B}_n(\pi)\leq \kappa,
    \label{constrainedsa}
\end{equation}
where
\begin{equation*}
    \hat{B}_n(\pi)=E_n[C(X,Z)\cdot\{\pi(X,Z)\cdot \hat{p}(X,\alpha_1(X,Z))+(1-\pi(X,Z))\cdot \hat{p}(X,\alpha_0(X,Z))\}].
\end{equation*}
I set $\hat{\pi}_{\mathrm{BEWM}}=\emptyset$ if no $\pi\in\Pi$ satisfies $\hat{B}_n(\pi)\leq \kappa$. Theorem \ref{sampleanaopt} asserts that $\hat{\pi}_{\mathrm{BEWM}}$ satisfies both properties in Definition \ref{optfsb}. A proof is provided in \ref{proofs}.

\begin{theorem}\label{sampleanaopt}
Suppose that Assumptions \ref{MTErestrictions}--\ref{estimatorpMTE} hold, and that $C(x,z)$ is uniformly bounded in $x$ and $z$. Then, $\hat{\pi}_{\mathrm{BEWM}}$ is asymptotically optimal and asymptotically feasible.
\end{theorem}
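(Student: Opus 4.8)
The plan is to reduce both assertions to uniform laws of large numbers over the policy class $\Pi$ and then read off each property by a short deterministic argument. Write $\bar W(\pi)=W(\pi)-E[Y(D(\alpha_0(Z)))]$ as in Section \ref{applytoEWM}; the subtracted constant does not depend on $\pi$, so $\hat\pi_{\mathrm{BEWM}}$ also maximizes $\hat W_n(\cdot)$, the empirical analogue of $\bar W$, subject to $\hat B_n(\cdot)\le\kappa$. Set $\Delta_n^W=\sup_{\pi\in\Pi}|\hat W_n(\pi)-\bar W(\pi)|$ and $\Delta_n^B=\sup_{\pi\in\Pi}|\hat B_n(\pi)-B(\pi)|$. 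The key step is to show $E[\Delta_n^W]\to0$ and $E[\Delta_n^B]\to0$; everything else is elementary.

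The bound $E[\Delta_n^W]\to0$ is exactly what is established en route to Theorem \ref{upperFEWM}: decompose $\hat W_n(\pi)-\bar W(\pi)$ into (i) a plug-in error from replacing $(\hat p,\widehat{\operatorname{MTE}})$ by $(p,\operatorname{MTE})$ inside $E_n$, and (ii) a sampling error. For (i), bound $\int_{\hat p(X,\alpha_0(Z))}^{\hat p(X,\alpha_1(Z))}\widehat{\operatorname{MTE}}-\int_{p(X,\alpha_0(Z))}^{p(X,\alpha_1(Z))}\operatorname{MTE}$ by a term of order $\sup_{u\in\bar{\mathcal{U}}(X)}|\widehat{\operatorname{MTE}}(u,X)-\operatorname{MTE}(u,X)|$ and a term of order $\big(\bar M\vee\sup_{u\in\bar{\mathcal{U}}(X)}|\widehat{\operatorname{MTE}}(u,X)|\big)\big(|\hat p(X,\alpha_1(Z))-p(X,\alpha_1(Z))|+|\hat p(X,\alpha_0(Z))-p(X,\alpha_0(Z))|\big)$. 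Taking $E_n$ and then $E$, applying Cauchy--Schwarz with Assumptions \ref{bounded}, \ref{boundedMTEhat}, \ref{supnormp}, \ref{supnormMTE}, and using that $\pi$ is $\{0,1\}$-valued, yields an $O(\psi_n^{-1}\vee\phi_n^{-1})$ bound uniformly in $\pi$; the boundary slivers on which $\hat p(X,\alpha_d(Z))$ leaves $\bar{\mathcal{U}}(X)$ are absorbed into the $\psi_n^{-1}$ term as in that proof. For (ii), $\sup_{\pi\in\Pi}\big|(E_n-E)[\pi(X,Z)\int_{p(X,\alpha_0(Z))}^{p(X,\alpha_1(Z))}\operatorname{MTE}(u,X)\,du]\big|$ is $O(n^{-1/2})$ in expectation by a maximal inequality for the VC-subgraph class formed by multiplying the finite-VC class $\Pi$ (Assumption \ref{vc}) by the fixed function $\int_{p(X,\alpha_0(Z))}^{p(X,\alpha_1(Z))}\operatorname{MTE}(u,X)\,du$, whose modulus is at most $\bar M$ (Assumption \ref{bounded}). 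Hence $E[\Delta_n^W]=O(\psi_n^{-1}\vee\phi_n^{-1}\vee n^{-1/2})\to0$. The bound $E[\Delta_n^B]=O(\psi_n^{-1}\vee n^{-1/2})\to0$ follows from the same template and more easily: $\hat B_n$ involves only $\hat p$, the cost $C$ is uniformly bounded by hypothesis (so the plug-in error is $O(\psi_n^{-1})$ via Assumption \ref{supnormp}), and $\{(x,z)\mapsto C(x,z)[\pi(x,z)p(x,\alpha_1(z))+(1-\pi(x,z))p(x,\alpha_0(z))]:\pi\in\Pi\}$ is again VC-subgraph with bounded envelope.

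With $\Delta_n^B\to0$ in probability, asymptotic feasibility is nearly immediate: on $\{\hat\pi_{\mathrm{BEWM}}\neq\emptyset\}$ the constraint in (\ref{constrainedsa}) gives $\hat B_n(\hat\pi_{\mathrm{BEWM}})\le\kappa$, so $B(\hat\pi_{\mathrm{BEWM}})-\kappa\le B(\hat\pi_{\mathrm{BEWM}})-\hat B_n(\hat\pi_{\mathrm{BEWM}})\le\Delta_n^B$, whence $\Pr(B(\hat\pi_{\mathrm{BEWM}})-\kappa>\epsilon)\le\Pr(\Delta_n^B>\epsilon)+\Pr(\hat\pi_{\mathrm{BEWM}}=\emptyset)$. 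For asymptotic optimality, and to handle $\Pr(\hat\pi_{\mathrm{BEWM}}=\emptyset)\to0$, I would use a single device: for each $\delta>0$ pick $\pi^\delta\in\Pi$ with $B(\pi^\delta)<\kappa$ and $W(\pi^\delta)\ge W(\pi^*_{\mathrm B})-\delta$ (this is where the maintained nonemptiness of $\{\pi\in\Pi:B(\pi)\le\kappa\}$, strengthened to a Slater-type condition when the budget binds at $\pi^*_{\mathrm B}$, enters). On the event $\{\Delta_n^B<\kappa-B(\pi^\delta)\}$, which has probability tending to one, $\pi^\delta$ is empirically feasible; hence $\hat\pi_{\mathrm{BEWM}}\neq\emptyset$ and $\hat W_n(\hat\pi_{\mathrm{BEWM}})\ge\hat W_n(\pi^\delta)$, and combining this with $W=\bar W+E[Y(D(\alpha_0(Z)))]$ and the definition of $\Delta_n^W$ gives $W(\hat\pi_{\mathrm{BEWM}})\ge W(\pi^\delta)-2\Delta_n^W\ge W(\pi^*_{\mathrm B})-\delta-2\Delta_n^W$. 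Since $\Delta_n^W\to0$ in probability, taking $\limsup_n$ and then $\delta\downarrow0$ yields $\limsup_n\Pr(W(\hat\pi_{\mathrm{BEWM}})-W(\pi^*_{\mathrm B})<-\epsilon)=0$; and the same event forces $\Pr(\hat\pi_{\mathrm{BEWM}}=\emptyset)\to0$, which closes the feasibility bound.

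I expect the main obstacle to be not the uniform convergence — essentially inherited from Theorem \ref{upperFEWM} for the welfare and an easy variant for the budget — but the handling of the constrained problem: the oracle rule $\pi^*_{\mathrm B}$ need not be empirically feasible when its budget binds exactly, so its welfare must be approached from the strict interior of the feasible set, which requires a constraint-qualification condition (equivalently, continuity of the constrained optimal value as a function of $\kappa$) beyond mere nonemptiness of the feasible set. A minor secondary point is controlling the plug-in integration limits $\hat p(X,\alpha_d(Z))$ when they fall slightly outside the identified region $\bar{\mathcal{U}}(X)$, dealt with as in the proof of Theorem \ref{upperFEWM}.
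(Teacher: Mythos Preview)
Your uniform-convergence reductions for $\Delta_n^W$ and $\Delta_n^B$ and your asymptotic-feasibility argument match the paper's proof essentially line for line: the paper invokes the same VC/maximal-inequality bounds (its (\ref{Wtof}), (\ref{convergeWn}), (\ref{convergeB}), (\ref{convergeBn})) and the same one-line feasibility chain $B(\hat\pi_{\mathrm{BEWM}})-\kappa\le B(\hat\pi_{\mathrm{BEWM}})-\hat B_n(\hat\pi_{\mathrm{BEWM}})\le\sup_{\pi}|B(\pi)-\hat B_n(\pi)|$.

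Where you diverge is the optimality step. The paper does \emph{not} pass to an interior approximation $\pi^\delta$; it compares directly to $\pi^*_{\mathrm B}$ and splits on the event $\{\hat B_n(\pi^*_{\mathrm B})\le\kappa\}$. On that event $\hat W_n(\hat\pi_{\mathrm{BEWM}})\ge\hat W_n(\pi^*_{\mathrm B})$ and the usual $2\Delta_n^W$ bound applies; off it, the paper bounds the probability by $\Pr(\hat B_n(\pi^*_{\mathrm B})>B(\pi^*_{\mathrm B}))\le\sup_{\epsilon'>0}\Pr(\sup_\pi|B-\hat B_n|>\epsilon')$ and then invokes Markov. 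This route adds no Slater-type hypothesis. Your flagged concern, however, is well taken: when the constraint binds, $B(\pi^*_{\mathrm B})=\kappa$, that last bound is essentially $\Pr(\hat B_n(\pi^*_{\mathrm B})-B(\pi^*_{\mathrm B})>0)$, and taking $\sup_{\epsilon'>0}$ of a Markov bound $E[\Delta_n^B]/\epsilon'$ does not obviously vanish. So the paper sidesteps your extra assumption at the price of a step that is delicate in the binding case, whereas your interior-point argument is cleanly correct once a constraint qualification (existence of strictly feasible $\pi^\delta$ with $W(\pi^\delta)\to W(\pi^*_{\mathrm B})$, equivalently left-continuity of $\kappa\mapsto\max_{B(\pi)\le\kappa}W(\pi)$) is granted --- a hypothesis not stated in the theorem but arguably needed.
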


\section{Empirical Application}
\label{empirical}

In this section, I apply the feasible EWM encouragement rule and the budget-constrained EWM encouragement rule to provide guidance on how to encourage upper secondary schooling, using data from the third wave of the Indonesian Family Life Survey (IFLS) fielded from June through November 2000. \citet{carneiro2017average} used this dataset to study the returns to upper secondary schooling in Indonesia. I follow \citet{carneiro2017average} in restricting my sample to males aged 25--60 who are employed and who have non-missing reported wage and schooling information. This subsample consists of 2,104 individuals.\footnote{The subsample used in \citet{carneiro2017average} does not contain the tuition fee variable, which plays a central role in my framework as the manipulatable instrument. Hence, I followed their descriptions to construct my subsample from raw data downloaded from the RAND Corporation website.}

I specify the relevant variables in my framework as follows. The outcome $Y$ is the log of hourly wages (in rupiah) constructed from self-reported monthly wages and hours worked per week. The treatment $D$ is an indicator of attendance of upper secondary school or higher, corresponding to 10 or more years of completed education. The first instrument $Z_1$ is the lowest fee per continuing student, in thousands of rupiah, among secondary schools in the community of current residence.\footnote{The term ``community'' refers to the lowest-level administrative division in Indonesia. A community can either be a \textit{desa} (village) or a \textit{kelurahan} (urban community).} The second instrument $Z_2$ is the distance, in kilometers, from the office of the community head of current residence to the nearest secondary school, which I define as the secondary school closest to the office of the community head.\footnote{The validity of an instrument constructed in this way can be controversial. Each individual's tuition and distance to school are based on their current residence rather than their residence at the time of the secondary schooling decision. Educated individuals may move to more urban areas with more schools and higher tuition fees. Nonetheless, I note that the instrumental variable independence assumption for unrestricted instruments has testable implications, which are the generalized instrumental inequalities proposed by \citet{kedagni2020generalized}. Using their tests, I do not find evidence against the independence assumption between potential earnings and tuition fees, or between potential earnings and distance to school.
%Their tests suggest that the independence assumption between individuals' potential earnings and college tuition at age 17 is not rejected by the data from the 1979 National Longitudinal Survey of Youth used in \citet{heckman2001four}.
} 
I treat $Z_1$ as manipulatable and $Z_2$ as not manipulatable. Collect $Z=(Z_1,Z_2)^\top$. The covariates $X$ include age, age squared, an indicator of rural residence, distance from the office of the community head of residence to the nearest health post, and indicators for religion, parental education, and the province of residence. Table \ref{samplestat} in \ref{addtab} presents sample averages of these variables.

I focus on binary encouragement rules that manipulate $Z_1$ according to $\boldsymbol{\alpha}^\pi(x,z)=\pi(x,z)\cdot \alpha_1(x,z)+(1-\pi(x,z))\cdot \alpha_0(x,z)$. For the binary decision $\pi$, I consider the class of linear rules based on $(z_1,z_2)$:
\begin{equation*}
    \Pi_{\mathrm{LES}}=
    \big\{\pi:\pi(x,z)=1\{\lambda_{0}+\lambda_{1}\cdot z_1+\lambda_{2}\cdot z_2>0\},\lambda_0,\lambda_1,\lambda_2\in\mathbb{R}\}.
\end{equation*}
I specify the manipulation function as $\alpha_1(x,z)=(z_1-a)\cdot 1\{z_1\geq a\}$ and $\alpha_0(x,z)=z_1$. Here, $\alpha_1$ describes a tuition subsidy of up to $a$ and $\alpha_0$ describes the status quo. The policymaker \emph{a priori} chooses from $a\in\{2.5,22.25\}$, which correspond to the sample median and maximum of $Z_1$, respectively. I specify the cost function as $C(x,z)=|\boldsymbol{\alpha}^\pi(x,z)-z_1|$ and the budget constraint as $\kappa=0.28$, which is about one-tenth of the average hourly wage.

The fact that $Z_1$ has discrete support (with 56 distinct values) violates the support condition for nonparametric or semiparametric identification of the propensity score.\footnote{When $a=2.5$, only 20 out of the 35 support points of $\alpha_1(X,Z)$ lie in the support of $Z_1$. When $a=22.25$, $\alpha_1(X,Z)$ is identically equal to 0, which falls outside the support of $Z_1$.} Therefore, I estimate the propensity score from a logit regression of $D$ on $X$, $Z$, $Z_1\cdot Z_2$, and interactions between $Z$ and $X$.\footnote{This specification of propensity score is an adaptation of that considered by \citet{carneiro2017average} and \citet{sasaki2021estimation}, who use a single instrument $Z_2$.} Although all elements of $X$ are discrete, they together provide sufficient variation in the propensity score for the semiparametric estimation of the MTE.\footnote{The estimated propensity score takes 1,782 distinct values that almost cover the full unit interval.} I specify the conditional mean of $Y$ given $X$, $Z_2$, and $p(X,Z)$ as
\begin{equation*}
    E[Y|X=x,Z_2=z_2,p(X,Z)=u]=u(x^\top,z_2)\beta_1+(1-u)(x^\top,z_2)\beta_0+G(u),
\end{equation*}
where $G(\cdot)$ is an unknown function. By Corollary \ref{Wmultiple}, the social welfare criterion of encouragement rule $\pi$ is identified as $W(\pi)=E[Y]+E[\pi(X,Z)\cdot((p(X,\alpha_1(X,Z),Z_2)-p(X,Z))(X^\top,Z_2)(\beta_1-\beta_0)+G(p(X,\alpha_1(X,Z),Z_2))-G(p(X,Z)))]$. I use the double residual regression procedure of \citet{robinson1988root} to estimate $(\beta_1,\beta_0)$. Given the estimators $\hat{p}(x,z)$ and $(\hat{\beta}_1,\hat{\beta}_0)$, I estimate $G(\cdot)$ using a nonparametric regression of the residual $Y-\hat{p}(X,Z)(X^\top,Z_2)\hat{\beta}_1-(1-\hat{p}(X,Z))(X^\top,Z_2)\hat{\beta}_0$ on $\hat{p}(X,Z)$. I use the locally linear regression throughout with a Gaussian kernel and a bandwidth of 0.06, which is determined by leave-one-out cross-validation.

I compute the feasible EWM encouragement rule $\hat{\pi}_{\mathrm{FEWM}}$ in (\ref{feasiblerule}) and the budget-constrained EWM encouragement rule $\hat{\pi}_{\mathrm{BEWM}}$ in (\ref{constrainedsa}) using the CPLEX mixed integer optimizer. Table \ref{welfaregain} presents point estimates of some key quantities of alternative encouragement rules. The first column reports the welfare gain, $W(\pi)-E[Y]$. The second column reports the share of eligible population for the tuition subsidy, $E[\pi(X,Z)]$. Based on the decomposition result in Corollary \ref{PRTE}, the third and fourth columns report the average change in treatment take-up, $E[p(X,\boldsymbol{\alpha}^\pi(X,Z),Z_2)]-E[p(X,Z)]$, and the PRTE, respectively. The former measures the proportion of individuals induced to enroll in or drop out of upper secondary school, and the latter measures the average change in the log of hourly wages among these individuals. 

\begin{table}[ht]
    \centering
    \caption{Comparison of Alternative Encouragement Rules}
    \begin{tabular}{l c c c c}
        \hline
        & & Share of Eligible & Avg. Change in &\\
       Policy & Welfare Gain & Population & Treatment Take-Up & PRTE\\
         \hline
\multicolumn{5}{l}{Panel A: $a=2.5$ (tuition subsidy up to the median tuition fee)}\\
$\hat{\pi}_{\mathrm{FEWM}}$ & 0.0146 & 0.388 & 0.0173 & 0.843\\
$\hat{\pi}_{\mathrm{BEWM}}$ ($\kappa=0.28$) & 0.0102 & 0.280 & 0.0137 & 0.743\\
$\pi(x,z)=1~\forall x,z$ & 0.0005 & 1 & 0.0022 & 0.230\\[6pt]
\multicolumn{5}{l}{Panel B: $a=22.25$ (full tuition waiver)}\\
$\hat{\pi}_{\mathrm{FEWM}}$ & 0.0218 & 0.386 & 0.0317 & 0.688\\
$\hat{\pi}_{\mathrm{BEWM}}$ ($\kappa=0.28$) & 0.0090 & 0.286 & 0.0141 & 0.639\\
$\pi(x,z)=1~\forall x,z$ & 0.0044 & 1 & 0.0140 & 0.315\\
\hline
    \end{tabular}
    \label{welfaregain}
    \begin{minipage}{\textwidth}
    \vspace{0.5em}
    {\footnotesize Notes: The first column reports the welfare gain, $W(\pi)-E[Y]$. The second column reports the share of eligible population for the tuition subsidy, $E[\pi(X,Z)]$. The third and fourth columns report the average change in treatment take-up, $E[p(X,\boldsymbol{\alpha}^\pi(X,Z),Z_2)]-E[p(X,Z)]$, and the PRTE, respectively, based on the decomposition result in Corollary \ref{PRTE}.\par}
    \end{minipage}
\end{table}

As can be seen from Table \ref{welfaregain}, the seemingly favorable tuition subsidy has little effect on overall upper secondary school attendance when applied to everyone, resulting in a welfare gain of only a small magnitude. In contrast, the feasible EWM encouragement rule and the budget-constrained EWM encouragement rule achieve higher welfare gains by targeting a subpopulation with both a greater increase in treatment take-up and higher PRTE.

I plot the feasible EWM encouragement rule and the budget-constrained EWM encouragement rule in Panels A and B of Figure \ref{ewmIFLS}, respectively. The shaded areas indicate the subpopulations to whom the tuition subsidy should be assigned. For both subsidy levels, the feasible EWM encouragement rule gives eligibility to individuals facing relatively high tuition fees and living relatively close to the nearest secondary school. The subpopulations targeted by the budget-constrained EWM encouragement rule shrink to the left. When the subsidy level $a$ is increased from 2.5 to 22.25, the budget-constrained EWM encouragement rule tends to prioritize individuals facing relatively low tuition fees.

\begin{figure}[ht]
    \centering
    \caption{Targeted Subpopulation under Alternative Encouragement Rules}
    \includegraphics{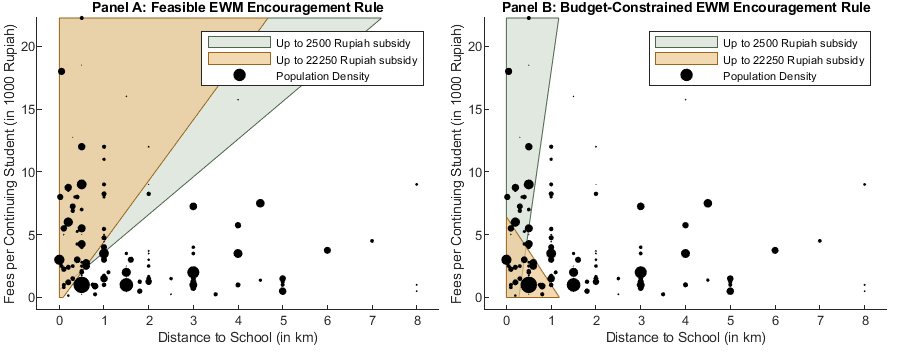}
    \label{ewmIFLS}
\end{figure}

Utilizing a decomposition of the social welfare criterion analogous to Corollary \ref{PRTE}, I offer a partial explanation of why the subpopulations indicated by the shaded areas in Figure \ref{ewmIFLS} are targeted. One can write
\begin{equation*}
   W(\pi)=E[Y]+E[\pi(X,Z)\cdot \operatorname{PRTE}(X,Z)\cdot(p(X,\alpha_1(X,Z),Z_2)-p(X,Z))],
\end{equation*}
where
\begin{equation*}
    \operatorname{PRTE}(x,z)=\frac{\int_{p(x,z)}^{p(x,\alpha_1(x,z),z_2)}\operatorname{MTE}_1(x,z_2,u_1)\,du_1}{p(x,\alpha_1(x,z),z_2)-p(x,z)}
\end{equation*}
measures the average treatment effect among individuals with $(X,Z)=(x,z)$ who are induced to switch treatment status when going from the status quo to the tuition subsidy $\alpha_1$. Let $\operatorname{Med}(X)$ denote the sample median of $X$. I focus on the case of $a=22.25$, i.e., a full tuition waiver. Figure \ref{interpretability} is based on point estimates of $p(x,z)$ and $\operatorname{PRTE}(x,z)$. Panel A displays the level sets of $(z_1,z_2)\mapsto p(\operatorname{Med}(X),\alpha_1(x,z),z_2)-p(\operatorname{Med}(X),z)$, namely the changes in treatment take-up for individuals with different values of $(Z_1,Z_2)$ and the median value of $X$. Panel B displays the level sets of $(z_1,z_2)\mapsto\operatorname{PRTE}(\operatorname{Med}(X),z)$.

\begin{figure}[ht]
    \centering
    \caption{Impact of Going from the Status Quo to a Full Tuition Waiver}
    \includegraphics{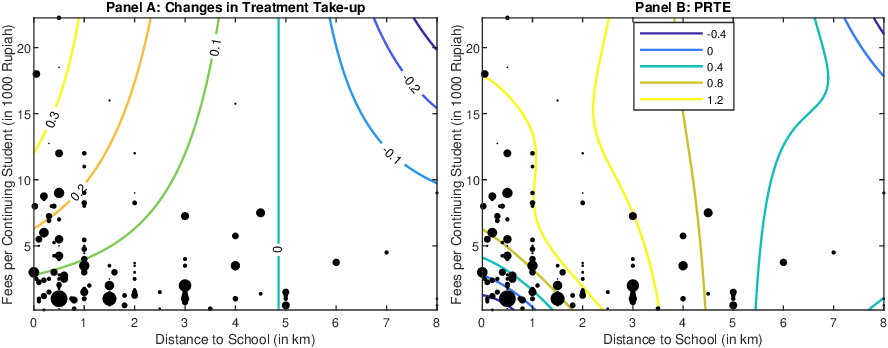}
    \label{interpretability}
    \begin{minipage}{\textwidth}
    \vspace{0.5em}
    {\footnotesize Notes: Panel A displays the level sets of $(z_1,z_2)\mapsto p(\operatorname{Med}(X),\alpha_1(x,z),z_2)-p(\operatorname{Med}(X),z)$. Panel B displays the level sets of $(z_1,z_2)\mapsto\operatorname{PRTE}(\operatorname{Med}(X),z)$. The size of the black dots indicates the number of individuals with different values of $(Z_1,Z_2)$.\par}
    \end{minipage}
\end{figure}

From Panel A of Figure \ref{interpretability}, it can be seen that only individuals with low $Z_2$ are induced into treatment. From Panel B of Figure \ref{interpretability}, it can be seen that individuals induced into treatment have positive PRTE except for those with low values of $(Z_1,Z_2)$. Put together, absent budget constraints, individuals in the upper-left corner are prioritized for the full tuition waiver. Meanwhile, contour lines where $Z_1$ is high are steep in both panels, implying that individuals with higher $Z_1$ incur greater manipulation costs for the same amount of welfare gains. Consequently, the optimal policy under budget constraints, which trades off welfare gains against costs, gives up individuals with high $Z_1$.

\section{Conclusion}
\label{conclusion}

In this paper, I propose a policy learning framework that allows for endogenous treatment selection by leveraging an instrumental variable. To deal with imperfect compliance when designing policies, I consider encouragement rules instead of treatment rules. To deal with failure of unconfoundedness when identifying the social welfare criterion, I incorporate the MTE function. Focusing on binary encouragement rules, I apply the representation of the social welfare criterion via the MTE to the EWM method and derive convergence rates of regret. I also consider extensions allowing for multiple instruments and budget constraints. I illustrate the EWM encouragement rule using data from the Indonesian Family Life Survey. 

To be clear, the analysis in this paper critically relies on the point-identification of the social welfare criterion. The necessary support condition or parametric assumptions could be restrictive. An interesting avenue for future research is to incorporate approaches to policy learning under partial identification of policy parameters (e.g., \citet{russell2020policy}, \citet{dadamo2022orthogonal}, \citet{christensen2023optimal}, \citet{yata2023optimal}).

\bibliographystyle{ecta}
\bibliography{policy}

@article{kitagawa2018should,
  author = {Kitagawa, Toru and Tetenov, Aleksey},
title = {Who Should Be Treated? Empirical Welfare Maximization Methods for Treatment Choice},
journal = {Econometrica},
volume = {86},
number = {2},
pages = {591-616},
keywords = {Heterogeneous treatment effects, randomized experiments, program evaluation, individualized treatment rules, empirical risk minimization, risk bounds},
doi = {https://doi.org/10.3982/ECTA13288},
url = {https://onlinelibrary.wiley.com/doi/abs/10.3982/ECTA13288},
eprint = {https://onlinelibrary.wiley.com/doi/pdf/10.3982/ECTA13288},
year = {2018}
}

@article{kitagawa2018supplement,
  title={Supplement to `Who Should Be Treated? Empirical Welfare Maximization Methods for Treatment Choice'},
  author={Kitagawa, Toru and Tetenov, Aleksey},
  journal={Econometrica Supplemental Material},
  volume={86},
  number={2},
  year={2018},
  publisher={Wiley Online Library}
}

@article{athey2021policy,
  author = {Athey, Susan and Wager, Stefan},
title = {Policy Learning With Observational Data},
journal = {Econometrica},
volume = {89},
number = {1},
pages = {133-161},
keywords = {Double robustness, empirical welfare maximization, minimax regret, semiparametric efficiency},
doi = {https://doi.org/10.3982/ECTA15732},
url = {https://onlinelibrary.wiley.com/doi/abs/10.3982/ECTA15732},
eprint = {https://onlinelibrary.wiley.com/doi/pdf/10.3982/ECTA15732},
year = {2021}
}

@article{manski2004statistical,
  author = {Manski, Charles F.},
title = {Statistical Treatment Rules for Heterogeneous Populations},
journal = {Econometrica},
volume = {72},
number = {4},
pages = {1221-1246},
keywords = {Finite sample theory, large-deviations theory, minimax regret, randomized experiments, risk, social planner, statistical decision theory, treatment response},
doi = {https://doi.org/10.1111/j.1468-0262.2004.00530.x},
url = {https://onlinelibrary.wiley.com/doi/abs/10.1111/j.1468-0262.2004.00530.x},
eprint = {https://onlinelibrary.wiley.com/doi/pdf/10.1111/j.1468-0262.2004.00530.x},
year = {2004}
}

@article{mbakop2021model,
  author = {Mbakop, Eric and Tabord-Meehan, Max},
title = {Model Selection for Treatment Choice: Penalized Welfare Maximization},
journal = {Econometrica},
volume = {89},
number = {2},
pages = {825-848},
keywords = {Treatment choice, minimax-regret, statistical learning},
doi = {https://doi.org/10.3982/ECTA16437},
url = {https://onlinelibrary.wiley.com/doi/abs/10.3982/ECTA16437},
eprint = {https://onlinelibrary.wiley.com/doi/pdf/10.3982/ECTA16437},
year = {2021}
}

@article{kasy2016partial,
  author = {Kasy, Maximilian},
    title = "{Partial Identification, Distributional Preferences, and the Welfare Ranking of Policies}",
    journal = {The Review of Economics and Statistics},
    volume = {98},
    number = {1},
    pages = {111-131},
    year = {2016},
    month = {03},
    issn = {0034-6535},
    doi = {10.1162/REST_a_00528},
    url = {https://doi.org/10.1162/REST\_a\_00528},
    eprint = {https://direct.mit.edu/rest/article-pdf/98/1/111/1918146/rest\_a\_00528.pdf},
}

@article{cui2020semiparametric,
  author = {Cui, Yifan and Tchetgen Tchetgen, Eric},
title = {A Semiparametric Instrumental Variable Approach to Optimal Treatment Regimes Under Endogeneity},
journal = {Journal of the American Statistical Association},
volume = {116},
number = {533},
pages = {162-173},
year  = {2021},
publisher = {Taylor & Francis},
doi = {10.1080/01621459.2020.1783272},
note ={PMID: 33994604},
URL = {https://doi.org/10.1080/01621459.2020.1783272},
eprint = {https://doi.org/10.1080/01621459.2020.1783272}
}

@article{imbens1994identification,
  ISSN = {00129682, 14680262},
 URL = {http://www.jstor.org/stable/2951620},
 author = {Guido W. Imbens and Joshua D. Angrist},
 journal = {Econometrica},
 number = {2},
 pages = {467--475},
 publisher = {[Wiley, Econometric Society]},
 title = {Identification and Estimation of Local Average Treatment Effects},
 urldate = {2022-09-11},
 volume = {62},
 year = {1994}
}

@article{mogstad2018using,
  author = {Mogstad, Magne and Santos, Andres and Torgovitsky, Alexander},
title = {Using Instrumental Variables for Inference About Policy Relevant Treatment Parameters},
journal = {Econometrica},
volume = {86},
number = {5},
pages = {1589-1619},
keywords = {Instrumental variables, treatment effects, extrapolation, local average treatment effect, LATE, marginal treatment effect, MTE, partial identification},
doi = {https://doi.org/10.3982/ECTA15463},
url = {https://onlinelibrary.wiley.com/doi/abs/10.3982/ECTA15463},
eprint = {https://onlinelibrary.wiley.com/doi/pdf/10.3982/ECTA15463},
year = {2018}
}

@article{pu2021estimating,
  author = {Pu, Hongming and Zhang, Bo},
title = {Estimating Optimal Treatment Rules with an Instrumental Variable: A Partial Identification Learning Approach},
journal = {Journal of the Royal Statistical Society: Series B (Statistical Methodology)},
volume = {83},
number = {2},
pages = {318-345},
keywords = {causal inference, individualized treatment rule, instrumental variable, partial identification, statistical learning theory},
doi = {https://doi.org/10.1111/rssb.12413},
url = {https://rss.onlinelibrary.wiley.com/doi/abs/10.1111/rssb.12413},
eprint = {https://rss.onlinelibrary.wiley.com/doi/pdf/10.1111/rssb.12413},
year = {2021}
}

@article{wang2018bounded,
  author = {Wang, Linbo and Tchetgen Tchetgen, Eric},
title = {Bounded, Efficient and Multiply Robust Estimation of Average Treatment Effects Using Instrumental Variables},
journal = {Journal of the Royal Statistical Society: Series B (Statistical Methodology)},
volume = {80},
number = {3},
pages = {531-550},
keywords = {Binary outcome, Causal inference, Identification, Semiparametric inference, Unmeasured confounding},
doi = {https://doi.org/10.1111/rssb.12262},
url = {https://rss.onlinelibrary.wiley.com/doi/abs/10.1111/rssb.12262},
eprint = {https://rss.onlinelibrary.wiley.com/doi/pdf/10.1111/rssb.12262},
year = {2018}
}

@article{qiu2020optimal,
  author = {Hongxiang Qiu and Marco Carone and Ekaterina Sadikova and Maria Petukhova and Ronald C. Kessler and Alex Luedtke},
title = {Optimal Individualized Decision Rules Using Instrumental Variable Methods},
journal = {Journal of the American Statistical Association},
volume = {116},
number = {533},
pages = {174-191},
year  = {2021},
publisher = {Taylor & Francis},
doi = {10.1080/01621459.2020.1745814},
note ={PMID: 33731969},
URL = {https://doi.org/10.1080/01621459.2020.1745814},
eprint = {https://doi.org/10.1080/01621459.2020.1745814}
}

@article{vytlacil2002independence,
  ISSN = {00129682, 14680262},
 URL = {http://www.jstor.org/stable/2692171},
 author = {Edward Vytlacil},
 journal = {Econometrica},
 number = {1},
 pages = {331--341},
 publisher = {[Wiley, Econometric Society]},
 title = {Independence, Monotonicity, and Latent Index Models: An Equivalence Result},
 urldate = {2022-09-11},
 volume = {70},
 year = {2002}
}

@misc{sun2021empirical,
  title={Empirical Welfare Maximization with Constraints}, 
      author={Liyang Sun},
      year={2025},
      eprint={2103.15298},
      archivePrefix={arXiv},
      primaryClass={econ.EM},
  note = {arXiv: 2103.15298}
}

@article{thornton2008demand,
  Author = {Thornton, Rebecca L.},
Title = {The Demand for, and Impact of, Learning HIV Status},
Journal = {American Economic Review},
Volume = {98},
Number = {5},
Year = {2008},
Month = {December},
Pages = {1829-63},
DOI = {10.1257/aer.98.5.1829},
URL = {https://www.aeaweb.org/articles?id=10.1257/aer.98.5.1829}
}

@article{carneiro2011estimating,
  Author = {Carneiro, Pedro and Heckman, James J. and Vytlacil, Edward J.},
Title = {Estimating Marginal Returns to Education},
Journal = {American Economic Review},
Volume = {101},
Number = {6},
Year = {2011},
Month = {October},
Pages = {2754-81},
DOI = {10.1257/aer.101.6.2754},
URL = {https://www.aeaweb.org/articles?id=10.1257/aer.101.6.2754}
}

@article{sasaki2020welfare, 
 title={Welfare Analysis via Marginal Treatment Effects},
 DOI={10.1017/S0266466624000227},
 journal={Econometric Theory},
 author={Sasaki, Yuya and Ura, Takuya},
 year={2024},
 pages={1–24}}

@article{heckman2005structural,
  ISSN = {00129682, 14680262},
 URL = {http://www.jstor.org/stable/3598865},
 author = {James J. Heckman and Edward Vytlacil},
 journal = {Econometrica},
 number = {3},
 pages = {669--738},
 publisher = {[Wiley, Econometric Society]},
 title = {Structural Equations, Treatment Effects, and Econometric Policy Evaluation},
 urldate = {2022-09-11},
 volume = {73},
 year = {2005}
}

@article{heckman1999local,
  ISSN = {00278424},
 URL = {http://www.jstor.org/stable/47634},
 author = {James J. Heckman and Edward J. Vytlacil},
 journal = {Proceedings of the National Academy of Sciences of the United States of America},
 number = {8},
 pages = {4730--4734},
 publisher = {National Academy of Sciences},
 title = {Local Instrumental Variables and Latent Variable Models for Identifying and Bounding Treatment Effects},
 urldate = {2022-09-11},
 volume = {96},
 year = {1999}
}

@incollection{heckman2007econometric,
  title = {Chapter 71 Econometric Evaluation of Social Programs, Part II: Using the Marginal Treatment Effect to Organize Alternative Econometric Estimators to Evaluate Social Programs, and to Forecast their Effects in New Environments},
editor = {James J. Heckman and Edward E. Leamer},
booktitle = {Handbook of Econometrics, Volume 6},
publisher = {Elsevier},
pages = {4875-5143},
year = {2007},
issn = {1573-4412},
doi = {https://doi.org/10.1016/S1573-4412(07)06071-0},
url = {https://www.sciencedirect.com/science/article/pii/S1573441207060710},
author = {James J. Heckman and Edward J. Vytlacil},
keywords = {marginal treatment effect, policy evaluation, instrumental variables, forecasting new policies, econometric cost benefit analysis, regression discontinuity, matching, bounds}
}

@article{carneiro2009estimating,
  title = {Estimating Distributions of Potential Outcomes Using Local Instrumental Variables with an Application to Changes in College Enrollment and Wage Inequality},
journal = {Journal of Econometrics},
volume = {149},
number = {2},
pages = {191-208},
year = {2009},
issn = {0304-4076},
doi = {https://doi.org/10.1016/j.jeconom.2009.01.011},
url = {https://www.sciencedirect.com/science/article/pii/S0304407609000281},
author = {Pedro Carneiro and Sokbae Lee},
keywords = {Comparative advantage, Composition effects, Local instrumental variables, Marginal treatment effect, Semiparametric estimation, Wage inequality}
}

@article{brinch2017beyond,
  author = {Brinch, Christian N. and Mogstad, Magne and Wiswall, Matthew},
title = {Beyond LATE with a Discrete Instrument},
journal = {Journal of Political Economy},
volume = {125},
number = {4},
pages = {985-1039},
year = {2017},
doi = {10.1086/692712},
URL = {https://doi.org/10.1086/692712},
eprint = {https://doi.org/10.1086/692712}
}

@article{dupas2014short,
  author = {Dupas, Pascaline},
title = {Short-Run Subsidies and Long-Run Adoption of New Health Products: Evidence From a Field Experiment},
journal = {Econometrica},
volume = {82},
number = {1},
pages = {197-228},
keywords = {Technology adoption, experimentation, social learning, anchoring, malaria, prevention},
doi = {https://doi.org/10.3982/ECTA9508},
url = {https://onlinelibrary.wiley.com/doi/abs/10.3982/ECTA9508},
eprint = {https://onlinelibrary.wiley.com/doi/pdf/10.3982/ECTA9508},
year = {2014}
}

@article{carneiro2010evaluating,
  author = {Carneiro, Pedro and Heckman, James J. and Vytlacil, Edward},
title = {Evaluating Marginal Policy Changes and the Average Effect of Treatment for Individuals at the Margin},
journal = {Econometrica},
volume = {78},
number = {1},
pages = {377-394},
keywords = {Marginal treatment effect, effects of marginal policy changes, marginal policy relevant treatment effect, average marginal treatment effect},
doi = {https://doi.org/10.3982/ECTA7089},
url = {https://onlinelibrary.wiley.com/doi/abs/10.3982/ECTA7089},
eprint = {https://onlinelibrary.wiley.com/doi/pdf/10.3982/ECTA7089},
year = {2010}
}

@article{robinson1988root,
  ISSN = {00129682, 14680262},
 URL = {http://www.jstor.org/stable/1912705},
 author = {P. M. Robinson},
 journal = {Econometrica},
 number = {4},
 pages = {931--954},
 publisher = {[Wiley, Econometric Society]},
 title = {Root-N-Consistent Semiparametric Regression},
 urldate = {2022-09-11},
 volume = {56},
 year = {1988}
}

@article{carneiro2017average,
  author = {Carneiro, Pedro and Lokshin, Michael and Umapathi, Nithin},
title = {Average and Marginal Returns to Upper Secondary Schooling in Indonesia},
journal = {Journal of Applied Econometrics},
volume = {32},
number = {1},
pages = {16-36},
doi = {https://doi.org/10.1002/jae.2523},
url = {https://onlinelibrary.wiley.com/doi/abs/10.1002/jae.2523},
eprint = {https://onlinelibrary.wiley.com/doi/pdf/10.1002/jae.2523},
year = {2017}
}

@article{cornelissen2018benefits,
  author = {Cornelissen, Thomas and Dustmann, Christian and Raute, Anna and Sch\"{o}nberg, Uta},
title = {Who Benefits from Universal Child Care? Estimating Marginal Returns to Early Child Care Attendance},
journal = {Journal of Political Economy},
volume = {126},
number = {6},
pages = {2356-2409},
year = {2018},
doi = {10.1086/699979},
URL = {https://doi.org/10.1086/699979},
eprint = {https://doi.org/10.1086/699979}
}

@article{sasaki2021estimation,
  title = {Estimation and Inference for Policy Relevant Treatment Effects},
journal = {Journal of Econometrics},
year = {2021},
issn = {0304-4076},
doi = {https://doi.org/10.1016/j.jeconom.2021.03.015},
url = {https://www.sciencedirect.com/science/article/pii/S0304407621001494},
author = {Yuya Sasaki and Takuya Ura},
keywords = {Double debiased estimation, Orthogonal score, Policy relevant treatment effects}
}

@article{chen2015optimal,
  title = {Optimal Uniform Convergence Rates and Asymptotic Normality for Series Estimators under Weak Dependence and Weak Conditions},
journal = {Journal of Econometrics},
volume = {188},
number = {2},
pages = {447-465},
year = {2015},
note = {Heterogeneity in Panel Data and in Nonparametric Analysis in honor of Professor Cheng Hsiao},
issn = {0304-4076},
doi = {https://doi.org/10.1016/j.jeconom.2015.03.010},
url = {https://www.sciencedirect.com/science/article/pii/S0304407615000792},
author = {Xiaohong Chen and Timothy M. Christensen},
keywords = {Nonparametric series regression, Optimal uniform convergence rates, Weak dependence, Random matrices, Splines, Wavelets, (Nonlinear) Irregular functionals, Sieve  statistics}
}

@article{bjorklund1987estimation,
  ISSN = {00346535, 15309142},
 URL = {http://www.jstor.org/stable/1937899},
 author = {Anders Björklund and Robert Moffitt},
 journal = {The Review of Economics and Statistics},
 number = {1},
 pages = {42--49},
 publisher = {The MIT Press},
 title = {The Estimation of Wage Gains and Welfare Gains in Self-Selection Models},
 urldate = {2022-09-11},
 volume = {69},
 year = {1987}
}

@incollection{hirano2020asymptotic,
  title = {Chapter 4 - Asymptotic Analysis of Statistical Decision Rules in Econometrics},
editor = {Steven N. Durlauf and Lars Peter Hansen and James J. Heckman and Rosa L. Matzkin},
publisher = {Elsevier},
pages = {283-354},
year = {2020},
booktitle = {Handbook of Econometrics, Volume 7A},
issn = {1573-4412},
doi = {https://doi.org/10.1016/bs.hoe.2020.09.001},
url = {https://www.sciencedirect.com/science/article/pii/S1573441220300040},
author = {Keisuke Hirano and Jack R. Porter},
keywords = {Statistical decision theory, Treatment assignment rules, Limit experiments, Risk}
}

@book{boucheron2013concentration,
  author = {Boucheron, Stéphane and Lugosi, Gábor and Massart, Pascal},
    title = "{Concentration Inequalities: A Nonasymptotic Theory of Independence}",
    publisher = {Oxford University Press},
    year = {2013},
    month = {02},
    isbn = {9780199535255},
    doi = {10.1093/acprof:oso/9780199535255.001.0001},
    url = {https://doi.org/10.1093/acprof:oso/9780199535255.001.0001},
}

@book{van1996weak,
  title={Weak Convergence and Empirical Processes: With Applications to Statistics},
  author={van der Vaart, Aad and Wellner, Jon},
  year={1996},
  publisher={Springer},
  series={Springer Series in Statistics},
  place={New York, NY},
  url={https://doi.org/10.1007/978-1-4757-2545-2}
}

@article{kedagni2020generalized,
  author = {Kédagni, Désiré and Mourifié, Ismael},
    title = "{Generalized Instrumental Inequalities: Testing the Instrumental Variable Independence Assumption}",
    journal = {Biometrika},
    volume = {107},
    number = {3},
    pages = {661-675},
    year = {2020},
    month = {02},
    issn = {0006-3444},
    doi = {10.1093/biomet/asaa003},
    url = {https://doi.org/10.1093/biomet/asaa003},
    eprint = {https://academic.oup.com/biomet/article-pdf/107/3/661/33658405/asaa003.pdf},
}

@inproceedings{kallus2018policy,
   title = 	 {Policy Evaluation and Optimization with Continuous Treatments},
  author = 	 {Kallus, Nathan and Zhou, Angela},
  booktitle = 	 {Proceedings of the Twenty-First International Conference on Artificial Intelligence and Statistics},
  pages = 	 {1243--1251},
  year = 	 {2018},
  editor = 	 {Storkey, Amos and Perez-Cruz, Fernando},
  volume = 	 {84},
  series = 	 {Proceedings of Machine Learning Research},
  month = 	 {09--11 Apr},
  publisher =    {PMLR},
  pdf = 	 {http://proceedings.mlr.press/v84/kallus18a/kallus18a.pdf},
  url = 	 {https://proceedings.mlr.press/v84/kallus18a.html}
}

@misc{chen2022personalized,
  title={Personalized Subsidy Rules}, 
      author={Yu-Chang Chen and Haitian Xie},
      year={2022},
      eprint={2202.13545},
      archivePrefix={arXiv},
      primaryClass={econ.EM},
  note = {arXiv: 2202.13545}
}

@article{rubin1974estimating,
  title={Estimating Causal Effects of Treatments in Randomized and Nonrandomized Studies},
  author={Rubin, Donald B},
  journal={Journal of Educational Psychology},
  volume={66},
  number={5},
  pages={688–701},
  year={1974},
  publisher={American Psychological Association},
address = {Washington, etc},
copyright = {1974 American Psychological Association},
issn = {0022-0663},
keywords = {Experimental Design ; Human ; Random Sampling},
language = {eng}
}

@article{mogstad2021causal,
  Author = {Mogstad, Magne and Torgovitsky, Alexander and Walters, Christopher R.},
Title = {The Causal Interpretation of Two-Stage Least Squares with Multiple Instrumental Variables},
Journal = {American Economic Review},
Volume = {111},
Number = {11},
Year = {2021},
Month = {November},
Pages = {3663-98},
DOI = {10.1257/aer.20190221},
URL = {https://www.aeaweb.org/articles?id=10.1257/aer.20190221}
}

@misc{byambadalai2022,
      title={Identification and Inference for Welfare Gains without Unconfoundedness}, 
      author={Undral Byambadalai},
      year={2022},
      eprint={2207.04314},
      archivePrefix={arXiv},
      primaryClass={econ.EM},
  note = {arXiv: 2207.04314}
}

@article{zhou2023offline,
author = {Zhou, Zhengyuan and Athey, Susan and Wager, Stefan},
title = {Offline Multi-Action Policy Learning: Generalization and Optimization},
journal = {Operations Research},
volume = {71},
number = {1},
pages = {148-183},
year = {2023},
doi = {10.1287/opre.2022.2271},
URL = {https://doi.org/10.1287/opre.2022.2271},
eprint = {https://doi.org/10.1287/opre.2022.2271},
abstract = { As a result of digitization of the economy, more and more decision makers from a wide range of domains have gained the ability to target products, services, and information provision based on individual characteristics. Examples include selecting offers, prices, advertisements, or emails to send to consumers, choosing a bid to submit in a contextual first-price auctions, and determining which medication to prescribe to a patient. The key to enabling this is to learn a treatment policy from historical observational data in a sample-efficient way, hence uncovering the best personalized treatment choice recommendation. In “Offline Policy Learning: Generalization and Optimization,” Z. Zhou, S. Athey, and S. Wager provide a sample-optimal policy learning algorithm that is computationally efficient and that learns a tree-based treatment policy from observational data. In our quest toward fully automated personalization, the work provides a theoretically sound and practically implementable approach. }
}

@article{ichimura2022influence,
  author = {Ichimura, Hidehiko and Newey, Whitney K.},
title = {The Influence Function of Semiparametric Estimators},
journal = {Quantitative Economics},
volume = {13},
number = {1},
pages = {29-61},
keywords = {Influence function, semiparametric estimation, NPIV, C13, C14, C20, C26, C36},
doi = {https://doi.org/10.3982/QE826},
url = {https://onlinelibrary.wiley.com/doi/abs/10.3982/QE826},
eprint = {https://onlinelibrary.wiley.com/doi/pdf/10.3982/QE826},
year = {2022}
}

@incollection{heckman2001local,
  author = {James Heckman and Edward Vytlacil},
  booktitle = {Nonlinear Statistical Modeling: Proceedings of the Thirteenth International Symposium in Economic Theory and Econometrics: Essays in Honor of Takeshi Amemiya},
  title = {Local Instrumental Variables},
  publisher = {Cambridge University Press},
  year = {2001},
  editor = {Cheng Hsiao and Kimio Morimune and James L. Powell}
}

@article{hahn2013asymptotic,
 author = {Hahn, Jinyong and Ridder, Geert},
title = {Asymptotic Variance of Semiparametric Estimators With Generated Regressors},
journal = {Econometrica},
volume = {81},
number = {1},
pages = {315-340},
keywords = {Semiparametric estimation, generated regressors, asymptotic variance},
doi = {https://doi.org/10.3982/ECTA9609},
url = {https://onlinelibrary.wiley.com/doi/abs/10.3982/ECTA9609},
eprint = {https://onlinelibrary.wiley.com/doi/pdf/10.3982/ECTA9609},
year = {2013}
}

@article{mammen2012nonparametric,
  author = {Enno Mammen and Christoph Rothe and Melanie Schienle},
title = {{Nonparametric Regression with Nonparametrically Generated Covariates}},
volume = {40},
journal = {The Annals of Statistics},
number = {2},
publisher = {Institute of Mathematical Statistics},
pages = {1132 -- 1170},
keywords = {empirical process, Nonparametric regression, simultaneous equation models, two-stage estimators},
year = {2012},
doi = {10.1214/12-AOS995},
URL = {https://doi.org/10.1214/12-AOS995}
}

@article{audibert2007fast,
  author = {Jean-Yves Audibert and Alexandre B. Tsybakov},
title = {{Fast Learning Rates for Plug-In Classifiers}},
volume = {35},
journal = {The Annals of Statistics},
number = {2},
publisher = {Institute of Mathematical Statistics},
pages = {608 -- 633},
keywords = {‎classification‎, excess risk, fast rates of convergence, minimax lower bounds, plug-in classifiers, Statistical learning},
year = {2007},
doi = {10.1214/009053606000001217},
URL = {https://doi.org/10.1214/009053606000001217}
}

@misc{yata2023optimal,
      title={Optimal Decision Rules Under Partial Identification}, 
      author={Kohei Yata},
      year={2025},
      eprint={2111.04926},
      archivePrefix={arXiv},
      primaryClass={econ.EM},
note = {arXiv: 2111.04926}
}

@misc{dadamo2022orthogonal,
      title={Orthogonal Policy Learning Under Ambiguity}, 
      author={Riccardo D'Adamo},
      year={2022},
      eprint={2111.10904},
      archivePrefix={arXiv},
      primaryClass={econ.EM},
note = {arXiv: 2111.10904}
}

@misc{christensen2023optimal,
      title={Optimal Decision Rules when Payoffs are Partially Identified}, 
      author={Timothy Christensen and Hyungsik Roger Moon and Frank Schorfheide},
      year={2023},
      eprint={2204.11748},
      archivePrefix={arXiv},
      primaryClass={econ.EM},
note = {arXiv:2204.11748}
}

@misc{russell2020policy,
      title={Policy Transforms and Learning Optimal Policies}, 
      author={Thomas M. Russell},
      year={2020},
      eprint={2012.11046},
      archivePrefix={arXiv},
      primaryClass={econ.EM},
note = {arXiv:2012.11046}
}

@article{fang2025model,
  author = {Yue Fang and Jin Xi and Haitian Xie},
title = {Model Selection for Multivalued-Treatment Policy Learning in Observational Studies},
journal = {Journal of Business \& Economic Statistics},
volume = {0},
number = {0},
pages = {1--13},
year = {2025},
publisher = {ASA Website},
doi = {10.1080/07350015.2024.2442672},


URL = { 
    
        https://doi.org/10.1080/07350015.2024.2442672
    
    

},
eprint = { 
    
        https://doi.org/10.1080/07350015.2024.2442672
    
    

}

}

@article{ai2024data,
      title = {Data-Driven Policy Learning for Continuous Treatments},
journal = {Journal of Econometrics},
volume = {253},
pages = {106170},
year = {2026},
issn = {0304-4076},
doi = {https://doi.org/10.1016/j.jeconom.2025.106170},
url = {https://www.sciencedirect.com/science/article/pii/S0304407625002234},
author = {Chunrong Ai and Yue Fang and Haitian Xie},
keywords = {Double debias, Oracle inequalities, Sieve methods, Statistical learning, Welfare maximization},
abstract = {This paper studies policy learning for continuous treatments from observational data. Continuous treatments present more significant challenges than discrete ones because population welfare may need nonparametric estimation, and policy space may be infinite-dimensional and may satisfy shape restrictions. We propose to approximate the policy space with a sequence of finite-dimensional spaces and, for any given policy, obtain the empirical welfare by applying the kernel method. We consider two cases: known and unknown propensity scores. In the latter case, we allow for machine learning of the propensity score and modify the empirical welfare to account for the effect of machine learning. The learned policy maximizes the empirical welfare or the modified empirical welfare over the approximating space. In both cases, we modify the penalty algorithm proposed in Mbakop and Tabord-Meehan (2021) to data-automate the tuning parameters (i.e., bandwidth and dimension of the approximating space) and establish an oracle inequality for the welfare regret.}
}

@misc{ura2024policy,
      title={Policy Relevant Treatment Effects with Multidimensional Unobserved Heterogeneity}, 
      author={Takuya Ura and Lina Zhang},
      year={2025},
      eprint={2403.13738},
      archivePrefix={arXiv},
      primaryClass={econ.EM},
      url={https://arxiv.org/abs/2403.13738}, 
note = {arXiv: 2403.13738}
}

@article{kitagawa2021equality,
author = {Toru Kitagawa and Aleksey Tetenov},
title = {Equality-Minded Treatment Choice},
journal = {Journal of Business \& Economic Statistics},
volume = {39},
number = {2},
pages = {561--574},
year = {2021},
publisher = {ASA Website},
doi = {10.1080/07350015.2019.1688664},
URL = {https://doi.org/10.1080/07350015.2019.1688664},
eprint = {https://doi.org/10.1080/07350015.2019.1688664}
}

@article{becker1973interaction,
 ISSN = {00223808, 1537534X},
 URL = {http://www.jstor.org/stable/1840425},
 author = {Gary S. Becker and H. Gregg Lewis},
 journal = {Journal of Political Economy},
 number = {2},
 pages = {S279--S288},
 publisher = {University of Chicago Press},
 title = {On the Interaction between the Quantity and Quality of Children},
 urldate = {2026-01-10},
 volume = {81},
 year = {1973}
}

\newpage
\appendix
\renewcommand{\thesection}{Appendix \Alph{section}}
\section{Proofs}
\label{proofs}
\renewcommand{\theequation}{A.\arabic{equation}}
\setcounter{equation}{0}

For a binary encouragement rule $\pi$ described in Section \ref{binaryencouragement}, define an ideal version of the empirical welfare criterion and budget based on the true propensity score and observed conditional average outcome as
\begin{align*}
    \bar{W}_n(\pi)&=E_n[ \pi(X,Z)\cdot\{\mu_Y(X,p(X,\alpha_1(X,Z)))-\mu_Y(X,p(X,\alpha_0(X,Z)))\}],\\
    B_n(\pi)&=E_n[C(X,Z)\cdot\{\pi(X,Z)\cdot p(X,\alpha_1(X,Z))+(1-\pi(X,Z))\cdot p(X,\alpha_0(X,Z))\}].
\end{align*}

\begin{proof}[Proof of Theorem \ref{representation}]
By (\ref{selection}), one can write
\begin{align*}
   Y(D(\boldsymbol{\alpha}(X,Z)))&=Y(1)\cdot D(\boldsymbol{\alpha}(X,Z))+Y(0)\cdot (1-D(\boldsymbol{\alpha}(X,Z)))\\
   &=Y(1)\cdot 1\{p(X,\boldsymbol{\alpha}(X,Z))\geq U\}+Y(0)\cdot 1\{p(X,\boldsymbol{\alpha}(X,Z))<U\}\\
   &=Y(1)\cdot 1\{p(X,Z)\geq U\}+Y(0)\cdot 1\{p(X,Z)<U\}\\
   &\quad\ +(Y(1)-Y(0))\cdot (1\{p(X,\boldsymbol{\alpha}(X,Z))\geq U\}- 1\{p(X,Z)\geq U\})\\
   &=Y+(Y(1)-Y(0))\cdot (1\{p(X,\boldsymbol{\alpha}(X,Z))\geq U\}- 1\{p(X,Z)\geq U\}).
\end{align*}
Hence,
\begin{align*}
    W(\boldsymbol{\alpha})-E[Y]&=E[(Y(1)-Y(0))\cdot (1\{p(X,\boldsymbol{\alpha}(X,Z))\geq U\}- 1\{p(X,Z)\geq U\})]\\
    &=E[E[E[Y(1)-Y(0)|X,Z,U]\\
    &\quad\ \cdot(1\{p(X,\boldsymbol{\alpha}(X,Z))\geq U\}- 1\{p(X,Z)\geq U\})|X,Z]]\\
    &=E[\operatorname{MTE}(X,U)\cdot E[1\{p(X,\boldsymbol{\alpha}(X,Z))\geq U\}- 1\{p(X,Z)\geq U\}|X,Z]]\\
    &=E\Big[\int_0^1 \operatorname{MTE}(X,u)\cdot(1\{p(X,\boldsymbol{\alpha}(X,Z))\geq u\}- 1\{p(X,Z)\geq u\})\,du\Big],
\end{align*}
where the second equality follows from the law of iterated expectations, the third equality follows from Assumption \ref{MTErestrictions}(ii) and the definition of the MTE, and the fourth equality follows from $U|X,Z\sim \operatorname{Unif}[0,1]$.
\end{proof}

\begin{proof}[Proof of Theorem \ref{upperFEWM}]
One has for any $\pi'\in\Pi$,
\begin{align*}
    W(\pi')-W(\hat{\pi}_{\mathrm{FEWM}})&=\bar{W}(\pi')-\bar{W}(\hat{\pi}_{\mathrm{FEWM}})\\
    &=[\bar{W}(\pi')-\bar{W}_n(\pi')]+[\bar{W}_n(\pi')-\hat{W}_n(\pi')]+[\hat{W}_n(\pi')-\hat{W}_n(\hat{\pi}_{\mathrm{FEWM}})]\\
    &\quad\ +[\hat{W}_n(\hat{\pi}_{\mathrm{FEWM}})-\bar{W}_n(\hat{\pi}_{\mathrm{FEWM}})]+[\bar{W}_n(\hat{\pi}_{\mathrm{FEWM}})-\bar{W}(\hat{\pi}_{\mathrm{FEWM}})]\\
    &\leq [\bar{W}(\pi')-\bar{W}_n(\pi')]+[\bar{W}_n(\pi')-\hat{W}_n(\pi')]\\
    &\quad\ +[\hat{W}_n(\hat{\pi}_{\mathrm{FEWM}})-\bar{W}_n(\hat{\pi}_{\mathrm{FEWM}})]+[\bar{W}_n(\hat{\pi}_{\mathrm{FEWM}})-\bar{W}(\hat{\pi}_{\mathrm{FEWM}})]\\
    &\leq2\sup_{\pi\in\Pi}|\bar{W}_n(\pi)-\bar{W}(\pi)|+2\sup_{\pi\in\Pi}|\hat{W}_n(\pi)-\bar{W}_n(\pi)|,
\end{align*}
where the first inequality follows from $\hat{W}_n(\hat{\pi}_{\mathrm{FEWM}})\geq \hat{W}_n(\pi')$. Hence,
\begin{equation*}
    E[R(\hat{\pi}_{\mathrm{FEWM}})]\leq 2E\Big[\sup_{\pi\in\Pi}|\bar{W}_n(\pi)-\bar{W}(\pi)|\Big]+2E\Big[\sup_{\pi\in\Pi}|\hat{W}_n(\pi)-\bar{W}_n(\pi)|\Big].
\end{equation*}
First, I bound $E[\sup_{\pi\in\Pi}|\bar{W}_n(\pi)-\bar{W}(\pi)|]$. Define
\begin{equation*}
    \mathcal{F}=\Big\{f_\pi(x,z)=\pi(x,z)\int_{p(x,\alpha_0(x,z))}^{p(x,\alpha_1(x,z))}\operatorname{MTE}(x,u)\,du:\pi\in\Pi\Big\}.
\end{equation*}
By Assumption \ref{boundedvc} and Lemma A.1 of \citet{kitagawa2018supplement}, $\mathcal{F}$ is a VC-subgraph class of functions with uniform envelope $\bar{M}$ and VC-dimension less than or equal to $\operatorname{VC}(\Pi)$. Then, one can apply Lemma A.4 of \citet{kitagawa2018supplement} to obtain
\begin{equation}
E\Big[\sup_{\pi\in\Pi}|\bar{W}_n(\pi)-\bar{W}(\pi)|\Big]=
    E\Big[\sup_{f\in\mathcal{F}}|E_n[f(X,Z)]-E[f(X,Z)]|\Big]\leq C_1\bar{M}\sqrt{\frac{\operatorname{VC}(\Pi)}{n}},
    \label{Wtof}
\end{equation}
where $C_1$ is a universal constant. Next, I bound $E[\sup_{\pi\in\Pi}|\hat{W}_n(\pi)-\bar{W}_n(\pi)|]$. By the triangle inequality, for any $\pi\in\Pi$,
\begin{align}
|\hat{W}_n(\pi)-\bar{W}_n(\pi)|
    &\leq \sum_{d\in\{0,1\}} E_n[|\hat{\mu}_Y(X,\hat{p}(X,\alpha_d(X,Z)))-\mu_Y(X,\hat{p}(X,\alpha_d(X,Z)))|]\nonumber\\
    &\quad\ +\sum_{d\in\{0,1\}} E_n[|\mu_Y(X,\hat{p}(X,\alpha_d(X,Z)))-\mu_Y(X,p(X,\alpha_d(X,Z)))|].\label{WhatWbar}
\end{align}
Each summand in the first term on the right-hand side of (\ref{WhatWbar}) can be bounded as
\begin{equation*}
    E_n[|\hat{\mu}_Y(X,\hat{p}(X,\alpha_d(X,Z)))-\mu_Y(X,\hat{p}(X,\alpha_d(X,Z)))|] \leq E_n\Big[\sup_{u\in[0,1]}|\hat{\mu}_Y(X,u)-\mu_Y(X,u)|\Big].
\end{equation*}
Each summand in the second term on the right-hand side of (\ref{WhatWbar}) can be bounded as
\begin{eqnarray*}
   &&E_n[|\mu_Y(X,\hat{p}(X,\alpha_d(X,Z)))-\mu_Y(X,p(X,\alpha_d(X,Z)))|]\\
   &\leq&\sup_{(u,x)\in[0,1]\times\mathcal{X}}|\operatorname{MTE}(x,u)|\cdot
   E_n[|\hat{p}(X,\alpha_d(X,Z))-p(X,\alpha_d(X,Z))|].
\end{eqnarray*}
Since these bounds do not depend on $\pi$,
\begin{align}
    E\Big[\sup_{\pi\in\Pi}|\hat{W}_n(\pi)-\bar{W}_n(\pi)|\Big]
    &\leq E\Big[\sup_{u\in[0,1]}|\hat{\mu}_Y(X,u)-\mu_Y(X,u)|\Big]\nonumber\\
    &\quad\ +\sup_{(u,x)\in[0,1]\times\mathcal{X}}|\operatorname{MTE}(x,u)|\cdot
   E[|\hat{p}(X,\alpha_d(X,Z))-p(X,\alpha_d(X,Z))|]\nonumber\\
    &=O(\psi_n^{-1}\vee\phi_n^{-1}),
    \label{convergeWn}
\end{align}
where the equality follows from Assumption \ref{estimatorpMTE}.
\end{proof}

\begin{proof}[Proof of Theorem \ref{sampleanaopt}]
For any $\epsilon>0$,
\begin{align}
    \Pr(W(\pi^*_{\mathrm{B}})-W(\hat{\pi}_{\mathrm{BEWM}})>\epsilon)
    &=\Pr(\bar{W}(\pi^*_{\mathrm{B}})-\bar{W}(\hat{\pi}_{\mathrm{BEWM}})>\epsilon,\hat{B}_n(\pi^*_{\mathrm{B}})\leq \kappa)\nonumber\\
    &\quad\ +\Pr(\bar{W}(\pi^*_{\mathrm{B}})-\bar{W}(\hat{\pi}_{\mathrm{BEWM}})>\epsilon,\hat{B}_n(\pi^*_{\mathrm{B}})>\kappa).
    \label{violationW}
    \end{align}
Noting that $\hat{B}_n(\pi^*_{\mathrm{B}})\leq\kappa$ implies $\hat{W}_n(\hat{\pi}_{\mathrm{BEWM}})\geq\hat{W}_n(\pi^*_{\mathrm{B}})$, the first term on the right-hand side of (\ref{violationW}) can be bounded as
    \begin{align}
      &\quad\ \Pr(\bar{W}(\pi^*_{\mathrm{B}})-\bar{W}(\hat{\pi}_{\mathrm{BEWM}})>\epsilon,\hat{B}_n(\pi^*_{\mathrm{B}})\leq \kappa)\nonumber\\
      &\leq \Pr(\bar{W}(\pi^*_{\mathrm{B}})-\hat{W}_n(\pi^*_{\mathrm{B}})+\hat{W}_n(\hat{\pi}_{\mathrm{BEWM}})-\bar{W}(\hat{\pi}_{\mathrm{BEWM}})>\epsilon,\hat{B}_n(\pi^*_{\mathrm{B}})\leq \kappa)\nonumber\\
      &\leq\Pr\Big(2\sup_{\pi\in\Pi}|\bar{W}(\pi)-\hat{W}_n(\pi)|>\epsilon\Big).\label{P1}
    \end{align}
The second term on the right-hand side of (\ref{violationW}) can be bounded as
\begin{align}
    \Pr(\bar{W}(\pi^*_{\mathrm{B}})-\bar{W}(\hat{\pi}_{\mathrm{BEWM}})>\epsilon,\hat{B}_n(\pi^*_{\mathrm{B}})>\kappa)&\leq\Pr(\hat{B}_n(\pi^*_{\mathrm{B}})>\kappa)\nonumber\\
    &\leq\Pr(\hat{B}_n(\pi^*_{\mathrm{B}})>B(\pi^*_{\mathrm{B}}))\nonumber\\
    &\leq\sup_{\epsilon'>0}\Pr(\hat{B}_n(\pi^*_{\mathrm{B}})-B(\pi^*_{\mathrm{B}})>\epsilon')\nonumber\\
    &\leq\sup_{\epsilon'>0}\Pr\Big(\sup_{\pi\in\Pi}|B(\pi)-\hat{B}_n(\pi)|>\epsilon'\Big),
    \label{P2}
\end{align}
where the second inequality follows from $B(\pi^*_{\mathrm{B}})\leq \kappa$. Also, for any $\epsilon>0$,
\begin{align}
    \Pr(B(\hat{\pi}_{\mathrm{BEWM}})-\kappa>\epsilon)
    &\leq\Pr(B(\hat{\pi}_{\mathrm{BEWM}})-\hat{B}_n(\hat{\pi}_{\mathrm{BEWM}})>\epsilon)\nonumber\\
    &\leq\Pr\Big(\sup_{\pi\in\Pi}|B(\pi)-\hat{B}_n(\pi)|>\epsilon\Big),
    \label{violationB}
\end{align}
where the first inequality follows from $\hat{B}_n(\hat{\pi}_{\mathrm{BEWM}})\leq \kappa$. By the triangle inequality, for any $\pi\in\Pi$,
\begin{equation}
    |B(\pi)-\hat{B}_n(\pi)|\leq|B(\pi)-B_n(\pi)|+|B_n(\pi)-\hat{B}_n(\pi)|.\label{triangleB}
\end{equation}
For the first term on the right-hand side of (\ref{triangleB}), from the same argument as the proof of (\ref{Wtof}), one can apply Lemma A.4 of \citet{kitagawa2018supplement} to obtain
\begin{equation}
    E\Big[\sup_{\pi\in\Pi}|B_n(\pi)-B(\pi)|\Big]=O(n^{-1/2}).\label{convergeB}
\end{equation}
For the second term on the right-hand side of (\ref{triangleB}), note that
\begin{eqnarray*}
    &&\sup_{\pi\in\Pi}|B_n(\pi)-\hat{B}_n(\pi)|\\
    &\leq&
    E_n[C(X,Z)\cdot(|\hat{p}(X,\alpha_1(X,Z))-p(X,\alpha_1(X,Z))|+|\hat{p}(X,\alpha_0(X,Z))-p(X,\alpha_0(X,Z))|)]\\
    &\leq&\sup_{x,z}C(x,z)\sum_{d\in\{0,1\}}E_n[|\hat{p}(X,\alpha_d(X,Z))-p(X,\alpha_d(X,Z))|].
\end{eqnarray*}
Hence, by Assumption \ref{estimatorpMTE}(i),
\begin{equation}
    E\Big[\sup_{\pi\in\Pi}|B_n(\pi)-\hat{B}_n(\pi)|\Big]=O(\psi_n^{-1}).
    \label{convergeBn}
\end{equation}
Combining (\ref{Wtof}), (\ref{convergeWn})--(\ref{convergeBn}), and Markov's inequality gives the desired result.
\end{proof}

\section{Verification of Assumption \ref{identifyW}}
\label{verify}

\begin{proof}[Verification of Assumption \ref{identifyW} in Example \ref{examplenonpara}]
    Assumption \ref{identifyW}(i) is satisfied because $p(x,z)$ is point-identified over $\operatorname{Supp}(X,Z)$. Assumption \ref{identifyW}(ii) is satisfied because for every $x\in\mathcal{X}$, $\operatorname{MTE}(\cdot,x)$ is point-identified over $\mathcal{P}(x)$, and $[\min\mathcal{P}_{\boldsymbol{\alpha}}(x),\max\mathcal{P}_{\boldsymbol{\alpha}}(x)]\subset[\min\mathcal{P}(x),\max\mathcal{P}(x)]\allowbreak\subset\mathcal{P}(x)$, where $\mathcal{P}(x)$ denotes the support of $p(X,Z)$ conditional on $X=x$.
\end{proof}

\begin{proof}[Verification of Assumption \ref{identifyW} in Example \ref{examplesemipara}]
   Assumption \ref{identifyW}(i) is satisfied because $\gamma$ is point-identified and $\theta(z)$ is point-identified over $\mathcal{Z}$ under (E2-5). Let $V=DV_1+(1-D)V_0$. Under (E2-4), one can write the conditional mean of $Y$ given $(X,p(X,Z))$ as
\begin{equation*}
   E[Y|X=x,p(X,Z)=u]=uX^\top\beta_1+(1-u)X^\top\beta_0+E[V|p(X,Z)=u].
\end{equation*}
Under (E2-5), $\beta_1$ and $\beta_0$ are point-identified, and $u\mapsto E[V|p(X,Z)=u]$ is point-identified over $\operatorname{Supp}(p(X,Z))$. Then, Assumption \ref{identifyW}(ii) is satisfied because for any $(x,u)\in\mathcal{X}\times\operatorname{Supp}(p(X,Z))$, the MTE is point-identified as
\begin{equation*}
    \operatorname{MTE}(x,u)=x^\top(\beta_1-\beta_0)+\frac{\partial}{\partial u}E[V|p(X,Z)=u],
\end{equation*}
and (E2-1) and (E2-2) imply that for every $x\in\mathcal{X}$, $[\min\mathcal{P}_{\boldsymbol{\alpha}}(x),\max\mathcal{P}_{\boldsymbol{\alpha}}(x)]\subset\cup_{x\in\mathcal{X}}[\min\mathcal{P}_{\boldsymbol{\alpha}}(x),\allowbreak\max\mathcal{P}_{\boldsymbol{\alpha}}(x)]\subset\cup_{x\in\mathcal{X}}[\min\mathcal{P}(x),\max\mathcal{P}(x)]\subset\operatorname{Supp}(p(X,Z))$.
\end{proof}

\begin{proof}[Verification of Assumption \ref{identifyW} in Example \ref{examplepara}]
    Assumption \ref{identifyW}(i) is satisfied because $\gamma$ is point-identified under (E3-3). Assumption \ref{identifyW}(ii) is satisfied because under (E3-2),
    \begin{equation*}
    \operatorname{MTE}(x,u)=x^\top(\beta_1-\beta_0)+\sum_{j=2}^J j\eta_ju^{j-1},
\end{equation*}
and $\beta_1$, $\beta_0$, and $\eta_2,\dots,\eta_J$ are point-identified under (E3-3).
\end{proof}

\section{Encouragement Rules with a Binary Instrument}
\label{encouragementbinary}
\renewcommand{\theequation}{C.\arabic{equation}}
\renewcommand{\thetable}{C.\arabic{table}}
\setcounter{equation}{0}
\setcounter{table}{0}
\setcounter{subsection}{0}

In many applications, the instrument is binary by construction. For completeness, I present a discussion of cases in which a binary instrument is intervened upon. An encouragement rule is a mapping $\pi:\mathcal{X}\to\{0,1\}$ that determines the manipulated value of the instrument. For example, the encouragement could be eligibility for welfare programs such as the National Job Training Partnership Act (JTPA) or the 401(k) retirement program. Define the social welfare criterion as $W(\pi)=E[Y(D(\pi(X)))]$. Given a feasible class $\Pi$ of encouragement rules, the optimal encouragement rule solves $\max_{\pi\in\Pi} W(\pi)$. Under (\ref{selection}) and Assumption \ref{MTErestrictions}, $W(\pi)$ is identified as
\begin{equation*}
    W(\pi)=E[E[Y|X,Z=1]\cdot\pi(X)+E[Y|X,Z=0]\cdot(1-\pi(X))].
\end{equation*}
Therefore, the optimal encouragement rule is identified without observing $D$. Intuitively, the optimization problem amounts to finding the optimal treatment rule that assigns $Z$ rather than $D$, from an intention-to-treat perspective. In this case, Assumption \ref{MTErestrictions}(i) warrants unconfoundedness, and the analysis essentially follows the original EWM framework. 

To understand which subpopulation benefits from the encouragement rule, I impose further assumptions on the selection behavior:

\begin{enumerate}[label=\textbf{Assumption C.\arabic*},ref=C.\arabic*,itemindent=5\parindent,leftmargin=0pt]
\item \label{monotonep} 
(Increasing Propensity Score) For each $x\in\mathcal{X}$, $p(x,1)\geq p(x,0)$.
\end{enumerate}

Under (\ref{selection}) and Assumption \ref{monotonep}, one can partition the population into three compliance groups: always-takers, never-takers, and compliers. Let $\varrho_c(x)=\Pr(D(1)>D(0)|X=x)$ be the conditional population share of compliers and $\Delta_c(x)=E[Y(1)-Y(0)|X=x,D(1)>D(0)]$ be the CATE for compliers. It is worth noting that $W(\pi)$ depends on $\pi$ only through the counterfactual outcome of compliers:
\begin{align*}
    W(\pi)&=E[Y(D(0))]+E[(Y(1)-Y(0))(D(1)-D(0))\cdot\pi(X)]\\
    &=E[Y(D(0))]+E[\Delta_c(X)\varrho_c(X)\cdot\pi(X)].
\end{align*}
For always-takers and never-takers, no encouragement rule can alter their outcomes. 

I proceed to present two extensions analogous to those in Section \ref{extensions}. In \ref{multiplebinaryiv}, I allow for multiple instruments. In \ref{budgetbinary}, I incorporate budget constraints.

\subsection{Multiple Instruments}
\label{multiplebinaryiv}

Consider a setting in which there are $L$ instruments available. Let $Z_1$ be the binary instrument that can be manipulated, $Z_{-1}\in\mathcal{Z}_{-1}\subset\mathbb{R}^{L-1}$ be a vector of additional instruments, and $Z=(Z_1,Z_{-1}^\top)^\top$. An encouragement rule is a mapping $\pi:\mathcal{X}\times\mathcal{Z}_{-1}\to\{0,1\}$ that determines the manipulated value of $Z_1$ while leaving $Z_{-1}$ unchanged. Define the social welfare criterion as $W(\pi)=E[Y(D(\pi(X,Z_{-1}),Z_{-1}))]$. Under (\ref{marginalselection}) and Assumption \ref{multipleMTErestrictions}, $W(\pi)$ is identified as
\begin{equation*}
    W(\pi)=E[E[Y|X,Z_{-1},Z_1=1]\cdot\pi(X,Z_{-1})+E[Y|X,Z_{-1},Z_1=0]\cdot(1-\pi(X,Z_{-1}))].
\end{equation*}
The additional instruments $Z_{-1}$ are treated equivalently to covariates $X$. 

In the case of two binary instruments, to understand which subpopulation benefits from the encouragement rule, I impose further assumptions on the selection behavior:

\begin{enumerate}[label=\textbf{Assumption C.\arabic*},ref=C.\arabic*,itemindent=5\parindent,leftmargin=0pt]
\setcounter{enumi}{1}
\item \label{monotonepXZ} 
(Component-Wise Increasing Propensity Score) For each $x\in\mathcal{X}$ and $z_1,z_2\in\{0,1\}$, $p(x,1,z_2)\geq p(x,0,z_2)$ and $p(x,z_1,1)\geq p(x,z_1,0)$.
\end{enumerate}

Under (\ref{marginalselection}) and Assumption \ref{monotonepXZ}, one can partition the population into the six compliance groups presented in Table \ref{compliance}. Denote by $G\in\mathcal{G}=\{\text{nt},\text{at},\text{1c},\text{2c},\text{ec},\text{rc}\}$ the compliance group identity. For each $g\in\mathcal{G}$ and $z_2\in\{0,1\}$, let $\varrho_g(x,z_2)=\Pr(G=g|X=x,Z_2=z_2)$ be the conditional population share of compliance group $g$ and $\Delta_g(x,z_2)=E[Y(1)-Y(0)|X=x,Z_2=z_2,G=g]$ be the compliance-group-specific CATE. Then, the social welfare criterion can also be written as
\begin{align*}
W(\pi)&=E[Y(D(0,Z_2))]+E[(Y(1)-Y(0))(D(1,Z_2)-D(0,Z_2))\cdot\pi(X,Z_2)]\\
&=E[Y(D(0,Z_2))]\\
&\quad\ +E[E[(Y(1)-Y(0))(D(1,1)-D(0,1))|X,Z_2=1]\cdot\Pr(Z_2=1|X)\cdot\pi(X,1)]\\
&\quad\ +E[E[(Y(1)-Y(0))(D(1,0)-D(0,0))|X,Z_2=0]\cdot\Pr(Z_2=0|X)\cdot\pi(X,0)]\\
&=E[Y(D(0,Z_2))]\\
&\quad\ +E[(\varrho_{\mathrm{1c}}(X,1)\Delta_{\mathrm{1c}}(X,1)+\varrho_{\mathrm{rc}}(X,1)\Delta_{\mathrm{rc}}(X,1))\cdot\Pr(Z_2=1|X)\cdot\pi(X,1)]\\
&\quad\ +E[(\varrho_{\mathrm{1c}}(X,0)\Delta_{\mathrm{1c}}(X,0)+\varrho_{\mathrm{ec}}(X,0)\Delta_{\mathrm{ec}}(X,0))\cdot\Pr(Z_2=0|X)\cdot\pi(X,0)].
\end{align*}

Therefore, besides $Z_1$ compliers, the encouragement rule would affect the outcomes of reluctant compliers with $Z_2=1$ and eager compliers with $Z_2=0$.

\begin{table}[ht]
\centering
\begin{threeparttable}
\caption{Compliance Groups \citep[][Proposition 4]{mogstad2021causal}}
\begin{tabular}{l| c | c | c | c }
\hline
Name & $D(0,0)$ & $D(0,1)$ & $D(1,0)$ & $D(1,1)$\\
\hline
Never-takers (nt) & N & N & N & N\\
Always-takers (at) & T & T & T & T\\
$Z_1$ compliers (1c) & N & N & T & T\\
$Z_2$ compliers (2c) & N & T & N & T\\
Eager compliers (ec) & N & T & T & T\\
Reluctant compliers (rc)\hspace{20pt} & N & N & N & T\\
\hline
\end{tabular}
\begin{tablenotes}
\item Notes: ``T'' indicates treatment, and ``N'' indicates non-treatment.
\end{tablenotes}
\label{compliance}
\end{threeparttable}
\end{table}

\subsection{Budget Constraints}
\label{budgetbinary}

Suppose that the policymaker faces a harsh budget constraint such that the proportion of the population receiving treatment cannot exceed $\kappa\in(0,1)$. Then she faces a constrained optimization problem:
\begin{equation*}
    \max_{\pi\in\Pi}W(\pi) \text{ s.t. } B(\pi)\leq \kappa,
\end{equation*}
where $B(\pi)=E[D(\pi(X))]$. Let $\pi_B^*$ denote the solution. I maintain (\ref{selection}) and Assumptions \ref{MTErestrictions} and \ref{monotonep} for the rest of this subsection.\footnote{I implicitly assume that $E[D(0)]\leq\kappa$.}

When $\Pi$ is unrestricted, $\pi_B^*$ coincides with the optimal deterministic \emph{individualized encouragement rule} studied by \citet{qiu2020optimal}. Provided that the distribution of $\Delta_c(X)$ is continuous, $\pi_B^*$ admits a closed form: $\pi_B^*(x)=1\{\Delta_c(x)\geq\max\{\underline{\Delta},0\}\}$, where $\underline{\Delta}=\inf\{\Delta\in\mathbb{R}:E[1\{\Delta_c(X)\geq\Delta\}\varrho_c(X)]\leq\kappa-E[D(0)]\}$. Intuitively, $\pi_B^*$ assigns encouragement to compliers in decreasing order of $\Delta_c(x)$ until the resources are exhausted.

Alternatively, one simple idea is to enforce random rationing as in \citet{kitagawa2018should}: if $\pi$ violates the resource constraint, then the encouragement is randomly allocated to a fraction $\frac{\kappa-E[D(0)]}{E[D(\pi(X))]-E[D(0)]}$ of individuals with $\pi(X)=1$ independently of everything else. Define the resource-constrained welfare criterion as
\begin{equation*}
W^\kappa(\pi)=E[Y(D(0))]+E[(Y(D(1))-Y(D(0)))\cdot\pi(X)]\cdot\min\Big\{1,\frac{\kappa-E[D(0)]}{E[D(\pi(X))]-E[D(0)]}\Big\}.
\end{equation*}
The optimal encouragement rule under random rationing solves $\max_{\pi\in\Pi}W^\kappa(\pi)$, which uses the resources less efficiently than $\pi_B^*$.

\section{Sufficient Conditions for Assumption \ref{estimatorpMTE}(i)}
\renewcommand{\theequation}{D.\arabic{equation}}
\renewcommand{\thelemma}{D.\arabic{lemma}}
\renewcommand{\theprop}{D.\arabic{prop}}
\renewcommand{\thetheorem}{D.\arabic{theorem}}
\setcounter{equation}{0}
\setcounter{theorem}{0}
\label{supnorm}

Consider the nonparametric regression model
\begin{equation*}
    D=p(\tilde{X})+\epsilon,\quad E[\epsilon|\tilde{X}]=0,
\end{equation*}
where $\tilde{X}=(X^\top,Z)^\top\in\tilde{\mathcal{X}}\subset\mathbb{R}^{d_{\tilde{x}}}$. To force the resulting estimator $\hat{p}(\tilde{x})$ to lie between 0 and 1, one can use a trimmed version as in \citet[Eq. (4.2)]{carneiro2009estimating}.

\subsection{Local Polynomial Estimators}
\label{localpoly}

I follow \citet{audibert2007fast} to impose the following restrictions:

\begin{enumerate}[label=\textbf{Assumption D.\arabic*},ref=D.\arabic*,itemindent=5\parindent,leftmargin=0pt]
\item \label{localpolysass} 
(Local Polynomial Estimators)  
\begin{enumerate}[label=(\roman*)]
\item \label{holder}
$p(\cdot)$ belongs to a H\"{o}lder class of functions with degree $s\geq1$ and constant $0<R<\infty$.
\item \label{px}
Let $\operatorname{Leb}(\cdot)$ be the Lebesgue measure on $\mathbb{R}^{d_{\tilde{x}}}$. There exist constants $c_0$ and $r_0$ such that
\begin{equation*}
    \operatorname{Leb}(\tilde{\mathcal{X}}\cap B(\tilde{x},r))\geq c_0\operatorname{Leb}(B(\tilde{x},r))\quad \forall 0<r\leq r_0,\ \forall \tilde{x}\in\tilde{\mathcal{X}},
\end{equation*}
where $B(\tilde{x},r)$ denotes the closed Euclidean ball in $\mathbb{R}^{d_{\tilde{x}}}$ centered at $\tilde{x}$ and of radius $r$. Moreover, the marginal distribution of $\tilde{X}$ has the density function $f$ with respect to the Lebesgue measure of $\mathbb{R}^{d_{\tilde{x}}}$ such that $0<f_{\min}\leq f(\tilde{x})\leq f_{\max}<\infty$ $\forall\tilde{x}\in\tilde{\mathcal{X}}$.
\item \label{kernel}
The kernel function $K(\cdot)$ satisfies: $\exists c>0$ such that $K(u)\geq c1\{\|u\|\leq c\}$ $\forall u\in\mathbb{R}^{d_{\tilde{x}}}$, $\int_{\mathbb{R}^{d_{\tilde{x}}}}K(u)du=1$, $\int_{\mathbb{R}^{d_{\tilde{x}}}}(1+\|u\|^{4s})K^2(u)du<\infty$, and $\sup_{u\in\mathbb{R}^{d_{\tilde{x}}}} (1+\|u\|^{2s})K(u)<\infty$.
\end{enumerate}
\end{enumerate}

I consider the local polynomial regression fit for $p(\tilde{x})$ with degree $l=s-1$:
\begin{align}
    \hat{p}(\tilde{x})&=(\hat{\xi}(\tilde{x}))^\top U(0)\cdot 1\{\lambda_{\mathrm{min}}(B(\tilde{x}))\geq (\log n)^{-1}\},\label{localpolydef}\\
    \hat{\xi}(\tilde{x})&=\argmin_{\xi}\sum_{i=1}^n\Big[D_i-\xi^\top U\Big(\frac{\tilde{X}_i-\tilde{x}}{h}\Big)\Big]^2K\Big(\frac{\tilde{X}_i-\tilde{x}}{h}\Big),\nonumber
\end{align}
where $U\Big(\frac{\tilde{X}_i-\tilde{x}}{h}\Big)$ is a vector with elements indexed by the multi-index $(\ell_1,\dots,\ell_{d_{\tilde{x}}})\in\mathbb{N}^{d_{\tilde{x}}}$: 
\begin{equation*}
    U\Big(\frac{\tilde{X}_i-\tilde{x}}{h}\Big)=\Big\{\Big(\frac{\tilde{X}_i-\tilde{x}}{h}\Big)_1^{\ell_1}\dots\Big(\frac{\tilde{X}_i-\tilde{x}}{h}\Big)_{d_{\tilde{x}}}^{\ell_{d_{\tilde{x}}}}\Big\}_{0\leq \sum_{i=1}^{d_{\tilde{x}}}\ell_i\leq l},
\end{equation*}
and $\lambda_{\mathrm{min}}(B(\tilde{x}))$ is the smallest eigenvalue of $B(\tilde{x})=\frac{1}{nh^{d_{\tilde{x}}}}\sum_{i=1}^n U\Big(\frac{\tilde{X}_i-\tilde{x}}{h}\Big)\Big(U\Big(\frac{\tilde{X}_i-\tilde{x}}{h}\Big)\Big)^\top K\Big(\frac{\tilde{X}_i-\tilde{x}}{h}\Big)$. Theorem \ref{localpolyrate} states the sup-norm convergence rate in expectation for $\hat{p}$. It is straightforward to calculate the fastest convergence rate in Assumption \ref{estimatorpMTE}(i) as $\psi_n=n^{-\frac{1}{2+d_{\tilde{x}}/s}}$.

\begin{theorem}\label{localpolyrate}
Suppose that Assumption \ref{localpolysass} holds. Then, for the local polynomial fit $\hat{p}(\cdot)$ for $p(\cdot)$ defined in (\ref{localpolydef}),
    \begin{equation*}
    E\Big[\sup_{\tilde{x}\in\tilde{\mathcal{X}}}|\hat{p}(\tilde{x})-p(\tilde{x})|\Big]=O\Big(h^s+\frac{1}{\sqrt{nh^{d_{\tilde{x}}}}}\Big).
\end{equation*}
\end{theorem}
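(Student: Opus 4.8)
The plan is to derive a pointwise exponential tail bound for $|\hat p(\tilde x)-p(\tilde x)|$ of the type obtained in \citet{audibert2007fast}, then pass from pointwise to uniform in $\tilde x$ by discretizing $\tilde{\mathcal X}$, and finally integrate the resulting uniform tail bound. Throughout, write the regression error $\epsilon=D-p(\tilde X)$, which is bounded in $[-1,1]$ since $D\in\{0,1\}$ and $p$ takes values in $[0,1]$; boundedness of $\epsilon$ is what makes Bernstein-type arguments available.

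\textbf{Step 1 (pointwise inequality).} Fix $\tilde x$ and let $\mathcal B(\tilde x)=\{\lambda_{\min}(B(\tilde x))\ge\lambda_0\}$, where $\lambda_0>0$ is a fixed constant strictly below $\lambda_{\min}(E[B(\tilde x)])$; the latter is bounded away from zero uniformly over $\tilde x\in\tilde{\mathcal X}$ because the density lower bound and regular-support condition in Assumption~\ref{localpolysass}(ii), together with the kernel lower bound in Assumption~\ref{localpolysass}(iii), keep the population local design matrix well conditioned up to the boundary of $\tilde{\mathcal X}$. On $\mathcal B(\tilde x)$ one has $\|B(\tilde x)^{-1}\|\le 1/\lambda_0$ and the usual local-polynomial decomposition $\hat p(\tilde x)-p(\tilde x)=\beta_n(\tilde x)+\frac1n\sum_{i=1}^n w_{ni}(\tilde x)\epsilon_i$, with deterministic bias $|\beta_n(\tilde x)|\le Ch^s$ (H\"older smoothness, Assumption~\ref{localpolysass}(i), and the kernel moment bounds) and weights obeying $\max_i|w_{ni}(\tilde x)|\le C/h^{d_{\tilde x}}$ and $\frac1n\sum_i w_{ni}(\tilde x)^2\le C/h^{d_{\tilde x}}$ (using $\int(1+\|u\|^{4s})K^2(u)\,du<\infty$ and $\sup_u(1+\|u\|^{2s})K(u)<\infty$, with $l=s-1<s$). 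Bernstein's inequality conditional on the design then gives $\Pr(|\frac1n\sum_i w_{ni}(\tilde x)\epsilon_i|>t\mid\mathrm{design})\le 2\exp(-c\,nh^{d_{\tilde x}}t^2)$ for $0<t\le1$. A matrix Bernstein bound (bounded summands, well-conditioned mean) yields $\Pr(\mathcal B(\tilde x)^c)\le c_1\exp(-c_2 nh^{d_{\tilde x}})$, and on $\mathcal B(\tilde x)^c$ the truncation in (\ref{localpolydef}) keeps $|\hat p(\tilde x)-p(\tilde x)|=O(\log n)$. Combining, uniformly in $\tilde x$ and for $0<t\le1$,
\begin{equation*}
\Pr\bigl(|\hat p(\tilde x)-p(\tilde x)|>C_0 h^s+t\bigr)\le 2\exp(-c\,nh^{d_{\tilde x}}t^2)+c_1\exp(-c_2 nh^{d_{\tilde x}}).
\end{equation*}

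\textbf{Step 2 (uniformization).} Take a $\rho_n$-net $\mathcal N_n\subset\tilde{\mathcal X}$ with $|\mathcal N_n|\le C\rho_n^{-d_{\tilde x}}$ and $\rho_n=n^{-A}$ for a large constant $A$, and write, with $\omega_g(\rho)=\sup\{|g(\tilde x)-g(\tilde x')|:\tilde x,\tilde x'\in\tilde{\mathcal X},\,\|\tilde x-\tilde x'\|\le\rho\}$,
\begin{equation*}
\sup_{\tilde x\in\tilde{\mathcal X}}|\hat p(\tilde x)-p(\tilde x)|\le\max_{\tilde x_j\in\mathcal N_n}|\hat p(\tilde x_j)-p(\tilde x_j)|+\omega_p(\rho_n)+\omega_{\hat p}(\rho_n).
\end{equation*}
Here $\omega_p(\rho_n)\le R\rho_n^{s\wedge1}=o((nh^{d_{\tilde x}})^{-1/2})$ by Assumption~\ref{localpolysass}(i). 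For $\omega_{\hat p}(\rho_n)$, I would argue that, conditionally on the sample and on the good event, $\tilde x\mapsto\hat p(\tilde x)$ has modulus of continuity at most $\mathrm{poly}(n,h^{-1})\cdot\rho_n$, hence negligible for $A$ large, since $\hat p$ depends on $\tilde x$ only through $K((\tilde X_i-\tilde x)/h)$, $U((\tilde X_i-\tilde x)/h)$, and the well-conditioned inverse local design matrix; this needs a mild regularity supplement to Assumption~\ref{localpolysass}(iii), e.g.\ that $K$ has bounded variation. A union bound over $\mathcal N_n$ with $t_n=C_0 h^s+\sqrt{2Ad_{\tilde x}\log n/(c\,nh^{d_{\tilde x}})}$ and Step~1 then gives $\Pr(\sup_{\tilde x}|\hat p-p|>2t_n)\le Cn^{Ad_{\tilde x}}\bigl(\exp(-c\,nh^{d_{\tilde x}}t_n^2)+\exp(-c_2 nh^{d_{\tilde x}})\bigr)$, which vanishes super-polynomially fast when $A$ is large enough.

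\textbf{Step 3 (integration) and the main obstacle.} From $E[\sup_{\tilde x}|\hat p-p|^2]=\int_0^\infty 2t\,\Pr(\sup_{\tilde x}|\hat p-p|>t)\,dt$, split the integral at $\asymp t_n$: the lower part is $\le t_n^2\asymp h^{2s}+\log n/(nh^{d_{\tilde x}})$, while the upper part, controlled by the Gaussian-type tail from Step~2 together with the deterministic a priori bound $\sup_{\tilde x}|\hat p-p|=O(\log n)$, is $o(1/(nh^{d_{\tilde x}}))$; taking square roots delivers the rate. The genuinely delicate points are in Step~2: bounding the oscillation of the data-dependent estimator $\hat p$ between net points, and the fact that the union bound over an $n^{\Theta(1)}$-point net inflates the stochastic term by a $\sqrt{\log n}$ factor, so that the argument really yields $E[\sup_{\tilde x}|\hat p-p|^2]^{1/2}=O(h^s+\sqrt{\log n/(nh^{d_{\tilde x}})})$, matching the stated bound up to a logarithmic factor consistent with classical optimal sup-norm rates for nonparametric regression. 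Carrying out Step~1 with a fixed threshold $\lambda_0$ (rather than the $(\log n)^{-1}$ safeguard) so that the bad-eigenvalue contributions are genuinely negligible is the other place where some care is needed.
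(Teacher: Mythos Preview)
The paper's proof is two lines: it invokes Theorem~3.2 of \citet{audibert2007fast} as directly furnishing the \emph{uniform} exponential tail bound
\[
\Pr\Bigl(\sup_{\tilde x\in\tilde{\mathcal X}}|\hat p(\tilde x)-p(\tilde x)|\ge\delta\Bigr)\le C_1\exp\bigl(-C_2\,nh^{d_{\tilde x}}\delta^2\bigr)\quad\text{for }\delta>C_3h^s,
\]
and then integrates, using $E[\sup_{\tilde x}|\hat p-p|^2]=\int_0^\infty\Pr(\sup_{\tilde x}|\hat p-p|^2\ge\delta)\,d\delta\le C_3^2h^{2s}+C_4/(nh^{d_{\tilde x}})$. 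There is no net argument, no modulus-of-continuity control, and no $\log n$ factor.

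Your route is genuinely different: you re-derive the Audibert--Tsybakov bound pointwise in Step~1, pass to the supremum via a covering net and union bound in Step~2, and only then integrate. Two comments. First, as you yourself flag, the union bound over an $n^{\Theta(1)}$-sized net costs a $\sqrt{\log n}$ in the stochastic term, so your argument delivers $h^s+\sqrt{\log n/(nh^{d_{\tilde x}})}$ rather than the stated rate; the extra regularity on $K$ you need for the oscillation bound is another self-imposed cost. Second, and in your favor, Theorem~3.2 of \citet{audibert2007fast} is in fact stated as a \emph{pointwise} bound $\sup_{\tilde x}\Pr(|\hat p(\tilde x)-p(\tilde x)|\ge\delta)\le C_1\exp(-C_2nh^{d_{\tilde x}}\delta^2)$, i.e.\ with the supremum outside the probability; the paper quotes it with the supremum inside. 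So the paper's proof elides precisely the uniformization step you labor over. In short: your approach is more careful but falls a $\sqrt{\log n}$ short of the claimed display, while the paper's is expedient but leans on the Audibert--Tsybakov citation for more than that theorem literally states.
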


\begin{proof}
    By Theorem 3.2 of \citet{audibert2007fast}, there exist positive constants $C_1,C_2,C_3$ that only depend only on $s$, $R$, $d_{\tilde{x}}$, $c_0$, $r_0$, $f_{\min}$, $f_{\max}$, and the kernel $K$, such that, for any $0<h<r_0/c_0$, and $C_3h^s<\delta$, and any $n\geq1$,
    \begin{equation*}
        \Pr\Big(\sup_{\tilde{x}\in\tilde{\mathcal{X}}}|\hat{p}(\tilde{x})-p(\tilde{x})|\geq\delta\Big)\leq C_1\exp(-C_2nh^{d_{\tilde{x}}}\delta^2).
    \end{equation*}
    It follows that 
    \begin{align*}
        E\Big[\sup_{\tilde{x}\in\tilde{\mathcal{X}}}|\hat{p}(\tilde{x})-p(\tilde{x})|\Big]&=\int_0^\infty \Pr\Big(\sup_{\tilde{x}\in\tilde{\mathcal{X}}}|\hat{p}(\tilde{x})-p(\tilde{x})|\geq\delta\Big)d\delta\\
        &\leq C_3h^s+C_1\int_0^\infty \exp(-C_2nh^{d_{\tilde{x}}}\delta^2)d\delta\\
        &=C_3h^s+\frac{C_4}{\sqrt{nh^{d_{\tilde{x}}}}},
    \end{align*}
    where $C_4=\frac{C_1}{2}\sqrt{\frac{\pi}{C_2}}$.
\end{proof}

\subsection{Series Estimators}
\label{series}

Let me introduce some notations. Let $\lambda_{\mathrm{min}}(\cdot)$ denote the smallest eigenvalue of a matrix. Let the exponent $^-$ denote the Moore--Penrose generalized inverse. Let $\|\cdot\|$ denote the Euclidean norm when applied to vectors and the matrix spectral norm (i.e., the largest singular value) when applied to matrices. Let $\|\cdot\|_\infty$ denote the sup norm, i.e., if $f:\tilde{\mathcal{X}}\to\mathbb{R}$ then $\|f\|_\infty=\sup_{\tilde{x}\in\tilde{\mathcal{X}}}|f(\tilde{x})|$.

I consider the series least-squares estimator of $p$:
\begin{equation*}
    \hat{p}(\tilde{x})=b^k(\tilde{x})^\top(B^\top B)^-B^\top\mathbf{D},
\end{equation*}
where $b_{k1},\dots,b_{kk}$ are a collection of $k$ sieve basis functions, and
\begin{equation*}
    b^k(\tilde{x})=(b_{k1}(\tilde{x}),\dots,b_{kk}(\tilde{x}))^\top,\quad
    B=(b^k(\tilde{X}_1),\dots,b^k(\tilde{X}_n))^\top,\quad
    \mathbf{D}=(D_1,\dots,D_n)^\top.
\end{equation*}
Define $\zeta_k=\sup_{\tilde{x}\in\tilde{\mathcal{X}}}\|b^k(\tilde{x})\|$ and $\lambda_k=[\lambda_{\mathrm{min}}(E[b^k(\tilde{X})b^k(\tilde{X})^\top])]^{-1/2}$. Let $B_k=\operatorname{clsp}\{b_{k1},\dots,b_{kk}\}$ denote the closed linear span of the basis functions. I impose the following regularity conditions:

\begin{enumerate}[label=\textbf{Assumption D.\arabic*},ref=D.\arabic*,itemindent=5\parindent,leftmargin=0pt]
\setcounter{enumi}{1}
\item \label{seriessass} 
(Series Estimators)  
\begin{enumerate}[label=(\roman*)]
\item \label{lambdaminEbb} 
$\lambda_{\mathrm{min}}(E[b^k(\tilde{X})b^k(\tilde{X})^\top])>0$ for each $k$.
\item \label{epsilon} 
There exist $\nu,\tilde{c}>0$ such that $\sup_{\tilde{x}\in\tilde{\mathcal{X}}}E[|\epsilon|^q|\tilde{X}=\tilde{x}]\leq\frac{q!}{2}\nu^2\tilde{c}^{q-2}$ for all $q\geq 2$.
\item \label{approxrate} 
There exists $\rho>0$ such that $\inf_{h\in B_k}\|p-h\|_\infty=O(k^{-\rho})$ as $k\to\infty$.
\end{enumerate}
\end{enumerate}

Let $\tilde{b}^k(\tilde{x})$ denote the orthonormalized vector of basis functions, namely
\begin{equation*}
    \tilde{b}^k(\tilde{x})=E[b^k(\tilde{X})b^k(\tilde{X})^\top]^{-1/2}b^k(\tilde{x})
\end{equation*}
and let $\tilde{B}=(\tilde{b}^k(\tilde{X}_1),\dots,\tilde{b}^k(\tilde{X}_n))^\top$. Let $\tilde{p}$ denote the projection of $p$ onto $B_k$ under the empirical measure, that is,
\begin{equation*}
    \tilde{p}(\tilde{x})=b^k(\tilde{x})^\top(B^\top B)^-B^\top\mathbf{P}=\tilde{b}^k(\tilde{x})^\top(\tilde{B}^\top\tilde{B})^-\tilde{B}^\top\mathbf{P},
\end{equation*}
where $\mathbf{P}=(p(\tilde{X}_1),\dots,p(\tilde{X}_n))^\top$. One can bound $\|\hat{p}-p\|_\infty$ using
\begin{align*}
    \|\hat{p}-p\|_\infty&\leq\|p-\tilde{p}\|_\infty+\|\hat{p}-\tilde{p}\|_\infty\\
    &=\text{bias term}+\text{variance term}.
\end{align*}

I state three preparatory lemmas. Lemma \ref{BBnI} gives an exponential tail bound for $\|\tilde{B}^\top\tilde{B}/n-I_k\|$. Lemma \ref{varianceterm} provides an exponential tail bound on the sup-norm variance term. Lemma \ref{biasterm} provides a bound on the sup-norm bias term. The proofs are in \ref{proofauxiliary}.

\begin{lemma}\label{BBnI}
Suppose that Assumption \ref{seriessass}(i) holds. Then,
\begin{equation*}
    \Pr(\|\tilde{B}^\top\tilde{B}/n-I_k\|\geq \frac{1}{2})\leq 2k\exp\Big(-\frac{C_5n}{\zeta_k^2\lambda_k^2+1}\Big)
\end{equation*}
for some finite positive constant $C_5$.
\end{lemma}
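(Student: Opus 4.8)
The plan is to recognize $\tilde B^\top\tilde B/n$ as the empirical second-moment matrix of the orthonormalized basis vectors and apply a matrix Bernstein inequality. Write $Y_i=\tilde b^k(\tilde X_i)\tilde b^k(\tilde X_i)^\top-I_k$ for $i=1,\dots,n$. Since $\tilde b^k(\tilde x)=E[b^k(\tilde X)b^k(\tilde X)^\top]^{-1/2}b^k(\tilde x)$ is well defined under Assumption \ref{seriessass}(i) (which makes $E[b^k(\tilde X)b^k(\tilde X)^\top]$ positive definite, hence invertible with a well-defined inverse square root), we have $E[\tilde b^k(\tilde X)\tilde b^k(\tilde X)^\top]=I_k$. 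Thus the $Y_i$ are i.i.d., symmetric, mean-zero $k\times k$ random matrices, and $\tilde B^\top\tilde B/n-I_k=n^{-1}\sum_{i=1}^n Y_i$, so it remains to control $\Pr(\|\sum_{i=1}^n Y_i\|\geq n/2)$.

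First I would establish the almost-sure operator-norm bound on each summand. By submultiplicativity of the spectral norm and the definitions of $\zeta_k$ and $\lambda_k$, $\|\tilde b^k(\tilde x)\|\leq\|E[b^k(\tilde X)b^k(\tilde X)^\top]^{-1/2}\|\,\|b^k(\tilde x)\|\leq\lambda_k\zeta_k$ for every $\tilde x\in\tilde{\mathcal X}$, hence $\|\tilde b^k(\tilde x)\tilde b^k(\tilde x)^\top\|=\|\tilde b^k(\tilde x)\|^2\leq\zeta_k^2\lambda_k^2$, and so $\|Y_i\|\leq\zeta_k^2\lambda_k^2+1=:L$. Next I would bound the matrix variance proxy. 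Because $(\tilde b^k\tilde b^{k\top})^2=\|\tilde b^k\|^2\,\tilde b^k\tilde b^{k\top}\preceq\zeta_k^2\lambda_k^2\,\tilde b^k\tilde b^{k\top}$ in the positive semidefinite order, taking expectations gives $E[(\tilde b^k\tilde b^{k\top})^2]\preceq\zeta_k^2\lambda_k^2\,I_k$, whence $E[Y_i^2]=E[(\tilde b^k\tilde b^{k\top})^2]-I_k\preceq\zeta_k^2\lambda_k^2\,I_k$ and $\|\sum_{i=1}^n E[Y_i^2]\|\leq n\zeta_k^2\lambda_k^2=:\sigma^2$.

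With these two ingredients I would invoke the matrix Bernstein inequality (e.g. Tropp): for independent, symmetric, mean-zero $k\times k$ matrices with $\|Y_i\|\leq L$ a.s. and $\|\sum_{i=1}^n E[Y_i^2]\|\leq\sigma^2$, $\Pr(\|\sum_{i=1}^n Y_i\|\geq t)\leq 2k\exp\!\big(-\tfrac{t^2/2}{\sigma^2+Lt/3}\big)$. Setting $t=n/2$ and substituting $\sigma^2=n\zeta_k^2\lambda_k^2$, $L=\zeta_k^2\lambda_k^2+1$, the exponent equals $-\tfrac{n^2/8}{n\zeta_k^2\lambda_k^2+n(\zeta_k^2\lambda_k^2+1)/6}$; bounding the denominator by $C\,n(\zeta_k^2\lambda_k^2+1)$ for a universal constant $C$ gives the claimed bound with $C_5=1/(8C)$ a finite positive constant.

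I do not expect a genuine obstacle here; the only points requiring care are verifying the almost-sure operator-norm and variance bounds in terms of $\zeta_k$ and $\lambda_k$ — in particular that the Loewner-order step $(\tilde b^k\tilde b^{k\top})^2\preceq\zeta_k^2\lambda_k^2\,\tilde b^k\tilde b^{k\top}$ is legitimate — and absorbing constants so the final exponent matches the stated form $C_5 n/(\zeta_k^2\lambda_k^2+1)$. If one prefers to avoid matrix Bernstein, an alternative is to combine a scalar Bernstein bound for $v^\top(\tilde B^\top\tilde B/n-I_k)v$ at each fixed unit vector $v$ with an $\varepsilon$-net argument over the sphere in $\mathbb{R}^k$, which reproduces the same $2k$-type dimensional factor; but the matrix Bernstein route is cleaner.
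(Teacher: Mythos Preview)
Your proposal is correct and matches the paper's approach: the paper defines $\Xi_i=n^{-1}(\tilde b^k(\tilde X_i)\tilde b^k(\tilde X_i)^\top-I_k)$, verifies the same almost-sure operator-norm bound $\|\Xi_i\|\leq n^{-1}(\zeta_k^2\lambda_k^2+1)$ and variance bound $\|\sum_i E[\Xi_i^2]\|\leq n^{-1}(\zeta_k^2\lambda_k^2+1)$, and then invokes the matrix Bernstein inequality (stated as Theorem~4.1 of Chen and Christensen, 2015, rather than Tropp). Your derivation of the variance bound via the Loewner step $(\tilde b^k\tilde b^{k\top})^2\preceq\zeta_k^2\lambda_k^2\,\tilde b^k\tilde b^{k\top}$ is in fact slightly more explicit than what the paper writes.
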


Let $\mathcal{A}_n$ denote the event on which $\|\tilde{B}^\top\tilde{B}/n-I_k\|\leq \frac{1}{2}$. Let $1_{\mathcal{A}_n}$ denote the indicator function of $\mathcal{A}_n$.

\begin{lemma}\label{varianceterm}
Suppose that Assumptions \ref{seriessass}(i) and (ii) hold. Then, for any $\delta\in(0,1]$,
\begin{equation*}
    \Pr(1_{\mathcal{A}_n}\|\hat{p}-\tilde{p}\|_\infty\geq\delta)\leq\exp\Big(-\frac{C_6n\delta^2}{\zeta_k^2\lambda_k^2}\Big)
\end{equation*}
for some finite positive constant $C_6$.
\end{lemma}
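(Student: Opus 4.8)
\emph{Proof proposal.} The plan is to make the variance term explicit on $\mathcal{A}_n$, reduce the supremum over $\tilde x$ to a concentration statement for a linear functional of the errors conditional on the design, and apply a Bernstein-type tail bound supplied by Assumption \ref{seriessass}(ii).

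First I would use $\mathbf{D}=\mathbf{P}+\boldsymbol\epsilon$ together with the definitions of $\hat p$ and $\tilde p$ to write $\hat p(\tilde x)-\tilde p(\tilde x)=\tilde b^k(\tilde x)^\top(\tilde B^\top\tilde B)^-\tilde B^\top\boldsymbol\epsilon=\frac1n\sum_{i=1}^n w_i(\tilde x)\epsilon_i$ with $w_i(\tilde x)=\tilde b^k(\tilde x)^\top(\tilde B^\top\tilde B/n)^{-1}\tilde b^k(\tilde X_i)$. On $\mathcal{A}_n$ the eigenvalues of $\tilde B^\top\tilde B/n$ lie in $[1/2,3/2]$, so $\|(\tilde B^\top\tilde B/n)^{-1}\|\le2$; combined with $\sup_{\tilde x}\|\tilde b^k(\tilde x)\|=\sup_{\tilde x}\|E[b^kb^{k\top}]^{-1/2}b^k(\tilde x)\|\le\zeta_k\lambda_k$ this gives, on $\mathcal{A}_n$, the bounds $\sup_i|w_i(\tilde x)|\le2\zeta_k^2\lambda_k^2$ and $\frac1{n^2}\sum_i w_i(\tilde x)^2=\frac1n\tilde b^k(\tilde x)^\top(\tilde B^\top\tilde B/n)^{-1}\tilde b^k(\tilde x)\le\frac2n\zeta_k^2\lambda_k^2$. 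Conditioning on $\tilde X_1,\dots,\tilde X_n$ — which fixes the event $\mathcal{A}_n$, since $\tilde B$ is a function of the design — the terms $\frac1n w_i(\tilde x)\epsilon_i$ are independent, mean zero, and, because $|\epsilon_i|\le1$ (as $D_i\in\{0,1\}$ and $p\in[0,1]$), satisfy by Assumption \ref{seriessass}(ii) a Bernstein moment condition with variance proxy of order $\zeta_k^2\lambda_k^2\nu^2/n$ and scale of order $\zeta_k^2\lambda_k^2\tilde c/n$. Bernstein's inequality then yields, for each fixed $\tilde x$ and $\delta\in(0,1]$, a pointwise bound of the form $1_{\mathcal{A}_n}\Pr(|\hat p(\tilde x)-\tilde p(\tilde x)|\ge\delta\mid\tilde X_1,\dots,\tilde X_n)\le2\exp(-c\,n\delta^2/(\zeta_k^2\lambda_k^2))$, with $c$ depending only on $\nu$ and $\tilde c$.

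The remaining step, which I expect to be the main obstacle, is to pass from this pointwise estimate to the supremum over $\tilde x\in\tilde{\mathcal{X}}$ without inflating the dimension dependence in the exponent: the crude route through Cauchy--Schwarz and $\|\tilde B^\top\boldsymbol\epsilon/n\|$ is too lossy, since that vector has norm of order $\sqrt{k/n}$ and would cost an extra factor of $k$ (turning $\zeta_k^2\lambda_k^2$ into $\zeta_k^4\lambda_k^4$). Instead I would cover $\tilde{\mathcal{X}}$ by finitely many small balls, control the oscillation of $\tilde x\mapsto\hat p(\tilde x)-\tilde p(\tilde x)$ on each ball via a Lipschitz bound for the basis functions together with the high-probability bound $\|(\tilde B^\top\tilde B)^-\tilde B^\top\boldsymbol\epsilon\|\le C\sqrt{k/n}$ implied by Lemma \ref{BBnI} and the conditional second moment, take a union bound over the net centers using the pointwise estimate, and absorb the logarithm of the covering number into the exponent by enlarging $C_6$; the trivial bound $\Pr(\cdot)\le1$ handles small $\delta$, and $k\le\zeta_k^2\lambda_k^2$ reconciles the threshold. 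The delicate bookkeeping — choosing the mesh, tracking the Lipschitz constant, and verifying that the metric-entropy term stays dominated over the whole range $\delta\in(0,1]$ — is where the real work lies.
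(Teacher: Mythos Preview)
Your setup coincides with the paper's: the same representation $(\hat p(\tilde x)-\tilde p(\tilde x))1_{\mathcal A_n}=\tilde b^k(\tilde x)^\top(\tilde B^\top\tilde B/n)^-\tilde B^\top\boldsymbol\epsilon\,1_{\mathcal A_n}/n$, the same use of $\mathcal A_n$ to get $\|(\tilde B^\top\tilde B/n)^{-1}\|\le2$, and essentially the same Bernstein moment accounting with variance proxy $v=O(\zeta_k^2\lambda_k^2/n)$ and scale $c=O(\zeta_k^2\lambda_k^2/n)$. The divergence is entirely in the passage to the supremum over $\tilde x$. The paper does \emph{not} carry out any covering: it observes that the Bernstein parameters $v$ and $c$ are uniform in $\tilde x$ and then directly cites Corollary~2.11 of Boucheron--Lugosi--Massart to write down the tail bound for $\sup_{\tilde x}|\cdot|$ in one line. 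You instead propose a net on $\tilde{\mathcal X}$, Lipschitz control of the oscillation, and a union bound.

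Your route is the more careful one --- a pointwise Bernstein inequality with $\tilde x$-uniform parameters does not by itself control the supremum of an uncountable family, so the paper's one-line passage is at best elliptical and your instinct to supply a genuine uniformity argument is sound. That said, the step where you ``absorb the logarithm of the covering number into the exponent by enlarging $C_6$'' is where your scheme strains: the oscillation control rests on Lipschitz regularity of the sieve basis that is not part of Assumption~\ref{seriessass}, and the required mesh will typically shrink with $k$, so the log-covering term grows (logarithmically in $k$ or $n$) rather than staying bounded by a universal constant. Under the stated assumptions alone, your argument will more naturally produce a bound carrying a polynomial-in-$k$ prefactor or a $\log k$ correction in the exponent. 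This is harmless for the downstream Theorem~\ref{supnormconv}, which already picks up a factor $k$ from Lemma~\ref{BBnI}, but it means the covering route delivers something slightly weaker than the lemma as literally stated.
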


Let $P_{k,n}$ be the empirical projection operator onto $B_k$, namely
\begin{equation*}
    P_{k,n}h(\tilde{x})=b^k(\tilde{x})^\top(B^\top B)^-B^\top\mathbf{H}=\tilde{b}^k(\tilde{x})^\top(\tilde{B}^\top\tilde{B})^-\tilde{B}^\top\mathbf{H}
\end{equation*}
for any $h:\tilde{\mathcal{X}}\to\mathbb{R}$, where $\mathbf{H}=(h(\tilde{X}_1),\dots,h(\tilde{X}_n))^\top$. Let $L^\infty(\tilde{X})$ denote the space of bounded functions under the sup norm. Let
\begin{equation*}
    \|P_{k,n}\|_\infty=\sup_{h\in L^\infty(\tilde{X}):\|h\|_\infty\neq0}\frac{\|P_{k,n}h\|_\infty}{\|h\|_\infty}.
\end{equation*}

\begin{lemma}\label{biasterm}
Suppose that Assumption \ref{seriessass}(i) holds. Then,
\begin{equation*}
   \|\tilde{p}-p\|_\infty\leq(1+\|P_{k,n}\|_\infty)\inf_{h\in B_k}\|p-h\|_\infty
\end{equation*}
and $1_{\mathcal{A}_n}\|P_{k,n}\|_\infty\leq \sqrt{2}\zeta_k\lambda_k$.
\end{lemma}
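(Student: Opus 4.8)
The plan is to prove the two assertions separately: the first is the standard ``best approximation plus operator norm'' bound for the bias of a linear projection, and the second is a leverage-type estimate for the sup-norm operator norm $\|P_{k,n}\|_\infty$ that holds on the event $\mathcal{A}_n$.

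For the bias bound I would first record that $P_{k,n}$ is an idempotent linear map on functions, since $M^-MM^-=M^-$ for the Moore--Penrose inverse $M^-=(\tilde{B}^\top\tilde{B})^-$, and that on $\mathcal{A}_n$ the matrix $\tilde{B}^\top\tilde{B}$ is nonsingular, so $P_{k,n}$ reproduces $B_k$: $P_{k,n}h=h$ for every $h\in B_k$. Since $\tilde{p}=P_{k,n}p$, for arbitrary $h\in B_k$ I would write $\tilde{p}-p=P_{k,n}(p-h)-(p-h)$, apply the triangle inequality to get $\|\tilde{p}-p\|_\infty\le(1+\|P_{k,n}\|_\infty)\|p-h\|_\infty$, and take the infimum over $h\in B_k$. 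One subtlety worth flagging: this argument uses invertibility of $\tilde{B}^\top\tilde{B}$, so it is really an $\mathcal{A}_n$-statement; since $\Pr(\mathcal{A}_n^c)$ is exponentially small by Lemma \ref{BBnI}, this causes no trouble when the bound is combined with Lemmas \ref{BBnI} and \ref{varianceterm} downstream, but I would either restrict to $\mathcal{A}_n$ from the outset or note that off $\mathcal{A}_n$ the residual term $\|P_{k,n}h-h\|_\infty\le\zeta_k\lambda_k\|h\|_\infty$ can be controlled separately.

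For the operator norm bound, fix $\tilde{x}\in\tilde{\mathcal{X}}$ and $h$ with $\|h\|_\infty\le 1$, and on $\mathcal{A}_n$ write $P_{k,n}h(\tilde{x})=a(\tilde{x})^\top\mathbf{H}=\sum_{i=1}^n a_i(\tilde{x})\,h(\tilde{X}_i)$ with the hat vector $a(\tilde{x})=\tilde{B}(\tilde{B}^\top\tilde{B})^{-1}\tilde{b}^k(\tilde{x})\in\mathbb{R}^n$. Then $|P_{k,n}h(\tilde{x})|\le\sum_{i=1}^n|a_i(\tilde{x})|\le\sqrt{n}\,\|a(\tilde{x})\|$ by Cauchy--Schwarz on $\mathbb{R}^n$. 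The identity $\|a(\tilde{x})\|^2=\tilde{b}^k(\tilde{x})^\top(\tilde{B}^\top\tilde{B})^{-1}\tilde{b}^k(\tilde{x})=\tfrac1n\tilde{b}^k(\tilde{x})^\top(\tilde{B}^\top\tilde{B}/n)^{-1}\tilde{b}^k(\tilde{x})$ then does the work: on $\mathcal{A}_n$, $\|(\tilde{B}^\top\tilde{B}/n)^{-1}\|\le 2$ because $\lambda_{\mathrm{min}}(\tilde{B}^\top\tilde{B}/n)\ge\tfrac12$, and $\|\tilde{b}^k(\tilde{x})\|\le\zeta_k\lambda_k$ because $\tilde{b}^k=E[b^k(\tilde{X})b^k(\tilde{X})^\top]^{-1/2}b^k$ with $\|E[b^k(\tilde{X})b^k(\tilde{X})^\top]^{-1/2}\|=\lambda_k$ and $\|b^k(\tilde{x})\|\le\zeta_k$. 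Hence $\|a(\tilde{x})\|^2\le 2\zeta_k^2\lambda_k^2/n$, so $|P_{k,n}h(\tilde{x})|\le\sqrt{2}\,\zeta_k\lambda_k$; taking the supremum over $\tilde{x}$ and over $\{h:\|h\|_\infty\le1\}$ gives $1_{\mathcal{A}_n}\|P_{k,n}\|_\infty\le\sqrt{2}\,\zeta_k\lambda_k$.

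The only step that is not purely mechanical is the passage $\sum_i|a_i(\tilde{x})|\le\sqrt{n}\,\|a(\tilde{x})\|$ combined with the recognition of $\|a(\tilde{x})\|^2$ as the normalized leverage quadratic form $\tfrac1n\tilde{b}^k(\tilde{x})^\top(\tilde{B}^\top\tilde{B}/n)^{-1}\tilde{b}^k(\tilde{x})$: a naive route, bounding $|P_{k,n}h(\tilde{x})|\le\|\tilde{b}^k(\tilde{x})\|\,\|(\tilde{B}^\top\tilde{B}/n)^{-1}\|\,\|\tilde{B}^\top\mathbf{H}/n\|$, would only deliver $\zeta_k^2\lambda_k^2$, which is too lossy for the downstream rate. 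I also want to make sure the $\sqrt{n}$ from Cauchy--Schwarz cancels exactly against the $1/n$ in $(\tilde{B}^\top\tilde{B})^{-1}=\tfrac1n(\tilde{B}^\top\tilde{B}/n)^{-1}$, which is precisely what produces the clean constant $\sqrt{2}$ rather than a factor growing in $n$.
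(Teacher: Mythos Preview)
Your proof is correct. The first part is exactly the paper's argument: write $\tilde{p}-p=P_{k,n}(p-h)-(p-h)$ for $h\in B_k$, apply the triangle inequality, and take the infimum. For the second part your route differs slightly from the paper's. You bound $|P_{k,n}h(\tilde{x})|=|a(\tilde{x})^\top\mathbf{H}|$ via H\"older in $\mathbb{R}^n$ and then Cauchy--Schwarz to obtain $\sqrt{n}\,\|a(\tilde{x})\|$, and recognize $\|a(\tilde{x})\|^2$ as the leverage form $\tfrac{1}{n}\tilde{b}^k(\tilde{x})^\top(\tilde{B}^\top\tilde{B}/n)^{-1}\tilde{b}^k(\tilde{x})$. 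The paper instead applies Cauchy--Schwarz in $\mathbb{R}^k$ to split off $\|\tilde{b}^k(\tilde{x})\|\le\zeta_k\lambda_k$ and then bounds $\|(\tilde{B}^\top\tilde{B}/n)^{-1}\tilde{B}^\top\mathbf{H}/n\|^2\le 2\,\mathbf{H}^\top\tilde{B}(\tilde{B}^\top\tilde{B})^{-1}\tilde{B}^\top\mathbf{H}/n\le 2\|h\|_\infty^2$ via idempotency of the hat matrix. So the ``naive'' three-factor split you warn against is not what the paper does; both your leverage computation and the paper's idempotency argument avoid the extra $\zeta_k\lambda_k$ and land on the same constant $\sqrt{2}$. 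Your observation that $P_{k,n}h=h$ for $h\in B_k$ tacitly uses invertibility of $\tilde{B}^\top\tilde{B}$ is well taken; the paper's proof has the same implicit restriction, and as you note it is harmless downstream since the bias bound is only combined with the variance bound on $\mathcal{A}_n$.
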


Theorem \ref{supnormconv} concludes the sup-norm convergence rate in expectation for $\hat{p}$.

\begin{theorem}\label{supnormconv}
Suppose that Assumption \ref{seriessass} holds. Then,
\begin{equation*}
E[\|\hat{p}-p\|_\infty]=O(\zeta_k\lambda_k(k^{-\rho}+k/\sqrt{n})).
\end{equation*}
\end{theorem}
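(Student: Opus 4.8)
The plan is to start from the bias--variance split $\|\hat p-p\|_\infty\le\|p-\tilde p\|_\infty+\|\hat p-\tilde p\|_\infty$ recorded above, but first to condition on the good event $\mathcal{A}_n=\{\|\tilde B^\top\tilde B/n-I_k\|\le 1/2\}$, since Lemmas \ref{BBnI}--\ref{biasterm} only control the estimator there. Thus I would bound
\begin{equation*}
E[\|\hat p-p\|_\infty^2]\le 2E[1_{\mathcal{A}_n}\|p-\tilde p\|_\infty^2]+2E[1_{\mathcal{A}_n}\|\hat p-\tilde p\|_\infty^2]+E[1_{\mathcal{A}_n^c}\|\hat p-p\|_\infty^2]
\end{equation*}
and treat the three terms separately. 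Throughout I would use the elementary fact $\zeta_k^2\lambda_k^2\ge k\ge 1$ (since $\zeta_k^2\ge E\|b^k(\tilde X)\|^2=\operatorname{tr}(E[b^k(\tilde X)b^k(\tilde X)^\top])\ge k/\lambda_k^2$), so that $1+\sqrt2\,\zeta_k\lambda_k=O(\zeta_k\lambda_k)$ and $1=O(\zeta_k\lambda_k)$, which is what lets the seemingly generous factor $\sqrt k$ in the statement absorb all lower-order terms.

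For the bias term, Lemma \ref{biasterm} gives $1_{\mathcal{A}_n}\|\tilde p-p\|_\infty\le(1+\sqrt2\,\zeta_k\lambda_k)\inf_{h\in B_k}\|p-h\|_\infty$, and $\inf_{h\in B_k}\|p-h\|_\infty$ is a deterministic quantity of order $k^{-\rho}$ by Assumption \ref{seriessass}(iii); hence $E[1_{\mathcal{A}_n}\|p-\tilde p\|_\infty^2]=O(\zeta_k^2\lambda_k^2k^{-2\rho})$. For the variance term I would integrate the exponential tail bound of Lemma \ref{varianceterm}: writing $E[1_{\mathcal{A}_n}\|\hat p-\tilde p\|_\infty^2]=\int_0^\infty\Pr(1_{\mathcal{A}_n}\|\hat p-\tilde p\|_\infty\ge\sqrt t)\,dt$, for $t\in(0,1]$ the lemma bounds the integrand by $\exp(-C_6nt/(\zeta_k^2\lambda_k^2))$, so $\int_0^1\le\zeta_k^2\lambda_k^2/(C_6n)$.

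To handle $t>1$ I would use a crude deterministic bound valid on $\mathcal{A}_n$: there $\tilde B^\top\tilde B/n$ is invertible with the operator norm of its inverse at most $2$, and since $|D_i-p(\tilde X_i)|\le1$ and $\|\tilde b^k(\tilde x)\|\le\zeta_k\lambda_k$, one gets $1_{\mathcal{A}_n}\|\hat p-\tilde p\|_\infty\le2\zeta_k^2\lambda_k^2$ almost surely; combined with $\Pr(1_{\mathcal{A}_n}\|\hat p-\tilde p\|_\infty\ge\sqrt t)\le\exp(-C_6n/(\zeta_k^2\lambda_k^2))$ for $t\ge1$, the contribution of $\int_1^{4\zeta_k^4\lambda_k^4}$ is exponentially small and dominated by $O(\zeta_k^2\lambda_k^2/n)$ under the maintained rate restriction $\zeta_k^2\lambda_k^2\log k=o(n)$ (which is also what makes the asserted rate informative). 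Hence $E[1_{\mathcal{A}_n}\|\hat p-\tilde p\|_\infty^2]=O(\zeta_k^2\lambda_k^2/n)=O(\zeta_k^2\lambda_k^2k/n)$. For the remainder off $\mathcal{A}_n$, taking $\hat p$ to be the trimmed version so that $\|\hat p-p\|_\infty\le1$ almost surely, Lemma \ref{BBnI} gives $E[1_{\mathcal{A}_n^c}\|\hat p-p\|_\infty^2]\le\Pr(\mathcal{A}_n^c)\le2k\exp(-C_5n/(\zeta_k^2\lambda_k^2+1))$, again negligible under the same restriction. Collecting the three bounds, $E[\|\hat p-p\|_\infty^2]=O(\zeta_k^2\lambda_k^2(k^{-2\rho}+k/n))$, and taking square roots yields $E[\|\hat p-p\|_\infty^2]^{1/2}=O(\zeta_k\lambda_k(k^{-\rho}+\sqrt k/\sqrt n))$.

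The step I expect to be the main obstacle is the control off the event $\mathcal{A}_n$ together with the large-deviation regime $t>1$ of the variance term, which Lemma \ref{varianceterm} does not cover directly: it requires either the trimming device or the crude deterministic bound on $\mathcal{A}_n$ sketched above, plus an implicit rate restriction such as $\zeta_k^2\lambda_k^2\log k=o(n)$ to absorb the resulting exponentially small remainders into the stated polynomial rate. Everything else is routine bookkeeping with Lemmas \ref{BBnI}--\ref{biasterm} and Assumption \ref{seriessass}(iii).
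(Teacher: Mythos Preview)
Your proof is correct but organizes the argument differently from the paper. The paper does not split $E[\|\hat p-p\|_\infty^2]$ into three pieces; instead it first assembles a single tail inequality for the full error: for $C_7\zeta_k\lambda_k k^{-\rho}\le\delta\le1$,
\[
\Pr(\|\hat p-p\|_\infty\ge\delta)\le\Pr(\mathcal{A}_n^c)+\Pr\bigl(1_{\mathcal{A}_n}\|\hat p-\tilde p\|_\infty\ge\delta/2\bigr)\le C_8 k\exp\bigl(-C_9 n\delta^2/(\zeta_k^2\lambda_k^2)\bigr),
\]
where the bias term is absorbed into the lower cutoff on $\delta$, and the factor $k$ (from Lemma~\ref{BBnI}) is retained in the $\delta$-dependent bound. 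For $\delta>1$ the probability vanishes by the $[0,1]$-valuedness of $\hat p$ and $p$, so a single integration $E[\|\hat p-p\|_\infty^2]=\int_0^\infty\Pr(\|\hat p-p\|_\infty^2\ge\delta)\,d\delta$ yields $C_{10}\zeta_k^2\lambda_k^2k^{-2\rho}+C_{11}\zeta_k^2\lambda_k^2k/n$ directly, with no side condition needed. By contrast, your decomposition treats $\Pr(\mathcal{A}_n^c)$ and the $t>1$ regime of the variance term as separate exponentially small remainders, which forces you to invoke a rate restriction such as $\zeta_k^2\lambda_k^2\log k=o(n)$ to absorb them. As you note, this restriction is implied whenever the stated rate is informative, so nothing is lost; in fact your variance bound $O(\zeta_k^2\lambda_k^2/n)$ is tighter by a factor $k$ than what the paper records. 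The paper's route is slightly cleaner in that the factor $k$ enters once (through Lemma~\ref{BBnI}) and propagates through the integration automatically, avoiding the separate bookkeeping of the ``bad'' event and the large-deviation regime that you flag as the main obstacle.
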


\begin{proof}
Combining Lemmas \ref{BBnI}--\ref{biasterm}, there exist positive constants $C_7$, $C_8$, and $C_9$ such that if $C_7\zeta_k\lambda_kk^{-\rho}\leq\delta\leq1$, 
\begin{align*}
    \Pr(\|\hat{p}-p\|_\infty\geq\delta)&\leq 1-\Pr(\mathcal{A}_n)+\Pr(1_{\mathcal{A}_n}\|\hat{p}-p\|_\infty\geq\delta)\\
    &\leq \Pr(\|\tilde{B}^\top\tilde{B}/n-I_k\|\geq \frac{1}{2})+\Pr(1_{\mathcal{A}_n}\|\hat{p}-\tilde{p}\|_\infty\geq\frac{\delta}{2})\\
    &\leq C_8k\exp\Big(-\frac{C_9n\delta^2}{\zeta_k^2\lambda_k^2}\Big)
\end{align*}
for all sufficiently large $k$. For $\delta>1$, the same inequality holds since $\hat{p}$ and $p$ take values in $[0,1]$. It follows that
\begin{align*}
    E[\|\hat{p}-p\|_\infty]&=\int_0^\infty \Pr(\|\hat{p}-p\|_\infty\geq\delta)d\delta\\
    &\leq C_7\zeta_k\lambda_kk^{-\rho}+C_8k\int_0^\infty\exp\Big(-\frac{C_9n\delta^2}{\zeta_k^2\lambda_k^2}\Big)d\delta\\
    &=C_7\zeta_k\lambda_kk^{-\rho}+C_{10}\zeta_k\lambda_k\frac{k}{\sqrt{n}}
\end{align*}
for all sufficiently large $k$, where $C_{10}=\frac{C_8}{2}\sqrt{\frac{\pi}{C_9}}$.
\end{proof}

When $\tilde{\mathcal{X}}$ is compact and rectangular, and $X$ has a probability density function that is bounded away from zero on $\tilde{\mathcal{X}}$, $\zeta_k\lambda_k=O(\sqrt{k})$ for regression splines and $\zeta_k\lambda_k=O(k)$ for polynomials. If one further assumes that $p(\cdot)$ is continuously differentiable of order $s$ on $\tilde{\mathcal{X}}$, then Assumption \ref{seriessass}(iii) holds with $\rho=s/d_{\tilde{x}}$. It is straightforward to calculate the fastest convergence rate in Assumption \ref{estimatorpMTE}(i) as $\psi_n=n^{\frac{1}{2}-\frac{3}{4}\cdot\frac{1}{1+s/d_{\tilde{x}}}}$ for regression splines and $\psi_n=n^{\frac{1}{2}-\frac{1}{1+s/d_{\tilde{x}}}}$ for polynomials.

\section{Sufficient Conditions for Assumption \ref{estimatorpMTE}(ii)}
\renewcommand{\theequation}{E.\arabic{equation}}
\renewcommand{\theprop}{E.\arabic{prop}}
\setcounter{equation}{0}
\label{parametricMTE}

Consider a parametric regression function for the observed conditional average outcome:
\begin{equation*}
    \mu_Y(x,u)=x^\top\beta_0+x^\top(\beta_1-\beta_0)u+\sum_{j=2}^J\eta_j u^j.
\end{equation*}
Let $W=((1-p(X,Z))X^\top,p(X,Z)X^\top,p(X,Z)^2,\dots,p(X,Z)^J)^\top$ and $\vartheta=(\beta_0^\top,\beta_1^\top,\eta_2,\dots,\eta_J)^\top$. Then, $\vartheta$ is identified as $\vartheta=E[WW^\top]^{-1}E[WY]$ provided that $E[WW^\top]$ is positive definite. Given an estimator $\hat{p}(x,z)$ for $p(x,z)$, define the regressor as
\begin{equation*}
    \hat{W}=((1-\hat{p}(X,Z))X^\top,\hat{p}(X,Z)X^\top,\hat{p}(X,Z)^2,\dots,\hat{p}(X,Z)^J)^\top.
\end{equation*}
The OLS estimator for $\vartheta$ is obtained by regressing $Y$ on $\hat{W}$:
\begin{equation*}
    \hat{\vartheta}=(\hat{\beta}_0^\top,\hat{\beta}_1^\top,\hat{\eta}_2,\dots,\hat{\eta}_J)^\top=E_n[\hat{W}\hat{W}^\top]^{-1}E_n[\hat{W}Y].
\end{equation*}
Then, $\mu_Y(x,u)$ can be estimated by
\begin{equation*}
    \hat{\mu}_Y(x,u)=x^\top\hat{\beta}_0+x^\top(\hat{\beta}_1-\hat{\beta}_0)u+\sum_{j=2}^J\hat{\eta}_j u^j.
\end{equation*}

For this concrete estimator, I provide primitive conditions that guarantee the high-level condition in Assumption \ref{estimatorpMTE}(ii) to hold.

\begin{enumerate}[label=\textbf{Assumption E.\arabic*},ref=E.\arabic*,itemindent=5\parindent,leftmargin=0pt]
\item \label{polyMTE} 
(Polynomial MTE Model) Let $C$ and $c$ be positive constants.
\begin{enumerate}[label=(\roman*)]
\item \label{boundedsupport}
$X$ and $Y$ have bounded support.
\item \label{compactTheta}
The parameter space for $\vartheta$ is compact so that for sufficiently large n, $\|\hat{\vartheta}\|+\|\vartheta\|\leq C$ almost surely.
\item \label{lambdaminEWW}
$\lambda_{\min}(E[WW^\top])\geq c$, where $\lambda_{\min}(\cdot)$ denotes the smallest eigenvalue of a matrix.
\end{enumerate}
\end{enumerate}

\begin{prop}\label{parametricMTEcond}
Suppose that Assumptions \ref{estimatorpMTE}(i) and \ref{polyMTE} hold. Then, Assumption \ref{estimatorpMTE}(ii) holds with $\phi_n=\psi_n$.
\end{prop}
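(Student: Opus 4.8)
The plan is to reduce the claim to the Euclidean-norm rate $E[\|\hat\vartheta-\vartheta\|]=O(\psi_n^{-1})$ for the OLS coefficient, and then prove that by a standard generated-regressor perturbation argument for linear least squares. First I would linearize in the coefficients: since $\widehat{MTE}(u,x)$ and $\operatorname{MTE}(u,x)$ are the same polynomial in $u$ evaluated at $\hat\vartheta$ and $\vartheta$, and since $\bar{\mathcal{U}}(x)\subset[0,1]$, for every $x$ in the (bounded, by Assumption \ref{polyMTE}(i)) support of $X$ and every $u\in\bar{\mathcal{U}}(x)$ one has $|\widehat{MTE}(u,x)-\operatorname{MTE}(u,x)|\le(\sqrt{2}\,\|x\|+c_J)\|\hat\vartheta-\vartheta\|$, with $c_J=\sum_{j=2}^J j$ absorbing the monomials $u^{j-1}\le1$ and $\sqrt2$ coming from $\|(\hat\beta_1-\beta_1)-(\hat\beta_0-\beta_0)\|\le\sqrt2\|\hat\vartheta-\vartheta\|$. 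Applying $E_n$ and then $E$ and using boundedness of $X$ gives $E[E_n[\sup_{u\in\bar{\mathcal{U}}(X)}|\widehat{MTE}(u,X)-\operatorname{MTE}(u,X)|]]\le C\,E[\|\hat\vartheta-\vartheta\|]$, so it suffices to bound $E[\|\hat\vartheta-\vartheta\|]$.

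Next, writing $\hat{Q}=E_n[\hat{W}\hat{W}^\top]$, $Q=E[WW^\top]$, $\hat{S}=E_n[\hat{W}Y]$, $S=E[WY]$, and using $\vartheta=Q^{-1}S$, I would use the identity $\hat\vartheta-\vartheta=\hat{Q}^{-1}[(\hat{S}-S)-(\hat{Q}-Q)\vartheta]$ and split $\hat{Q}-Q$ and $\hat{S}-S$ each into a generated-regressor piece (the $E_n$-average of $\hat{W}\hat{W}^\top-WW^\top$, resp.\ of $(\hat W-W)Y$) and a pure sampling piece ($E_n$ versus $E$ of $WW^\top$, resp.\ of $WY$). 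On bounded support with $\hat p,p\in[0,1]$ (trimming), $\|W\|$ and $\|\hat W\|$ are bounded and $\|\hat W-W\|\le C_W\,|\hat p(X,Z)-p(X,Z)|$ coordinatewise, using $|\hat p^{\,j}-p^{\,j}|\le j|\hat p-p|$; hence each generated-regressor piece is $\le C\,E_n[|\hat p(X,Z)-p(X,Z)|]$ (with an extra factor $\sup|Y|$ for the $\hat S$ term), which by Cauchy--Schwarz and Assumption \ref{estimatorpMTE}(ii) is $O(\psi_n^{-1})$ in expectation (this is literally the rate at the observed $(X_i,Z_i)$ when $\alpha_0$ is the identity, and otherwise follows from a uniform rate for $\hat p$ such as the sup-norm condition in Example \ref{examplenonpara}); the sampling pieces are $O(n^{-1/2})$ by bounded moments of $W$ and $Y$. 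So $E[\|\hat{Q}-Q\|]+E[\|\hat{S}-S\|]=O(\psi_n^{-1}\vee n^{-1/2})=O(\psi_n^{-1})$, since any valid propensity-score rate has $\psi_n=O(\sqrt{n})$.

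Finally, Assumption \ref{polyMTE}(iii) gives $\|Q^{-1}\|\le 1/c$, so on the event $\{\|\hat{Q}-Q\|\le c/2\}$ a Neumann-series bound gives $\|\hat{Q}^{-1}\|\le2/c$ and $\|\hat\vartheta-\vartheta\|\le(2/c)(\|\hat{S}-S\|+C\|\hat{Q}-Q\|)$, with expectation $O(\psi_n^{-1})$ by the previous step. On the complement, where $\hat{Q}$ may be near-singular, I would invoke Assumption \ref{polyMTE}(ii) — $\|\hat\vartheta\|+\|\vartheta\|\le C$ almost surely for large $n$ — to obtain $E[\|\hat\vartheta-\vartheta\|\,1\{\|\hat{Q}-Q\|>c/2\}]\le C\Pr(\|\hat{Q}-Q\|>c/2)\le(2C/c)E[\|\hat{Q}-Q\|]=O(\psi_n^{-1})$ by Markov's inequality. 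Adding the two contributions yields $E[\|\hat\vartheta-\vartheta\|]=O(\psi_n^{-1})$, which with the linearization step proves Assumption \ref{estimatorpMTE}(iii) with $\phi_n=\psi_n$. The main obstacle I anticipate is exactly this uniform control of $\hat{Q}^{-1}$: a near-singular design on a low-probability event could otherwise inflate $E[\|\hat\vartheta-\vartheta\|]$, and it is precisely neutralized by the almost-sure coefficient bound in Assumption \ref{polyMTE}(ii); the remainder is routine generated-regressor bookkeeping, where the only thing to watch is that $|\hat p-p|$ enters linearly (via trimming and $|\hat p^{\,j}-p^{\,j}|\le j|\hat p-p|$), so that the $\psi_n^{-1}$ rate is preserved.
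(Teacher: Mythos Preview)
Your proposal is correct and shares the paper's overall strategy: reduce Assumption~\ref{estimatorpMTE}(iii) to $E[\|\hat\vartheta-\vartheta\|]=O(\psi_n^{-1})$ via the polynomial structure and bounded $X$, then control the OLS coefficient by a generated-regressor perturbation argument. The decomposition, however, differs. You write $\hat\vartheta-\vartheta=\hat Q^{-1}[(\hat S-S)-(\hat Q-Q)\vartheta]$, which forces you to control the random $\hat Q^{-1}$; you handle this by event-splitting (Neumann series on $\{\|\hat Q-Q\|\le c/2\}$, the almost-sure coefficient bound plus Markov on the complement). The paper instead premultiplies by the deterministic $Q=E[WW^\top]$ to get
\[
Q(\hat\vartheta-\vartheta)=E_n[(\hat W-W)Y]-E_n[\hat W\hat W^\top-WW^\top]\hat\vartheta+(E_n-E)[WY]-(E_n-E)[WW^\top]\hat\vartheta,
\]
so only the fixed bound $\|Q^{-1}\|\le 1/c$ is needed, and Assumption~\ref{polyMTE}(ii) enters solely through the almost-sure bound on $\|\hat\vartheta\|$ multiplying the perturbation terms---no event-splitting at all. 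The paper's identity is a bit slicker for exactly the reason you anticipated as the ``main obstacle'': by putting $\hat\vartheta$ rather than $\hat Q^{-1}$ on the right, the near-singular-design issue never arises. Your route is equally valid and textbook; both proofs also share the (implicit) need for the $\psi_n^{-1}$ rate of $\hat p$ at the observed $(X,Z)$, which you correctly flag.
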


\begin{proof}
One has
\begin{align*}
    &E\Big[\sup_{u\in[0,1]}|\hat{\mu}_Y(X,u)-\mu_Y(X,u)|\Big]\\
    &=E\Big[\sup_{u\in[0,1]}\Big|X^\top[\hat{\beta}_0+(\hat{\beta}_1-\hat{\beta}_0)u-\beta_0-(\beta_1-\beta_0)u]+\sum_{j=2}^J (\hat{\eta}_j-\eta_j) u^j\Big|\Big]\\
    &\leq \sum_{d\in\{0,1\}}E[|X^\top(\hat{\beta}_d-\beta_d)|]+\sum_{j=2}^J E[|\hat{\eta}_j-\eta_j|],
\end{align*}
where the inequality follows from the triangle inequality. Given Assumption \ref{polyMTE}(i), it suffices to show
\begin{equation*}
    E[\|\hat{\vartheta}-\vartheta\|]=O(\psi_n^{-1}).
\end{equation*}
One can write
\begin{align*}
    E[WW^\top](\hat{\vartheta}-\vartheta)&=E_n[(\hat{W}-W)Y]-E_n[\hat{W}\hat{W}^\top-WW^\top]\hat{\vartheta}\\
    &\quad\ +(E_n-E)[WY]-(E_n-E)[WW^\top]\hat{\vartheta}.
\end{align*}
Let $\|\cdot\|$ denote the matrix spectral norm when applied to matrices. There exist positive constants $C_1,C_2<\infty$ such that
\begin{align*}
    \|(\hat{W}-W)Y\|&\leq\|\hat{W}-W\||Y|\leq|Y|(C_1+C_2\|X\|)|\hat{p}(X,Z)-p(X,Z)|,\\
    \|\hat{W}\hat{W}^\top-WW^\top\|&\leq\|\hat{W}-W\|\|\hat{W}+W\|\leq(C_1+C_2\|X\|^2)|\hat{p}(X,Z)-p(X,Z)|.
\end{align*}
Then, by Assumptions \ref{estimatorpMTE}(ii) and \ref{polyMTE}(i),
\begin{align}
    E[E_n[\|(\hat{W}-W)Y\|]]&=O(\psi_n^{-1}),\label{vartheta1}\\
    E[E_n[\|\hat{W}\hat{W}^\top-WW^\top\|]]&=O(\psi_n^{-1}).\label{vartheta2}
\end{align}
Also, by Assumption \ref{polyMTE}(i),
\begin{align}
    E[\|(E_n-E)[WY]\|]&=O(n^{-1/2}),\label{vartheta3}\\
    E[\|(E_n-E)[WW^\top]\|]&=O(n^{-1/2}).\label{vartheta4}
\end{align}
By the triangle inequality,
\begin{align*}
    E[\|\hat{\vartheta}-\vartheta\|]&\leq \lambda_{\min}(E[WW^\top])^{-1}E[E_n[\|(\hat{W}-W)Y\|]]\\
    &\quad\ +\lambda_{\min}(E[WW^\top])^{-1}E[E_n[\|\hat{W}\hat{W}^\top-WW^\top\|]\|\hat{\vartheta}\|]\\
    &\quad\ +\lambda_{\min}(E[WW^\top])^{-1}E[\|(E_n-E)[WY]\|]\\
    &\quad\ +\lambda_{\min}(E[WW^\top])^{-1}E[\|(E_n-E)[WW^\top]\|\|\hat{\vartheta}\|],
\end{align*}
where the right-hand side is $O(\psi_n^{-1})$ by Assumptions \ref{polyMTE}(ii)--(iii) and (\ref{vartheta1})--(\ref{vartheta4}).
\end{proof}

\section{Doubly Robust Approach}
\renewcommand{\theequation}{F.\arabic{equation}}
\renewcommand{\thelemma}{F.\arabic{lemma}}
\renewcommand{\thetheorem}{F.\arabic{theorem}}
\setcounter{equation}{0}
\setcounter{theorem}{0}
\setcounter{lemma}{0}
\label{doublyrobust}

%The idea of the doubly robust approach of \citet{athey2021policy} is to use an alternative objective for policy learning that consists of doubly robust scores. 
A natural candidate for the doubly robust score arises from studying the influence function of the social welfare criterion. As discussed in Sections \ref{secidentification} and \ref{binaryencouragement}, under Assumptions \ref{MTErestrictions} and \ref{identifyW}, one has
\begin{equation*}
    W(\pi)=E[Y(D(\alpha_0(X,Z)))]+E[\pi(X,Z)\cdot(\varphi(X,p(X,\alpha_1(X,Z));p)-\varphi(X,p(Z,\alpha_0(X,Z));p))],
\end{equation*}
where $\varphi(x,u;p)=E[Y|X=x,p(X,Z)=u]$. Here I use a different notation for $E[Y|X=x,p(X,Z)=u]$ than in the main text to emphasize that $p$ is unknown and needs to be estimated. Fix $\pi\in\Pi$ and define
\begin{equation*}
    m(x,z;p,\varphi)=\pi(x,z)(\varphi(x,p(x,\alpha_1(x,z));p)-\varphi(x,p(x,\alpha_0(x,z));p))
\end{equation*}
so that $W(\pi)=E[Y(D(\alpha_0(X,Z)))]+E[m(X,Z;p,\varphi)]$. Lemma \ref{influence} calculates the influence function of $E[m(X,Z;p,\varphi)]$ following \citet{ichimura2022influence}. A proof is provided in \ref{proofauxiliary}. It is worth noting that the estimation error of $p$ has two contributions to the influence function of $E[m(X,Z;p,\varphi)]$ through $\varphi$. First, the estimate of $p$ serves as a generated regressor that changes the conditional expectation estimator. Second, the estimate of $p$ enters as the argument at which the conditional expectation estimator is evaluated. I adopt the approach from \citet{hahn2013asymptotic} to show that the two contributions cancel each other.

\begin{lemma}\label{influence}
Let $F_\tau=(1-\tau)F_0+\tau H,0<\tau<1$ denote a convex combination of the true CDF $F_0$ with another CDF $H$. Let $p_\tau=p(F_\tau)$ and $\varphi_\tau=\varphi(F_\tau)$. Then,
\begin{equation*}
    \frac{\partial}{\partial\tau}E[m(X,Z;p_\tau,\varphi_\tau)]=E_H[\pi(X,Z)g(X,Z)(Y-\varphi_0(X,p_0(X,Z);p_0))],
\end{equation*}
where
\begin{equation*}
g(x,z)=\frac{f_{X,\alpha_1(X,Z)}(x,z)-f_{X,\alpha_0(X,Z)}(x,z)}{f_{X,Z}(x,z)}.
\end{equation*}
\end{lemma}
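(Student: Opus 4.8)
The plan is to compute the pathwise (Gateaux) derivative of the nuisance-dependent functional $\tau\mapsto E_{F_0}[m(X,Z;p_\tau,\varphi_\tau)]$ — the outer expectation is held fixed at the truth $F_0$, so only the first-stage objects $p_\tau=p(F_\tau)$ and $\varphi_\tau=\varphi(F_\tau)$ carry $\tau$ — in the spirit of the correction-term calculus of \citet{ichimura2022influence}. After justifying that $\partial_\tau$ and $E_{F_0}$ may be interchanged (dominated convergence, using boundedness of $Y$, of $\operatorname{MTE}$, and of the relevant densities), the task reduces to evaluating $E_{F_0}[\frac{\partial}{\partial\tau}m(X,Z;p_\tau,\varphi_\tau)]$ at $\tau=0$. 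For each $d\in\{0,1\}$, the map $\tau\mapsto\varphi_\tau(x,p_\tau(x,\alpha_d(z));p_\tau)=E_{F_\tau}[Y\mid X=x,\,p_\tau(X,Z)=p_\tau(x,\alpha_d(z))]$ depends on $\tau$ through three channels: (a) $p_\tau$ as the conditioning index that fixes the level set on which $Y$ is averaged; (b) $p_\tau$ as the evaluation argument $p_\tau(x,\alpha_d(z))$ that selects which level set; and (c) the conditional-mean operator itself, which moves because the data distribution has changed. I would write the chain rule along these three channels.

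Channels (a) and (b) form the generated-regressor configuration of \citet{hahn2013asymptotic}, and I would argue that their contributions cancel. By the local-IV formula of Example \ref{examplenonpara}, $\partial_u\varphi_0(x,u;p_0)=\operatorname{MTE}(u,x)$, so channel (b) contributes $\operatorname{MTE}(p_0(x,\alpha_d(z)),x)\cdot \frac{\partial}{\partial\tau}p_\tau(x,\alpha_d(z))|_{\tau=0}$; it then suffices to show channel (a) — the derivative of $q\mapsto E_{F_0}[Y\mid X=x,q(X,Z)=u]$ at $q=p_0$, evaluated at $u=p_0(x,\alpha_d(z))$ — equals its negative. The mechanism is that perturbing the index $q$ shifts the level set $\{q(X,Z)=c\}$ and the target value $c=q(x,\alpha_d(z))$ in lockstep, so that to first order the two movements offset; this is transparent when $p_0(x,\cdot)$ is strictly monotone in the (scalar) instrument, in which case $\varphi_\tau(x,p_\tau(x,\alpha_d(z));p_\tau)$ is identically $E_{F_\tau}[Y\mid X=x,Z=\alpha_d(z)]$ and carries no $p$-channel at all, and \citet{hahn2013asymptotic} supplies the general, non-injective version. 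I expect this cancellation — making rigorous a perturbation of the conditioning variable of a conditional expectation, rather than arguing formally with delta functions — to be the main obstacle.

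What survives is channel (c). The pathwise derivative at $F_0$ of $v\mapsto E_F[Y\mid X=x,p_0(X,Z)=v]$ is its Riesz representer, which in direction $H$ evaluates to $E_H\!\left[\dfrac{1\{X=x\}\,1\{p_0(X,Z)=v\}}{f_{X,p_0(X,Z)}(x,v)}\big(Y-\varphi_0(x,v;p_0)\big)\right]$. Plugging in $v=p_0(X,\alpha_d(Z))$, multiplying by $\pi(X,Z)$, taking $E_{F_0}$ over $(X,Z)$, and using Fubini leaves $E_H$ of $Y-\varphi_0(X,p_0(X,Z);p_0)$ against a weight obtained by localizing the $F_0$-law of $(X,p_0(X,\alpha_d(Z)))$ at the observed $(X,p_0(X,Z))$; a change of variables relating the laws of $(X,\alpha_d(Z))$ and $(X,Z)$, and of their images under $x\mapsto p_0(x,\cdot)$, collapses that weight to $f_{X,\alpha_d(Z)}(X,Z)/f_{X,Z}(X,Z)$. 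Subtracting the $d=0$ from the $d=1$ term produces the weight $g(x,z)=(f_{X,\alpha_1(Z)}(x,z)-f_{X,\alpha_0(Z)}(x,z))/f_{X,Z}(x,z)$ and hence the claimed expression. Assumptions \ref{MTErestrictions} and \ref{nonparaWpi} are used to ensure $\varphi_0$, the Riesz representers, and all the densities involved are well defined, and I would impose enough smoothness and boundedness on $p_0$, $\alpha_d$, and the densities for the derivatives and changes of variables to be legitimate.
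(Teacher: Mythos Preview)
Your plan and the paper's proof share the same architecture: both apply a three-way chain rule (varying the conditional-mean operator, the evaluation point, and the generated-regressor index), argue that the latter two contributions cancel in the \citet{hahn2013asymptotic} manner, and identify the surviving term with the claimed expression. The operational difference is the placement of the change of variables. The paper opens with the identity
\[
E[m(X,Z;p,\varphi)]=E[\pi(X,Z)\,g(X,Z)\,\varphi(X,p(X,Z);p)],
\]
which shifts the entire computation from the counterfactual indices $p(X,\alpha_d(Z))$ to the observed index $p(X,Z)$. After this rewriting, the cancellation of your channels (a) and (b) becomes a one-line consequence of differentiating the identity $E[\pi(X,Z)g(X,Z)(Y-\varphi_0(X,p_\tau(X,Z);p_\tau))]\equiv 0$ in $\tau$, and the surviving channel (c) already carries the weight $g$ and is computed by the standard add-and-subtract trick $\partial_\tau E[\cdot\,\varphi_\tau]=\partial_\tau E_\tau[\cdot\,\varphi_\tau]-\partial_\tau E_\tau[\cdot\,\varphi_0]=\partial_\tau E_\tau[\cdot\,(Y-\varphi_0)]$, with no Riesz-representer or density bookkeeping. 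Your route, by postponing the change of variables to the end, forces you to argue the Hahn--Ridder cancellation at the shifted point $p(X,\alpha_d(Z))$ and then carry out the localization/Fubini/change-of-variables step for channel (c) that you correctly flag as the main obstacle; the paper's ordering sidesteps exactly that obstacle.
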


In light of Lemma \ref{influence}, one can construct the doubly robust score in the following steps.

\begin{enumerate}
    \item Divide the data into $K$ evenly-sized folds. For each fold $k=1,\dots,K$, use the other $K-1$ data folds to 
    \begin{enumerate}
        \item estimate $p(x,z)$ and $g(x,z)$; denote the resulting estimates by $\hat{p}^{(-k)}(x,z)$ and $\hat{g}^{(-k)}(x,z)$;
        \item estimate $\varphi(x,u;\hat{p}^{(-k)})$; denote the resulting estimate by $\hat{\varphi}^{(-k)}(x,u;\hat{p}^{(-k)})$.
    \end{enumerate} 
    \item Calculate the doubly robust score as
    \begin{align*}
    \hat{\Gamma}_i&=\hat{\varphi}^{(-k(i))}(X_i,\hat{p}^{(-k(i))}(X_i,\alpha_1(X_i,Z_i));\hat{p}^{(-k(i))})\\
    &\quad\ -\hat{\varphi}^{(-k(i))}(X_i,\hat{p}^{(-k(i))}(X_i,\alpha_0(X_i,Z_i));\hat{p}^{(-k(i))})\\
    &\quad\ +\hat{g}^{(-k(i))}(X_i,Z_i)(Y_i-\hat{\varphi}^{(-k(i))}(X_i,\hat{p}^{(-k(i))}(X_i,Z_i);\hat{p}^{(-k(i))})),
    \end{align*}
where $k(i)\in\{1,\dots,K\}$ denotes the fold containing the $i$th observation. 
\end{enumerate}

Define the \textit{doubly-robust encouragement rule} as
\begin{equation}
    \hat{\pi}_{\mathrm{DR}}\in\argmax_{\pi\in\Pi}\hat{W}_n^{\mathrm{DR}}(\pi),\quad \hat{W}_n^{\mathrm{DR}}(\pi)=\frac{1}{n}\sum_{i=1}^n \pi(X_i,Z_i)\hat{\Gamma}_i.
    \label{DRobjective}
\end{equation}
To analyze the regret of $\hat{\pi}_{\mathrm{DR}}$, I impose the following assumptions:

\begin{enumerate}[label=\textbf{Assumption F.\arabic*},ref=F.\arabic*,itemindent=5\parindent,leftmargin=0pt]
\item \label{DRassumption} 
(Doubly-Robust Encouragement Rule)
\begin{enumerate}[label=(\roman*)]
\item \label{boundedgvarphi} 
$\sup_{x,z}|g(x,z)|<\infty$ and $\sup_{x,u}|\varphi(x,u;p)|<\infty$. 
\item \label{subGaussian} 
$\varepsilon=Y-\varphi(X,p(X,Z);p)$ is uniformly sub-Gaussian conditionally on $(X,Z)$ and has second moments uniformly bounded from below.
\item \label{boundedhatgvarphi} 
$\sup_{x,z}|\hat{g}^{(-k)}(x,z)|<\infty$ and  $\sup_{x,u}|\hat{\varphi}^{(-k)}(x,u;\hat{p}^{(-k)})|<\infty$ almost surely.
\item \label{L2error}
There exist $0<\zeta_g,\zeta_\varphi<1$ with $\zeta_g+\zeta_\varphi\geq1$ and $a(n)\to0$ such that
\begin{align*}
    E[(\hat{g}^{(-k(i))}(X_i,Z_i)-g(X_i,Z_i))^2]&\leq\frac{a(n)}{n^{\zeta_g}},\\
    E[(\hat{\varphi}^{(-k(i))}(X_i,\hat{p}^{(-k(i))}(X_i,Z_i);\hat{p}^{(-k(i))})-\varphi(X_i,p(X_i,Z_i);p))^2]&\leq\frac{a(n)}{n^{\zeta_\varphi}},
\end{align*}
and for $d\in\{0,1\}$,
\begin{equation*}
    E[(\hat{\varphi}^{(-k(i))}(X_i,\hat{p}^{(-k(i))}(X_i,\alpha_d(X_i,Z_i));\hat{p}^{(-k(i))})-\varphi(X_i,p(X_i,\alpha_d(X_i,Z_i));p))^2]\leq\frac{a(n)}{n^{\zeta_\varphi}},
\end{equation*}
\end{enumerate}
\end{enumerate}

Theorem \ref{upperDR} shows that the average regret of $\hat{\pi}_{\mathrm{DR}}$ decays no slower than $n^{-1/2}$.

\begin{theorem}\label{upperDR}
Suppose that Assumptions \ref{MTErestrictions}, \ref{identifyW}, \ref{boundedvc}(ii), and \ref{DRassumption} hold. Then,
\begin{equation*}
    E[R(\hat{\pi}_{\mathrm{DR}})]=O(n^{-1/2}).
\end{equation*}
\end{theorem}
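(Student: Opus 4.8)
The plan is to follow the standard empirical-welfare-maximization regret argument, with the feasible objective replaced by the cross-fitted doubly robust objective, so that nuisance estimation errors enter only through a Neyman-orthogonal (product-of-errors) remainder. First I would introduce the infeasible oracle score built from the true nuisances,
\[
\Gamma = \varphi(X,p(X,\alpha_1(Z));p) - \varphi(X,p(X,\alpha_0(Z));p) + g(X,Z)\bigl(Y - \varphi(X,p(X,Z);p)\bigr),
\]
and verify that $E[\pi(X,Z)\Gamma] = \bar W(\pi)$ for every $\pi\in\Pi$. This uses the identity $E[Y\mid X,Z] = \varphi(X,p(X,Z);p)$, which holds under Assumption \ref{MTErestrictions} because $E[Y\mid X=x,Z=z]$ depends on $z$ only through $p(x,z)$, together with Lemma \ref{influence} and the representation in Example \ref{examplenonpara}; consequently $\pi^\ast\in\argmax_{\pi\in\Pi}\bar W(\pi) = \argmax_{\pi\in\Pi}E[\pi(X,Z)\Gamma]$, so the oracle DR criterion has the same maximizer as welfare. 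The usual three-term decomposition (adding and subtracting $\hat W_n^{\mathrm{DR}}(\pi^\ast)$ and $\hat W_n^{\mathrm{DR}}(\hat\pi_{\mathrm{DR}})$ and using optimality of $\hat\pi_{\mathrm{DR}}$) then gives
\[
R(\hat\pi_{\mathrm{DR}}) \le 2\sup_{\pi\in\Pi}\bigl|\hat W_n^{\mathrm{DR}}(\pi) - \bar W(\pi)\bigr|,
\]
and I split the right-hand side through the oracle empirical criterion $\tilde W_n(\pi) = \tfrac1n\sum_{i=1}^n \pi(X_i,Z_i)\Gamma_i$ into an oracle empirical-process term $\sup_\pi|\tilde W_n(\pi) - \bar W(\pi)|$ and a nuisance-estimation remainder $\sup_\pi|\hat W_n^{\mathrm{DR}}(\pi) - \tilde W_n(\pi)|$.

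For the oracle term, $\Gamma$ is a bounded part (the $\varphi$'s, by Assumption \ref{DRassumption}(i)) plus $g\varepsilon$, which is sub-Gaussian conditionally on $(X,Z)$ by Assumption \ref{DRassumption}(i)--(ii); since $\Pi$ is a VC class by Assumption \ref{boundedvc}(ii), a maximal inequality for VC-subgraph classes with a sub-Gaussian envelope (as in \citet{athey2021policy}) yields $E[\sup_{\pi\in\Pi}|\tilde W_n(\pi)-\bar W(\pi)|] = O(n^{-1/2})$. For the remainder I work fold by fold: conditionally on the out-of-fold data $\mathcal{D}^{(-k)}$, the summands $\{\pi(X_i,Z_i)(\hat\Gamma_i - \Gamma_i): k(i)=k\}$ are i.i.d., and I decompose each fold's contribution into (a) its conditional mean and (b) a conditional-mean-zero empirical process over $\Pi$. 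In (a), using $E[\varepsilon_i\mid X_i,Z_i]=0$ and the cancellation of the two channels through which $\hat p^{(-k)}$ enters $\hat\varphi^{(-k)}$ — as a generated regressor and as the evaluation argument, as flagged in the discussion preceding Lemma \ref{influence} — the conditional bias reduces to a product of the estimation errors of $\hat g^{(-k)}$ and $\hat\varphi^{(-k)}$; Cauchy--Schwarz and Assumption \ref{DRassumption}(iv) with $\zeta_g+\zeta_\varphi\ge1$ bound its expectation by $O\bigl(a(n)\,n^{-(\zeta_g+\zeta_\varphi)/2}\bigr)=o(n^{-1/2})$, uniformly in $\pi$ since $|\pi|\le1$. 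In (b), conditionally on $\mathcal{D}^{(-k)}$ the envelope $\hat\Gamma_i-\Gamma_i$ has conditional $L^2$-norm $O_p\bigl(\sqrt{a(n)}\,n^{-\min(\zeta_g,\zeta_\varphi)/2}\bigr)$ by Assumption \ref{DRassumption}(iv), so the VC maximal inequality gives a conditional bound of order $\sqrt{a(n)}\,n^{-\min(\zeta_g,\zeta_\varphi)/2}\sqrt{\mathrm{VC}(\Pi)/n}=o(n^{-1/2})$; the uniform boundedness in Assumption \ref{DRassumption}(i),(iii) lets me pass to the unconditional expectation without losing the rate. Summing over the $K$ folds and combining the oracle term with the remainder gives $E[R(\hat\pi_{\mathrm{DR}})] \le 2\bigl(O(n^{-1/2}) + o(n^{-1/2})\bigr) = O(n^{-1/2})$.

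The main obstacle is step (a): showing that the conditional bias of the \emph{feasible} cross-fitted score is a genuine product of nuisance errors. This requires verifying Neyman orthogonality in the presence of a generated regressor — the estimated propensity score appears both inside $\hat\varphi^{(-k)}$ and as the point at which it is evaluated — and arguing carefully, in the spirit of \citet{hahn2013asymptotic}, that the first-order contributions of $\hat p^{(-k)}-p$ through these two channels cancel, leaving only second-order terms that Assumption \ref{DRassumption}(iv) controls. A secondary technical nuisance is carrying the sub-Gaussian (rather than bounded) component $g\varepsilon$ of $\Gamma$ through all the maximal inequalities, which may require a truncation argument combined with the exponential tail to recover the $n^{-1/2}$ rate.
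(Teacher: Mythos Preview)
Your overall architecture---oracle score $\Gamma$, split into an oracle empirical process plus a cross-fitted remainder, fold-by-fold analysis invoking Athey--Wager maximal inequalities---matches the paper's. The substantive difference is how the remainder is organized. Writing $\varphi_i=\varphi(X_i,p(X_i,Z_i);p)$ and $\hat\varphi_i$ for its cross-fitted estimate, the paper decomposes $\hat\Gamma_i-\Gamma_i=D_{1,i}+D_{2,i}-D_{3,i}$ with
\[
D_{1,i}=(\hat g_i-g_i)\varepsilon_i,\qquad D_{2,i}=(\Delta\hat\varphi_i-\Delta\varphi_i)-g_i(\hat\varphi_i-\varphi_i),\qquad D_{3,i}=(\hat g_i-g_i)(\hat\varphi_i-\varphi_i),
\]
and treats $D_1$ and $D_2$ as \emph{separate} mean-zero empirical processes (each with variance $O(a(n)n^{-\zeta})$, so the VC bound delivers $o(n^{-1/2})$), bounding only $D_3$ by Cauchy--Schwarz as the product-of-errors term. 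The mean-zero property the paper uses for $D_2$ is the density-ratio identity $E[\tilde m(X,\alpha_1(Z))-\tilde m(X,\alpha_0(Z))-g(X,Z)\tilde m(X,Z)]=0$, valid for any $\tilde m$ by the very definition of $g$; it is unrelated to the Hahn--Ridder generated-regressor cancellation you invoke, which is a first-order (influence-function) statement that only motivates the form of the score.

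Your bias/fluctuation split runs into trouble at step (a). The conditional mean $E[\pi(X,Z)(\hat\Gamma_i-\Gamma_i)\mid\mathcal{D}^{(-k)}]$ equals $E[\pi D_{2,i}\mid\mathcal{D}^{(-k)}]-E[\pi D_{3,i}\mid\mathcal{D}^{(-k)}]$ (the $D_1$ contribution vanishes by $E[\varepsilon\mid X,Z]=0$). Only the $D_3$ piece is a genuine product of errors; $E[\pi D_{2,i}\mid\mathcal{D}^{(-k)}]$ is generally nonzero once $\pi$ depends on $Z$, because the density-ratio identity yields $E[\pi(X,Z)g(X,Z)\tilde m(X,Z)]=E[\pi(X,\alpha_1(Z))\tilde m(X,\alpha_1(Z))]-E[\pi(X,\alpha_0(Z))\tilde m(X,\alpha_0(Z))]$, which does not cancel against $E[\pi(X,Z)(\tilde m(X,\alpha_1(Z))-\tilde m(X,\alpha_0(Z)))]$. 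That residual is of order $\|\tilde m\|_{L^2}\asymp n^{-\zeta_\varphi/2}$, not $o(n^{-1/2})$, so your ``conditional bias $=$ product of errors'' claim fails. The remedy is the paper's move: isolate $D_2$ as its own empirical process with unconditional mean zero and control it via the VC maximal inequality, rather than folding it into a bias term you then try to bound pointwise in $\pi$.
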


\begin{proof}
    I follow \citet{athey2021policy} to work with
\begin{align*}
    A(\pi)&=E[(2\pi(X,Z)-1)\cdot(\varphi(X,p(X,\alpha_1(X,Z));p)-\varphi(X,p(Z,\alpha_0(X,Z));p))],\\
    \hat{A}_n(\pi)&=\frac{1}{n}\sum_{i=1}^n(2\pi(X_i,Z_i)-1)\hat{\Gamma}_i.
\end{align*}
Note that $\max_{\pi'\in\Pi}A(\pi')-A(\pi)=2R(\pi)$. It is convenient to define an ideal version of the objective in (\ref{DRobjective}) based on the true influence scores:
\begin{align*}
    \tilde{A}_n(\pi)&=\frac{1}{n}\sum_{i=1}^n(2\pi(X_i,Z_i)-1)\Gamma_i,\\
    \Gamma_i&=\varphi(X_i,p(X_i,\alpha_1(X_i,Z_i));p)-\varphi(X_i,p(X_i,\alpha_0(X_i,Z_i));p)\\
    &\quad\ +g(X_i,Z_i)(Y_i-\varphi(X_i,p(X_i,Z_i);p)).
\end{align*}
By writing
\begin{equation*}
    \hat{A}_n(\pi)-A(\pi)=\hat{A}_n(\pi)-\tilde{A}_n(\pi)+\tilde{A}_n(\pi)-A(\pi),
\end{equation*}
I study stochastic fluctuations of $\hat{A}_n(\pi)-A(\pi)$ for $\pi\in\Pi$ in two steps. First, I bound $|\tilde{A}_n(\pi)-A(\pi)|$ over $\lambda$-slices of $\Pi$ defined as
\begin{equation*}
    \Pi^\lambda=\{\pi\in\Pi:R(\pi)\leq\lambda\}.
\end{equation*}
By Assumptions \ref{DRassumption}(i)--(ii), the $\Gamma_i$ are uniformly sub-Gaussian and have variance bounded from below. Hence, Corollary 3 of \citet{athey2021policy} holds with $S_n=E[\Gamma^2]$ and $S_n^\lambda=\sup\{\Var[(2\pi(X,Z)-1)\Gamma]:\pi\in\Pi^\lambda\}$. Next, I bound $|\hat{A}_n(\pi)-\tilde{A}_n(\pi)|$ over $\pi\in\Pi$. To save space, write
\begin{align*}
    \Delta\varphi_i&=\varphi(X_i,p(X_i,\alpha_1(X_i,Z_i));p)-\varphi(X_i,p(X_i,\alpha_0(X_i,Z_i));p),\\
    \Delta\hat{\varphi}_i&=\hat{\varphi}^{(-k(i))}(X_i,\hat{p}^{(-k(i))}(X_i,\alpha_1(X_i,Z_i));\hat{p}^{(-k(i))})-\hat{\varphi}^{(-k(i))}(X_i,\hat{p}^{(-k(i))}(X_i,\alpha_0(X_i,Z_i));\hat{p}^{(-k(i))}).
\end{align*}
For any fixed $\pi$, one can expand $\hat{A}_n(\pi)-\tilde{A}_n(\pi)$ as
\begin{equation*}
    \hat{A}_n(\pi)-\tilde{A}_n(\pi)=D_1(\pi)+D_2(\pi)-D_3(\pi),
\end{equation*}
where
\begin{align*}
    D_1(\pi)&=\frac{1}{n}\sum_{i=1}^n(2\pi(X_i,Z_i)-1)(Y_i-\varphi(X_i,p(X_i,Z_i);p))(\hat{g}^{(-k(i))}(X_i,Z_i)-g(X_i,Z_i)),\\
    D_2(\pi)&=\frac{1}{n}\sum_{i=1}^n(2\pi(X_i,Z_i)-1)[\Delta\hat{\varphi}_i-\Delta\varphi_i\\
    &\quad\ -g(X_i,Z_i)(\hat{\varphi}^{(-k(i))}(X_i,\hat{p}^{(-k(i))}(X_i,Z_i);\hat{p}^{(-k(i))})-\varphi(X_i,p(X_i,Z_i);p))],\\
    D_3(\pi)&=\frac{1}{n}\sum_{i=1}^n(2\pi(X_i,Z_i)-1)(\hat{\varphi}^{(-k(i))}(X_i,\hat{p}^{(-k(i))}(X_i,Z_i);\hat{p}^{(-k(i))})-\varphi(X_i,p(X_i,Z_i);p))\\
    &\quad\ \times(\hat{g}^{(-k(i))}(X_i,Z_i)-g(X_i,Z_i)).
\end{align*}
I bound these three summands separately. I start with $D_1(\pi)$. It is helpful to separate out the contributions of the $K$ different folds:
\begin{equation*}
    D_1^{(k)}(\pi)=\frac{1}{n_k}\sum_{i:k(i)=k}(2\pi(X_i,Z_i)-1)(Y_i-\varphi(X_i,p(X_i,Z_i);p))(\hat{g}^{(-k(i))}(X_i,Z_i)-g(X_i,Z_i))
\end{equation*}
so that $D_1(\pi)=\sum_{k=1}^K \frac{n_k}{n} D_1^{(k)}(\pi)$, where $n_k=|\{i:k(i)=k\}|$ denotes the number of observations in the $k$th fold. Note that $E[Y_i-\varphi(X_i,p(X_i,Z_i);p)|X_i,Z_i,\hat{g}^{(-k(i))}(\cdot)]=0$. Hence, conditional on $\hat{g}^{(-k)}(\cdot)$ fit on the other $K-1$ folds, $D_1^{(k)}(\pi)$ has zero mean and asymptotic variance
\begin{equation*}
    V_1(k)=E[(\hat{g}^{(-k(i))}(X_i,Z_i)-g(X_i,Z_i))^2\operatorname{Var}[Y_i-\varphi(X_i,p(X_i,Z_i);p)|X_i,Z_i]|\hat{g}^{(-k)}(\cdot)].
\end{equation*}
By Assumptions \ref{DRassumption}(i)--(iii), the individual summands in $D_1^{(k)}(\pi)$ are uniformly sub-Gaussian almost surely. Then, one can apply Corollary 3 of \citet{athey2021policy} to establish that
\begin{equation}
    E\Big[\sup_{\pi\in\Pi}\big|D_1^{(k)}(\pi)\big||\hat{g}^{(-k)}(\cdot)\Big]=O\Big(\sqrt{\operatorname{VC}(\Pi)\frac{V_1(k)}{n_k}}\Big).
    \label{boundD1k}
\end{equation}
Since for a finite number of evenly-sized folds, $n_k/n\to K^{-1}$, one can use Assumption \ref{DRassumption}(iv) to check that
\begin{equation*}
    E[V_1(k)]=O\Big(\frac{a((1-K^{-1})n)}{n^{\zeta_g}}\Big).
\end{equation*}
Then, applying (\ref{boundD1k}) to all $K$ folds and using Jensen's inequality, one can find that
\begin{equation}
E\Big[\sup_{\pi\in\Pi}\big|D_1(\pi)\big|\Big]=O\Big(\sqrt{\operatorname{VC}(\Pi)\frac{a((1-K^{-1})n)}{n^{1+\zeta_g}}}\Big).
\label{boundD1}
\end{equation}
I proceed to bound $D_2(\pi)$. Since for any integrable function $\tilde{m}:\mathcal{X}\times\mathbb{R}\to\mathbb{R}$, $E[\tilde{m}(X,\alpha_1(X,Z))-\tilde{m}(X,\alpha_0(X,Z))-g(X,Z)\tilde{m}(X,Z)]=0$,
\begin{align*}
    E[\Delta\hat{\varphi}_i-\Delta\varphi_i-g(X_i,Z_i)(&\hat{\varphi}^{(-k(i))}(X_i,\hat{p}^{(-k(i))}(X_i,Z_i);\hat{p}^{(-k(i))})\\
    &-\varphi(X_i,p(X_i,Z_i);p))|\hat{p}^{(-k(i))}(\cdot),\hat{\varphi}^{(-k(i))}(\cdot;\hat{p}^{(-k(i))}(\cdot))]=0.
\end{align*}
Thus, by a similar argument as before, one can show that
\begin{equation}
    E\Big[\sup_{\pi\in\Pi}\big|D_2(\pi)\big|\Big]=O\Big(\sqrt{\operatorname{VC}(\Pi)\frac{a((1-K^{-1})n)}{n^{1+\zeta_\varphi}}}\Big).
    \label{boundD2}
\end{equation}
It remains to bound $D_3(\pi)$. By the Cauchy-Schwarz inequality,
\begin{align*}
    |D_3(\pi)|&\leq\sqrt{\frac{1}{n}\sum_{i=1}^n(\hat{\varphi}^{(-k(i))}(X_i,\hat{p}^{(-k(i))}(X_i,Z_i);\hat{p}^{(-k(i))})-\varphi(X_i,p(X_i,Z_i);p))^2}\\
    &\quad\ \times\sqrt{\frac{1}{n}\sum_{i=1}^n(\hat{g}^{(-k(i))}(X_i,Z_i)-g(X_i,Z_i))^2}.
\end{align*}
Since this bound does not depend on $\pi$, one can apply the Cauchy-Schwarz inequality again to obtain
\begin{align}
    E\Big[\sup_{\pi\in\Pi}|D_3(\pi)|\Big]&\leq\sqrt{E[(\hat{\varphi}^{(-k(i))}(X_i,\hat{p}^{(-k(i))}(X_i,Z_i);\hat{p}^{(-k(i))})-\varphi(X_i,p(X_i,Z_i);p))^2]}\nonumber\\
    &\quad\ \times\sqrt{E[(\hat{g}^{(-k(i))}(X_i,Z_i)-g(X_i,Z_i))^2]}\nonumber\\
    &=O\Big(\frac{a((1-K^{-1})n)}{\sqrt{n}}\Big).
    \label{boundD3}
\end{align}
Combining (\ref{boundD1})--(\ref{boundD3}), one has
\begin{equation*}
    \sqrt{n}E[\sup\{|\hat{A}_n(\pi)-\tilde{A}_n(\pi)|:\pi\in\Pi\}]=O\Big(1+\sqrt{\operatorname{VC}(\Pi)\frac{a((1-K^{-1})n)}{n^{\min\{\zeta_g,\zeta_\varphi\}}}}\Big),
\end{equation*}
which can be viewed as a counterpart of Lemma 4 of \citet{athey2021policy}. The desired result follows from the proof of Theorem 1 of \citet{athey2021policy}.
\end{proof}

\section{Proof of Auxiliary Lemmas}
\label{proofauxiliary}
\renewcommand{\theequation}{G.\arabic{equation}}
\setcounter{equation}{0}

\begin{proof}[Proof of Lemma \ref{BBnI}]
Setting $\Xi_i=n^{-1}(\tilde{b}^k(\tilde{X}_i)\tilde{b}^k(\tilde{X}_i)^\top-I_k)$ and noting that 
\begin{align*}
    \max_{1\leq i\leq n}\|\Xi_i\|&\leq n^{-1}(\zeta_k^2\lambda_k^2+1),\\
    \max\Big\{\Big\|\sum_{i=1}^n E[\Xi_i\Xi_i^\top]\Big\|,\Big\|\sum_{i=1}^n E[\Xi_i^\top\Xi_i]\Big\|\Big\}&\leq n^{-1}(\zeta_k^2\lambda_k^2+1),
\end{align*}
Then, the desired result follows from Theorem 4.1 of \citet{chen2015optimal}.
\end{proof}

\begin{proof}[Proof of Lemma \ref{varianceterm}]
Fix $\tilde{x}\in\tilde{\mathcal{X}}$ and $\delta\in(0,1]$. By rotational invariance, one has
\begin{equation*}
    \hat{p}(\tilde{x})-\tilde{p}(\tilde{x})=\tilde{b}^k(\tilde{x})^\top(\tilde{B}^\top\tilde{B}/n)^-\tilde{B}^\top e/n,
\end{equation*}
where $e=(\epsilon_1,\dots,\epsilon_n)^\top$. Define $G_{i,n}(\tilde{x})=\tilde{b}^k(\tilde{x})^\top(\tilde{B}^\top\tilde{B}/n)^-\tilde{b}^k(\tilde{X}_i)1_{\mathcal{A}_n}$. Then $(\hat{p}(\tilde{x})-\tilde{p}(\tilde{x}))1_{\mathcal{A}_n}=\frac{1}{n}\sum_{i=1}^n G_{i,n}(\tilde{x})\epsilon_i$. Note that $\|(\tilde{B}^\top\tilde{B}/n)^{-1}\|\leq2$ on $\mathcal{A}_n$. Then, for $q\geq 2$,
\begin{align*}
    &\quad \sum_{i=1}^n E[(n^{-1}G_{i,n}(\tilde{x}))^2|\epsilon_i|^q|\tilde{X}_1^n]\\
    &=\frac{1}{n^2}\sum_{i=1}^n E[|\epsilon_i|^q|\tilde{X}_i]\tilde{b}^k(\tilde{x})^\top(\tilde{B}^\top\tilde{B}/n)^-\tilde{b}^k(\tilde{X}_i)\tilde{b}^k(\tilde{X}_i)^\top1_{\mathcal{A}_n}(\tilde{B}^\top\tilde{B}/n)^-\tilde{b}^k(\tilde{x})\\
    &\leq\sup_{\tilde{x}}E[|\epsilon|^q|\tilde{X}=\tilde{x}]\frac{1}{n^2}\sum_{i=1}^n\tilde{b}^k(\tilde{x})^\top(\tilde{B}^\top\tilde{B}/n)^-\tilde{b}^k(\tilde{X}_i)\tilde{b}^k(\tilde{X}_i)^\top1_{\mathcal{A}_n}(\tilde{B}^\top\tilde{B}/n)^-\tilde{b}^k(\tilde{x})\\
    &\leq\frac{q!}{2}\nu^2\tilde{c}^{q-2}\frac{1}{n}\tilde{b}^k(\tilde{x})^\top(\tilde{B}^\top\tilde{B}/n)^-(\tilde{B}^\top\tilde{B}/n)1_{\mathcal{A}_n}(\tilde{B}^\top\tilde{B}/n)^-\tilde{b}^k(\tilde{x})\\
    &\leq\frac{q!}{2}\nu^2\tilde{c}^{q-2}\frac{2\zeta_k^2\lambda_k^2}{n}.
\end{align*}
Moreover, for $q\geq3$,
\begin{align*}
    &\quad \sum_{i=1}^n E[|n^{-1}G_{i,n}(\tilde{x})\epsilon_i|^q|\tilde{X}_1^n]\\
    &\leq n^{-(q-2)}\sup_{\tilde{x}}|\tilde{b}^k(\tilde{x})^\top(\tilde{B}^\top\tilde{B}/n)^-1_{\mathcal{A}_n}\tilde{b}^k(\tilde{x})|^{q-2}\sum_{i=1}^n E[(n^{-1}G_{i,n}(\tilde{x}))^2|\epsilon_i|^q|\tilde{X}_1^n]\\
    &\leq\Big(\frac{2\zeta_k^2\lambda_k^2}{n}\Big)^{q-2}\frac{q!}{2}\nu^2\tilde{c}^{q-2}\frac{2\zeta_k^2\lambda_k^2}{n}\\
    &=\frac{q!}{2}\nu^2\frac{2\zeta_k^2\lambda_k^2}{n}\Big(\tilde{c}\frac{2\zeta_k^2\lambda_k^2}{n}\Big)^{q-2}.
\end{align*}
Hence, the conditions of Theorem 2.10 of \citet{boucheron2013concentration} hold with $v=\nu^2\frac{2\zeta_k^2\lambda_k^2}{n}$ and $c=\tilde{c}\frac{2\zeta_k^2\lambda_k^2}{n}$. Since $v$ and $c$ do not depend on $\tilde{x}$, by their Corollary 2.11,
\begin{align*}
    \Pr(1_{\mathcal{A}_n}\|\hat{p}-\tilde{p}\|_\infty\geq\delta)
    &=\Pr\Big(\sup_{\tilde{x}}\Big|\frac{1}{n}\sum_{i=1}^n G_{i,n}(\tilde{x})\epsilon_i\Big|\geq \delta\Big)\\
    &\leq\exp\Big(-\frac{n\delta^2}{4\zeta_k^2\lambda_k^2(\nu^2+\tilde{c}\delta)}\Big)\\
    &\leq\exp\Big(-\frac{C_6n\delta^2}{\zeta_k^2\lambda_k^2}\Big)
\end{align*}
for some finite positive constant $C_6$.
\end{proof}

\begin{proof}[Proof of Lemma \ref{biasterm}]
First, for any $h\in B_k$,
\begin{align*}
    \|\tilde{p}-p\|_\infty&=\|\tilde{p}-h+h-p\|_\infty\\
    &=\|P_{k,n}(p-h)+h-p\|_\infty\\
    &\leq\|P_{k,n}(p-h)\|_\infty+\|p-h\|_\infty\\
    &\leq(1+\|P_{k,n}\|_\infty)\|p-h\|_\infty.
\end{align*}
Taking the infimum over $h\in B_k$ yields the first result. Second, take any $h\in L^\infty(\tilde{X})$ with $\|h\|_\infty\neq0$. By the Cauchy-Schwarz inequality, one has
\begin{align*}
    |1_{\mathcal{A}_n}P_{k,n}h(\tilde{x})|&\leq\|\tilde{b}^k(\tilde{x})\|\|1_{\mathcal{A}_n}(\tilde{B}^\top\tilde{B}/n)^-\tilde{B}^\top\mathbf{H}/n\|\\
    &\leq\zeta_k\lambda_k\|1_{\mathcal{A}_n}(\tilde{B}^\top\tilde{B}/n)^-\tilde{B}^\top\mathbf{H}/n\|
\end{align*}
uniformly over $\tilde{x}$. On $\mathcal{A}_n$, $\|\tilde{B}^\top\tilde{B}/n-I_k\|\leq\frac{1}{2}$ and thus $\lambda_{\mathrm{min}}(\tilde{B}^\top\tilde{B}/n)\geq\frac{1}{2}$. Then,
\begin{align*}
    \|1_{\mathcal{A}_n}(\tilde{B}^\top\tilde{B}/n)^-\tilde{B}^\top\mathbf{H}/n\|^2&=1_{\mathcal{A}_n}(\mathbf{H}^\top\tilde{B}/n(\tilde{B}^\top\tilde{B}/n)^{-1}(\tilde{B}^\top\tilde{B}/n)^{-1}\tilde{B}^\top\mathbf{H}/n\\
    &\leq2\mathbf{H}^\top\tilde{B}/n(\tilde{B}^\top\tilde{B}/n)^{-1}\tilde{B}^\top\mathbf{H}/n\\
    &\leq2\|h\|^2\\
    &\leq2\|h\|_\infty^2,
\end{align*}
where the second-last inequality follows from $1_{\mathcal{A}_n}\tilde{B}(\tilde{B}^\top\tilde{B})^{-1}\tilde{B}^\top$ being idempotent. It follows that $\|1_{\mathcal{A}_n}P_{k,n}h\|_\infty/\|h\|_\infty\leq \sqrt{2}\zeta_k\lambda_k$ uniformly in $h$. Taking the sup over $h$ yields the second result.
\end{proof}

\begin{proof}[Proof of Lemma \ref{influence}]
For any $p$ and $\varphi$, one has
\begin{equation*}
    E[m(X,Z;p,\varphi)]=E[\pi(X,Z)g(X,Z)\varphi(X,p(X,Z);p)].
\end{equation*}
The chain rule gives
\begin{align}
    \frac{\partial}{\partial\tau}E[m(X,Z;p_\tau,\varphi_\tau)]&=\frac{\partial}{\partial\tau}E[\pi(X,Z)g(X,Z)\varphi_\tau(X,p_0(X,Z);p_0)]\nonumber\\
    &\quad\ +\frac{\partial}{\partial\tau}E[\pi(X,Z)g(X,Z)\varphi_0(X,p_\tau(X,Z);p_0)]\nonumber\\
    &\quad\ +\frac{\partial}{\partial\tau}E[\pi(X,Z)g(X,Z)\varphi_0(X,p_0(X,Z);p_\tau)].
    \label{chainrule}
\end{align}
For the first term on the right-hand side of (\ref{chainrule}), note that
\begin{align*}
    \frac{\partial}{\partial\tau}E[\pi(X,Z)g(X,Z)\varphi_\tau(X,p_0(X,Z);p_0)]&=\frac{\partial}{\partial\tau}E_\tau[\pi(X,Z)g(X,Z)\varphi_\tau(X,p_0(X,Z);p_0)]\\
    &\quad\ -\frac{\partial}{\partial\tau}E_\tau[\pi(X,Z)g(X,Z)\varphi_0(X,p_0(X,Z);p_0)]\\
    &=\frac{\partial}{\partial\tau}E_\tau[\pi(X,Z)g(X,Z)(Y-\varphi_0(X,p_0(X,Z);p_0))]\\
    &=E_H[\pi(X,Z)g(X,Z)(Y-\varphi_0(X,p_0(X,Z);p_0))].
\end{align*}
The second and third terms on the right-hand side of (\ref{chainrule}) reflect the two contributions of the estimation error of $p$. Note that for any $\tau$,
\begin{equation*}
    E[\pi(X,Z)g(X,Z)(Y-\varphi_0(X,p_\tau(X,Z);p_\tau))]=0.
\end{equation*}
Differentiating with respect to $\tau$ and evaluating the result at $\tau=0$, one can find that
\begin{equation*}
    \frac{\partial}{\partial\tau}E[\pi(X,Z)g(X,Z)\varphi_0(X,p_\tau(X,Z);p_0)]+\frac{\partial}{\partial\tau}E[\pi(X,Z)g(X,Z)\varphi_0(X,p_0(X,Z);p_\tau)]=0.
\end{equation*}
Putting everything together yields the desired result.
\end{proof}

\section{Additional Tables and Figures}
\label{addtab}
\renewcommand{\thetable}{H.\arabic{table}}
\renewcommand{\thefigure}{H.\arabic{figure}}
\setcounter{table}{0}
\setcounter{figure}{0}

\renewcommand{\arraystretch}{0.64}
\begin{table}[H]
    \centering
    \caption{Sample Averages for the Treatment and Control Groups}
    \begin{tabular}{l c c}
    \hline
    & Upper secondary or higher  & Less than upper secondary\\
    & (treatment group) & (control group)\\
    & $N=841$ & $N=1263$ \\
    \hline\\
    Log hourly wages (rupiah) &  8.018 & 7.209\\
    Years of education & 13.128 & 5.585\\
    Distance to school (km) & 1.529 & 1.565\\
    Distance to health post (km) & 0.331 & 0.361\\
    Fees per continuing student & 3.464 & 3.992\\
    \hspace{0.3cm}(1000 rupiah) & & \\
    Age & 35.668 & 36.766\\
    Religion Protestant & 0.037 & 0.008\\
    \hspace{0.3cm} Catholic & 0.023 & 0.007\\
    \hspace{0.3cm} Other & 0.075 & 0.039\\
    \hspace{0.3cm} Muslim & 0.866 & 0.946\\
    Father uneducated &  0.189 & 0.254\\
    \hspace{0.3cm} elementary & 0.325 & 0.251\\
    \hspace{0.3cm} secondary and higher & 0.275& 0.026\\
    \hspace{0.3cm} missing & 0.212 & 0.468\\
    Mother uneducated & 0.182 & 0.203\\
    \hspace{0.3cm} elementary & 0.301 & 0.173\\
    \hspace{0.3cm} secondary and higher & 0.138 & 0.011\\
    \hspace{0.3cm} missing & 0.379 & 0.612\\
    Rural household & 0.483 & 0.644\\
    North Sumatra & 0.034 & 0.045\\
    West Sumatra & 0.023 & 0.023\\
    South Sumatra & 0.069 & 0.033\\
    Lampung & 0.013 & 0.028\\
    Jakarta & 0 & 0\\
    Central Java & 0.102 & 0.216\\
    Yogyakarta & 0.121 & 0.077\\
    East Java & 0.152 & 0.201\\
    Bali & 0.081 & 0.035\\
    West Nussa Tengara & 0.084 & 0.053\\
    South Kalimanthan & 0.040 & 0.033\\
    South Sulawesi & 0.040 & 0.019\\
    \hline
    \end{tabular}
    \label{samplestat}
\end{table}
\end{document}